\documentclass[accepted]{uai2026} 

\usepackage[american]{babel}

\usepackage{natbib} 
\usepackage{mathtools} 
\usepackage{booktabs} 
\usepackage{tikz} 

\usepackage[T1]{fontenc}
\usepackage[utf8]{inputenc}
\usepackage{mathrsfs}
\usepackage{amssymb}
\usepackage{amsthm}
\usepackage{siunitx} 
\usepackage{fancyhdr}
\usepackage[normalem]{ulem}
\usepackage{algorithm} 
\usepackage{algpseudocode} 
\usepackage{cancel} 
\usepackage{listings} 
\usepackage{graphicx}

\newenvironment{thma}[1]{\par\noindent{\bf Theorem #1.\ }\em}{\em}
\newenvironment{lma}[1]{\par\noindent{\bf Lemma #1.\ }\em}{\em}

\newtheorem{example}{Example} 
\newtheorem{theorem}{Theorem}
\newtheorem{lemma}[theorem]{Lemma}

\DeclareMathOperator{\Perp}{\perp\!\!\!\perp}

\DeclareMathOperator{\pa}{\operatorname{pa}}
\DeclareMathOperator{\ch}{\operatorname{ch}}

\DeclareMathOperator{\de}{\operatorname{de}}
\DeclareMathOperator{\nd}{\operatorname{nd}}
\DeclareMathOperator{\nei}{\operatorname{ne}}
\DeclareMathOperator{\bd}{\operatorname{bd}}

\DeclareMathOperator{\an}{\operatorname{an}}
\DeclareMathOperator{\sib}{\operatorname{sib}}
\DeclareMathOperator{\dis}{\operatorname{dis}}
\DeclareMathOperator{\mb}{\operatorname{mb}}

\DeclareMathOperator*{\EE}{\mathbb{E}}

\title{
A Characterization of the Orthocomplement of the Tangent Space of Semiparametric Markov Models
}

\author[1]{
    \href{mailto:<tphung1@jhu.edu>?Subject=Your UAI 2026 paper}
    {Trung Phung}
}
\author[1]{
    \href{mailto:<ilyas@cs.jhu.edu>?Subject=Your UAI 2026 paper}
    {Ilya Shpitser{}}
}
\affil[1]{%
    Computer Science Department\\
    Johns Hopkins University\\
    Baltimore, Maryland, USA
}
\begin{document}
\maketitle

\begin{abstract}
Graphical models are ubiquitous in social and empirical science as they are intuitive and easy to use. These models belong to the broader class of Markov models, defined using solely conditional independence (CI) restrictions.

In order to estimate finite-dimensional target parameters in such models efficiently, semi-parametric theory provides a principled framework for constructing regular and asymptotically linear estimators via influence functions (IFs). These estimators are asymptotically normal and root-$n$ consistent. Characterizing the class of all influence functions for a target parameter is crucial for statistically efficient inference in these models.

For models that are Markov relative to directed acyclic graphs (DAGs), the orthogonal complement of the tangent space is known, implying that for any target the class of all influence functions can be derived once an influence function is obtained. On the other hand, for Markov models not equivalent to a DAG model -- such as ordinary Markov models associated with undirected graphs, chain graphs, or acyclic directed mixed graphs -- the orthogonal complement has not been characterized, impeding semi-parametric inference in these models.

We derive closed form expressions for the orthogonal complement of the tangent space for general Markov models and illustrate our results by characterizing the class of influence functions for the conditional mean parameter in several graphical models.

\end{abstract}

\section{Introduction}\label{sec:intro}

A common task in statistical and causal inference is estimating a finite dimensional target parameter from data.  In parametric models where elements are indexed by finite dimensional nuisance parameters, efficient estimation is achieved using the theory of maximum likelihood estimation \citep{vanderVaart.2000.AsymptoticStatistics}.  In nonparametric or semiparametric models where the parameterization is infinite dimensional, regular and asymptotically linear (RAL) estimators are desirable, since they have nice properties such as asymptotic normality and root-$n$ rate of convergence. RAL estimators are typically obtained by 
deriving influence function of the target parameter
\citep{Newey.1990.SemiparametricEfficiency,Bickel.Klaassen.ea.1993.EfficientAdaptive,VanDerVaart.2002.SemiparametricStatistics,Tsiatis.2006.SemiparametricTheory,Kennedy.2024.SemiparametricDoubly}.

The class of influence functions of a parameter of interest in a given semiparametric model are elements of the linear variety constructed using one influence function and the orthogonal complement of the tangent space of the model (\citet{Tsiatis.2006.SemiparametricTheory}, Theorem 4.3). 
Characterizing the class of all possible RAL estimators in a given semiparametric model thus entails characterizing the orthogonal complement of the tangent space of the model.

Statistical independence is an important tool for encoding \emph{irrelevance,} and is thus crucial for obtaining interpretable models for empirical phenomena where relationships among variables are local.  Thus,
in this paper, we consider
Markov models, or conditional independence models, which are statistical models defined using solely marginal and conditional independences, with a focus on the subclass of Markov models associated with graphs, referred to as \textit{graphical Markov models} \citep{Pearl.1988.ProbabilisticReasoning,Lauritzen.1996.GraphicalModels,Maathuis.Drton.ea.2018.HandbookGraphical}.

Graphical models have been associated with directed acyclic graphs (DAGs), undirected graphs (UGs), as well as mixed graphs such as chain graphs (CGs) \citep{Frydenberg.1990.ChainGraph} and acyclic directed mixed graphs (ADMGs) \citep{Pearl.1988.ProbabilisticReasoning,Lauritzen.1996.GraphicalModels,Richardson.2003.MarkovProperties,Maathuis.Drton.ea.2018.HandbookGraphical}. Models associated with DAGs have seen wide use due to their interpretability, and connections to causal models \citep{Pearl.2000.CausalityModels,Spirtes.Glymour.ea.2001.CausationPrediction,Richardson.Robins.2013.SingleWorld}. However, other types of graphical models have also seen uses for associational analysis of gene regulatory networks \citep{segal2003mni}, protein signaling networks \citep{sachs05causal},
network data \citep{TchetgenTchetgen.Fulcher.ea.2021.AutoGComputationCausal,Ogburn.Shpitser.ea.2020.CausalInference}, and spatial data \citep{lopes08spatial}, as well as modeling systems at equilibrium \citep{Lauritzen.Richardson.2002.ChainGraph}. In addition, ADMG models have been connected to causal models with hidden variables, as well as theory of identification in such models \citep{Richardson.2003.MarkovProperties,Richardson.Evans.ea.2023.NestedMarkov,Shpitser.Richardson.ea.2022.MultivariateCounterfactual,Shpitser.Pearl.2006.IdentificationJointa}.

Explicit likelihoods in terms of sets of variationally independent components have been derived for DAG models, leading to a natural characterization of their model tangent spaces in terms of orthogonal subspaces \citep{Tsiatis.2006.SemiparametricTheory,Rotnitzky.Smucler.2020.EfficientAdjustment,Bhattacharya.Nabi.ea.2022.SemiparametricInference}. While such likelihoods are also known for UG and CG models \citep{Shpitser.2023.LauritzenChenLikelihood}, they feature overlapping components, meaning their tangent space cannot easily be expressed in terms of orthogonal subspaces.  Furthermore, no general likelihoods for ADMG models are known.  Since these Markov models associated with UGs, CGs, and ADMGs are not, in general, observationally equivalent to models associated with DAGs, this greatly complicates the derivation of the tangent space and hence its orthogonal complement, necessitating the development of new techniques.

In this paper, we derive a characterization of the orthogonal complement of the tangent space for semiparametric Markov models defined by marginal and conditional independences, including graphical models associated with undirected graphs, chain graphs, and acyclic directed mixed graphs. In addition, we illustrate how our characterization may be used to derive
all influence functions for target parameters arising in statistical and causal inference 
applications, with the conditional mean parameter as a worked example.

\section{Preliminaries}


Prior to discussing our contribution, we first review the necessary preliminaries: conditional independence (Markov) models and the important subsclass of graphical Markov models, as well as the theory of statistical inference in semi-parametric models.


\subsection{Markov Models}
\label{sec:ci_models}

Suppose $p(\mathbf{V})$ is a distribution over the set of all random variables of interest $\mathbf{V}$. If $\mathbf{X}, \mathbf{Y}, \mathbf{Z}$ are disjoint subsets of $\mathbf{V}$, a \textit{conditional independence} (CI) statement ``$\mathbf{X}$ and $\mathbf{Y}$ are independent given $\mathbf{Z}$ in $p$'', denoted $\mathbf{X} \Perp \mathbf{Y} \mid \mathbf{Z}$, means $p(\mathbf{x}, \mathbf{y} | \mathbf{z}) = p(\mathbf{x} | \mathbf{z}) p(\mathbf{y} | \mathbf{z})$ for all values of $\mathbf{X}, \mathbf{Y}, \mathbf{Z}$ such that $p(\mathbf{z}) > 0$.
Given a list $\mathscr{L}$ of CI statements, a statistical model associated with these constraints, written as $\{p(\mathbf{V}) : \text{ all CIs in }\mathscr{L} \text{ hold in }p(\bf{V})\}$ is called a \textit{Markov model}, or a \textit{conditional independence model}
\citep{Maathuis.Drton.ea.2018.HandbookGraphical}.
Important subclasses of Markov models are \textit{graphical models} associated with directed acyclic graphs (DAGs), undirected graphs (UGs), chain graph (CGs), and acyclic directed mixed graph (ADMGs) \citep{Maathuis.Drton.ea.2018.HandbookGraphical, Lauritzen.1996.GraphicalModels, Richardson.2003.MarkovProperties}.

Undirected graphs (UGs) are graphs with only undirected edges ($-$).
Directed acyclic graphs (DAGs) are graphs with only directed edges ($\rightarrow$) lacking directed cycles.  Bidirected graphs (BGs) are graphs with only bidirected edges ($\leftrightarrow$)
Chain graphs (CGs) are mixed graphs containing both directed and undirected edges with no partially directed cycles.  Acyclic directed mixed graphs (ADMGs) are mixed graphs with directed and bidirected edges lacking directed cycles (for the purposes of such cycles the presence of bidirected edges is ignored).  Simple examples of these graphs classes are shown in Fig.~\ref{fig:ci_models}.

\begin{figure}[H]
\centering
\scalebox{0.75}{
    \begin{tikzpicture}[>=stealth, node distance=1.3cm]
        \def\d{1.3cm}
        \begin{scope}
            \path[->, very thick]
            node[] (C) {$C$}
            node[right of=C] (D) {$D$}
            node[above of=C] (A) {$A$}
            node[above of=D] (B) {$B$}
            
            (A) edge[blue] (B)
            (A) edge[blue] (C)
            (B) edge[blue] (D)
            (C) edge[blue] (D)
            node[below of=C, yshift=\d/2, xshift=\d/2] () {
            (a) The two pathway DAG} 
            ;
        \end{scope}
        \begin{scope}[xshift=\d*3.5]
            \path[->, very thick]
            node[] (C) {$C$}
            node[right of=C] (D) {$D$}
            node[above of=C] (A) {$A$}
            node[above of=D] (B) {$B$}
            
            (A) edge[-, gray] (B)
            (A) edge[-, gray] (C)
            (B) edge[-, gray] (D)
            (C) edge[-, gray] (D)
            node[below of=C, yshift=\d/2, xshift=\d/2] () {
            (b) The undirected square} 
            ;
        \end{scope}
        \begin{scope}[yshift=-\d*2.5]
            \path[->, very thick]
            node[] (C) {$C$}
            node[right of=C] (D) {$D$}
            node[above of=C] (A) {$A$}
            node[above of=D] (B) {$B$}
            
            (A) edge[blue] (C)
            (B) edge[blue] (D)
            (C) edge[-, gray] (D)
            node[below of=C, yshift=\d/2.5, xshift=\d/2, text width=\d*2.2] () {(c) The dyadic minimal complex CG} 
            ;
        \end{scope}
        \begin{scope}[yshift=-\d*2.5, xshift=\d*3.5]
            \path[->, very thick]
            node[] (C) {$C$}
            node[right of=C] (D) {$D$}
            node[above of=C] (A) {$A$}
            node[above of=D] (B) {$B$}
            
            (A) edge[blue] (C)
            (B) edge[blue] (D)
            (C) edge[<->, red] (D)
            node[below of=C, yshift=\d/2, xshift=\d/2] () {(d) The Bell scenario} 
            ;
        \end{scope}
        \begin{scope}[yshift=-\d*5]
            \path[->, very thick]
            node[] (C) {$C$}
            node[right of=C] (D) {$D$}
            node[above of=C] (A) {$A$}
            node[above of=D] (B) {$B$}
            
            (A) edge[<->, red] (C)
            (A) edge[<->, red] (B)
            (B) edge[<->, red] (D)
            (C) edge[<->, red] (D)
            node[below of=C, yshift=\d/2, xshift=\d/2] () {(e) The bidirected square} 
            ;
        \end{scope}
        \begin{scope}[yshift=-\d*5, xshift=\d*3.5]
            \path[->, very thick]
            node[] (C) {$C$}
            node[right of=C] (D) {$D$}
            node[above of=C] (A) {$A$}
            node[above of=D] (B) {$B$}
            
            (A) edge[blue] (B)
            (A) edge[blue] (C)
            (A) edge[blue] (D)
            (B) edge[blue] (D)
            (B) edge[blue] (C)
            (C) edge[blue] (D)
            node[below of=C, yshift=\d/2, xshift=\d/2] () {(f) A complete $4$ vertex DAG}
            ;
        \end{scope}
    \end{tikzpicture}
}
\caption{Six different graphs used to define six different graphical Markov models. Models 
associated with graphs in (b), (c), (d), and (e) are not equivalent to any DAG model, so their tangent spaces cannot be characterized using the technique in Section~\ref{sec:dag_tangent}.}
\label{fig:ci_models}
\end{figure}

Graphical models use graphical separation criteria to encode CI constraints. The (finite) set of all CIs that hold in a graphical model is given by a global Markov property, while local Markov properties provide a small list of CIs that logically imply all others.
Below, we define local Markov properties for models associated with UGs, DAGs, CGs, bidirected graphs (BGs), and ADMGs, and refer reader to \citet{Maathuis.Drton.ea.2018.HandbookGraphical, Richardson.2003.MarkovProperties} for additional details.

To describe the local Markov properties, we need the following basic graphical concepts. If $\mathcal{G}$ is an UG, denote $\nei_\mathcal{G}(V) \equiv \{Z \in \mathbf{V}: Z - V \text{ in }\mathcal{G}\}$ as the set of neighbors of $V$ in $\mathcal{G}$. If $\mathcal{G}$ is a DAG, denote $\pa_\mathcal{G}(V) \equiv \{Z \in \mathbf{V}: Z \rightarrow V \text{ in }\mathcal{G}\}$ as the set of parents of $V$ in $\mathcal{G}$, $\ch_\mathcal{G}(V) \equiv \{Z \in \mathbf{V}: V \rightarrow Z \text{ in }\mathcal{G}\}$ as the set of children of $V$ in $\mathcal{G}$, $\de_\mathcal{G}(V) \equiv \{Z \in \mathbf{V}: V \rightarrow \cdots \rightarrow Z \text{ in }\mathcal{G}\}$ as the set of descendants of $V$ in $\mathcal{G}$,
$\an_\mathcal{G}(V) \equiv \{Z \in \mathbf{V}: Z \rightarrow \cdots \rightarrow V \text{ in }\mathcal{G}\}$ as the set of ancestors of $V$ in $\mathcal{G}$,
and $\nd_\mathcal{G}(V) \equiv \mathbf{V} \setminus \de_\mathcal{G}(V)$ as the set of non-descendants of $V$ in $\mathcal{G}$. Note that by convention, $V \in \de_{\cal G}(V) \cap \an_{\cal G}(V)$.  For sets of variables, the sets of parents, ancestors and descendants are defined disjunctively.  That is $\pa_{\cal G}({\bf A}) \equiv \bigcup_{A \in {\bf A}} \pa_{\cal G}(A)$, $\an_{\cal G}({\bf A}) \equiv \bigcup_{A \in {\bf A}} \an_{\cal G}(A)$ and $\de_{\cal G}({\bf A}) \equiv \bigcup_{A \in {\bf A}} \de_{\cal G}(A)$.

For CGs, define $\pa_\mathcal{G}(V)$ as in DAGs and $\nei_{\cal G}(V)$ as in UGs. The set of descendants of $V$, $\de_{\cal G}(V)$ is defined as the set of all variables $Z$ with a \emph{partially directed path} from $V$ to $Z${ -- a sequence of distinct vertices such that there is an edge ($-$ or $\rightarrow$) between any pair of consecutive vertices, with all directed edges pointing in the same direction.}  As before, $V \in \de_{\cal G}(V)$ and $\nd_{\cal G}(V) = {\bf V} \setminus \de_{\cal G}(V)$.
We further define the \emph{boundary of $V$}, $\bd_{\cal G}(V)$ as $\nei_{\cal G}(V) \cup \pa_{\cal G}(V)$.
Finally, we define a chain component in a CG $\mathcal{G}$ to be a connected component in the edge subgraph of $\mathcal{G}$ retaining only undirected edges.

If $\mathcal{G}$ contains bidirected edges, denote $\sib_\mathcal{G}(V) \equiv \{Z \in \mathbf{V}: Z \leftrightarrow V \text{ in }\mathcal{G}\}$ as the set of 
siblings of $V$ in $\mathcal{G}$. Furthermore, a subset $\mathbf{C} \subseteq \mathbf{V}$ is called a bidirected connected subset if it is a connected component in the edge subgraph of $\mathcal{G}$ retaining only bidirected edges. A district is a maximal bidirected connected subset, and $\dis_\mathcal{G}(V)$ denotes the district in $\mathcal{G}$ containing $V$.

For ADMGs, all mentioned graphical concepts for DAGs and BGs apply, and define $\mb_{\mathcal{G}}(V) = \pa_{\mathcal{G}}(\dis_{\mathcal{G}}(V)) \cup \big( \dis_{\mathcal{G}}(V) \setminus \{ V \} \big)$ as the Markov blanket of $V$ in $\mathcal{G}$. Furthermore, a subset $\mathbf{A}$ is called ancestral if
whenever $V \in \an_{\cal G}({\bf A})$ then $V \in {\bf A}$.

Given a graph ${\cal G}$ with a vertex set ${\bf V}$, and ${\bf A} \subseteq {\bf V}$, define ${\cal G}_{\bf A}$, the \emph{subgraph of ${\cal G}$ induced by ${\bf A}$}, to be the subgraph of ${\cal G}$ containing only the vertex set ${\bf A}$ and edges among $\mathbf{A}$.

The list of CIs corresponding to the local Markov property for UGs, DAGs, CGs, BGs and ADMGs with the vertex set ${\bf V}$ are as follows
\begin{itemize}[leftmargin=*]
    \item UGs: $\big( V \Perp \mathbf{V} \setminus \big( \{V\} \cup \nei_\mathcal{G}(V) \big) \big| \nei_\mathcal{G}(V) \big)$, for all $V \in \mathbf{V}$.
    \item DAGs: $\big( V \Perp \nd_\mathcal{G}(V) \setminus \pa_\mathcal{G}(V) \big| \pa_\mathcal{G}(V) \big)$, for all $V \in \mathbf{V}$.
    \item CGs: $\big( V \Perp \nd_\mathcal{G}(V) \setminus \bd_\mathcal{G}(V) \big| \bd_\mathcal{G}(V) \big)$,
    for all $V \in \mathbf{V}$.
    \item BDs:
    $\big( V \Perp {\bf A} \setminus \dis_{{\cal G}_{\bf A}}(V) \big| \dis_{{\cal G}_{\bf A}}(V)\big)$, for all $V$ and all subsets ${\bf A}$ of ${\bf V}$ such that $V \in {\bf A}$.
    \item ADMGs: $\big( V \Perp \mathbf{A} \setminus \big( \mb_{\mathcal{G}_{\bf A}}(V) \cup \{V\} \big) \big| \mb_{\mathcal{G}_{\bf A}}(V) \big)$, for all $V$ and ancestral set $\mathbf{A}$ such that $V \in \mathbf{A} \subseteq \nd_{\mathcal{G}}(V)$.
\end{itemize}

Note that, as expected, the local properties for the UG and DAG model are special cases of the local property for the CG model, and the local properties for the BG and DAG model are special cases of the local property for the ADMG model.


For example, the following are the lists of CIs for the models associated with the graphs in Figure~\ref{fig:ci_models}.
\begin{enumerate}[nosep, label=(\alph*)]
    \item The two pathways DAG model:\\
    $C \Perp B \mid A$ and $D \Perp A \mid B,C$.
    \item The undirected square UG model:\\
    $C \Perp B \mid A,D$ and $D \Perp A \mid B,C$.
    \item The dyadic minimal complex CG model:\\
    $A\Perp B$, $C \Perp B \mid A,D$ and $D \Perp A \mid B,C$.
    \item The Bell scenario ADMG model:\\
    $A \Perp B, D$ and $B \Perp A, C$.
    \item The bidirected square scenario ADMG model:\\
    $A \Perp D$ and $B \Perp C$.
    \item The complete DAG: the empty list, as the model is saturated.
\end{enumerate}

{
Note that there are graphical models associated with ADMGs defined not only in terms of CIs but also generalized CIs, sometimes called ``Verma constraints'' \citep{Richardson.Evans.ea.2023.NestedMarkov}.  As an example, the model associated with Fig.~\ref{fig:verma_graph} is defined by a CI: $C \Perp A \mid B$ and a generalized independence constraint stating that $\sum_B p(D \mid C, B, A) p(B \mid A)$ is a function of only $C$ and $D$.
In general, both CIs and Verma constraints are needed to describe marginal models for hidden variable DAGs.}
For the purposes of this work, we only consider ADMG models defined exclusively via ordinary CI constraints, called \emph{ordinary Markov models} \citep{Richardson.2003.MarkovProperties,evans18margins}.

\begin{figure}[H]
\centering
\scalebox{0.8}{
    \begin{tikzpicture}[>=stealth, node distance=1.5cm]
        \def\d{1.5cm}
        \begin{scope}
            \path[->, very thick]
            node[xshift=-\d] (A) {$A$}
            node[right of=A] (B) {$B$}
            node[right of=B] (C) {$C$}
            node[right of=C] (D) {$D$}
            
            (A) edge[blue] (B)
            (B) edge[blue] (C)
            (C) edge[blue] (D)
            (B) edge[<->, red, bend left=40] (D)
            ;
        \end{scope}
    \end{tikzpicture}
}
\caption{A graph representing a model with a Verma constraint.
}
\label{fig:verma_graph}
\end{figure}

\subsection{Semi-parametric Inference}
\label{sec:semiparam}

As we further describe in Sections \ref{sec:dag_tangent} and \ref{sec:ci_tangent}, there is a close relationship between Markov model restrictions and the tangent space of the model.  This relationship is used to construct estimators via theory of semi-parametric models which we now briefly outline. For further details, we refer the reader to excellent tutorials in \citet{Kennedy.2024.SemiparametricDoubly} and \citet{Tsiatis.2006.SemiparametricTheory}.

We refer to a semi-parametric model as a set of probability densities $\mathscr{P} = \{p(\mathbf{v}; \theta) : \theta \in \Theta\}$ relative to a measure $\mu$, parameterized by an infinite dimensional nuisance parameters $\theta$, with the true data generating distribution $p_0 \in \mathscr{P}$.
As an example, the semi-parametric model associated with the DAG model in Fig.~\ref{fig:ci_models}a is $\mathscr{P}^{(a)} = \{p(a,b,c,d) : p(a,b,c,d) = p(a) p(b \mid a) p(c \mid a) p(d \mid b,c) \}$.

The \emph{contaminated}
parametric submodel is a smooth parametric model $\mathscr{P}_{\varepsilon} = \{p_{\varepsilon}(\mathbf{v}) : \varepsilon \in \mathcal{E} \subseteq \mathbb{R}\}$ defined around the true distribution $p_0$ such that $p_{\varepsilon=0} = p_0$, and is contained in the semi-parametric model $\mathscr{P}_{\varepsilon} \subseteq \mathscr{P}$. In the subsequent discussion, we will refer to $\mathscr{P}_{\varepsilon}$ as the parametric submodel, following \citet{Tsiatis.2006.SemiparametricTheory}.
We also denote the saturated (unrestricted) model as $\mathscr{P}^{\text{all}}$. As an example, the model associated with the complete DAG in Fig.~\ref{fig:ci_models}f is saturated.
By definition, all semiparametric models are contained in $\mathscr{P}^{\text{all}}$.

For a parametric submodel $\mathscr{P}_{\varepsilon}$, the score function is the derivative of the log likelihood at the truth $s(\mathbf{v}) = \frac{\partial}{\partial \varepsilon} \log p_{\varepsilon}(\mathbf{v})|_{\varepsilon = 0}$. This function is an element of the Hilbert space $\mathcal{H}$ of all mean-zero $L^2(p_0)$ functions $f(\mathbf{v})$ in which the inner product is relative to the truth distribution $\langle f, g \rangle = \int f(\mathbf{v}) g(\mathbf{v}) p_0(\mathbf{v}) d \mathbf{v}, \forall f, g \in \mathcal{H}$. The tangent space of a semi-parametric model, denoted $\mathcal{T}$, is the closure of all parametric submodels' score functions, thus ${\cal T}$ is a subspace of $\mathcal{H}$. For the saturated model $\mathscr{P}^{\text{all}}$, $\mathcal{T} = \mathcal{H}$. The orthogonal complement of the tangent space, denoted $\mathcal{T}^\perp$, is the subspace of all functions in $\mathcal{H}$ orthogonal to $\mathcal{T}$, so $\langle f, g \rangle = 0$ for all $f \in \mathcal{T}^\perp$ and $g \in \mathcal{T}$.

The target of inference in a semi-parametric model is a smooth function $\psi: \mathscr{P} \mapsto \mathbb{R}^d$ where $d$ is a fixed integer. For simplicity, we will only consider scalar target, so $d=1$.  For example, an adjustment formula functional arising in causal inference applications for the examples in Figure~\ref{fig:ci_models} yields a scalar parameter $\psi(p) = \mathbb{E}[\mathbb{E}[D \mid A = a_0, B, C]]$.  The main result can be extended easily to $d > 1$ but the notation is more involved.

In semi-parametric theory we are interested in the class of \emph{regular and asymptotically linear (RAL)} estimators for $\psi$. Given data $\mathbf{V}_1, \ldots, \mathbf{V}_n$ drawn from a distribution $p \in \mathscr{P}$, an estimator $\hat{\psi}_n$ for $\psi$ is called asymptotically linear if there is a function $\phi \in \mathcal{H}$ such that
\begin{equation}
\label{eq:asymp_linear}
    \sqrt{n} (\hat{\psi}_n - \psi) = \frac{1}{\sqrt{n}} \sum_{i=1}^n \phi(\mathbf{V}_i) + o_p(1).
\end{equation}
By the central limit theorem, the limiting distribution of such an estimator $\hat{\psi}_n$ is normal with root-$n$ rate of convergence,
\begin{equation}
\label{eq:asymp_linear_clt}
    \sqrt{n} (\hat{\psi}_n - \psi) \rightsquigarrow \mathcal{N}(0, \mathbb{E}[\phi^2]),
\end{equation}
where $\rightsquigarrow$ denotes convergence in distribution. Estimators exhibit asymptotically linear property locally uniformly around the truth are called regular and asymptotically linear (RAL) estimators, which excludes super efficient estimators like the Hodges estimator. The function $\phi$ is called an \textit{influence function for the estimator} $\hat{\psi}_n$. RAL estimators are attractive because they are root-$n$ consistent and asymptotically normal, and the variance equal the variance of its influence function.

A closely related notion is the \textit{influence functions for the target parameter}.
Specifically when the distribution changes from $p$ to $\bar{p}$, one can approximate the change in the target functional using the following \emph{von Mises expansion}
\begin{equation}
    \psi(\bar{p}) - \psi(p) = \int \varphi(\mathbf{v}) (\bar{p} - p)(\mathbf{v}) d \mathbf{v} + R[(\bar{p} - p)^2].
\end{equation}
The integral is the first-order change of the target parameter with respect to the change in the distribution,
while $R$ is the second-order term of the expansion. The function $\varphi \in \mathcal{H}$ characterizes the first-order derivative of the target functional, and is referred to as the influence function of the target parameter $\psi$.
The von Mises expansion implies the following for all parametric submodels $p_{\varepsilon}$ with score function $s \in \mathcal{T}$
\begin{equation}
\label{eq:path_dev}
    \frac{\partial}{\partial \varepsilon} \psi(p_{\varepsilon}) \bigg|_{\varepsilon=0} = \int \varphi(\mathbf{v}) s(\mathbf{v}) p_0(\mathbf{v}) d \mathbf{v},
\end{equation}
where, under regularity conditions, it suffices to consider only parametric submodels of the form $p_{\varepsilon}(\mathbf{v}) = p_{0}(\mathbf{v})(1 + \varepsilon s(\mathbf{v}))$.
Equation~\ref{eq:path_dev} is called the \textit{pathwise differentiability} condition, which is implied by the Neyman orthogonality condition \citep{Chernozhukov.Chetverikov.ea.2018.DoubleDebiased}, and is central to the derivation of target's influence function. In practice, one first obtains a target parameter's influence function $\varphi$ from Equation~\ref{eq:path_dev}, then from it constructs a RAL estimator, whose influence function coincides with the target's influence function, $\phi = \varphi$. We focus on the target parameter's influence functions and will refer to them as just influence functions (IFs).

From Equation~\ref{eq:path_dev}, it is evident that if $\varphi$ is an influence function and $f \in \mathcal{T}^\perp$, then $\varphi + f$ is also an influence function, since $\langle f, s \rangle = \int f(\mathbf{v}) s(\mathbf{v}) p_0(\mathbf{v}) d \mathbf{v} = 0$. Therefore, the class of all influence functions is $\varphi \oplus \mathcal{T}^\perp$, where $\varphi$ is any solution of Equation~\ref{eq:path_dev} and $\oplus$ denotes the direct sum (\citet{Tsiatis.2006.SemiparametricTheory}, Theorem 4.3). Thus, knowing the tangent space's orthogonal complement $\mathcal{T}^\perp$ allows one to derive the class of influence functions for any target of inference.


\section{The orthocomplement of the Tangent Space of a DAG Model}
\label{sec:dag_tangent}

The tangent space of a DAG model and its orthogonal complement have been derived in multiple contexts, see Theorem 4.5 in \citet{Tsiatis.2006.SemiparametricTheory}, Lemma 24 in \citet{Rotnitzky.Smucler.2020.EfficientAdjustment}, and Lemma 3 in \citet{Bhattacharya.Nabi.ea.2022.SemiparametricInference}. In the interests of being self-contained, we repeat the derivation using examples in Figure~\ref{fig:ci_models}a in order to highlight the main techniques. Despite using specific examples, the derivation is without loss of generality. Subsequently, we explain the reason why these techniques fail to apply to Markov models beyond DAG models.

\subsection{
A Worked Example
}
\label{sec:dag_tangent:derivation}

Consider the DAG in Figure~\ref{fig:ci_models}a, whose semi-parametric model is $\mathscr{P}^{(a)}$. To find the 
tangent space $\mathcal{T}$, we first derive the score function for an arbitrary parametric submodel $p_{\varepsilon}$, then take the closure. It turns out that $\mathcal{T}$ is the range of a projection operator $\Pi$, hence the orthogonal complement of $\mathcal{T}$ is simply $h - \Pi(h)$ for all $h \in \mathcal{H}$. These steps are demonstrated in detail below.

\paragraph{The score function of a parametric submodel.}
Consider a parametric submodel $\mathscr{P}_{\varepsilon}$. By the DAG factorization property, elements of $\mathscr{P}_{\varepsilon}$ satisfy
\begin{equation}
\label{eq:dag_llh}
    p_{\varepsilon}(a,b,c,d) = p_{\varepsilon}(a) p_{\varepsilon}(b \mid a) p_{\varepsilon}(c \mid a) p_{\varepsilon}(d \mid c,b).
\end{equation}
This implies that the total score function is the sum
\begin{equation}
\label{eq:dag_score_1}
    s(a,b,c,d) = s(a) + s(b \mid a) + s(c \mid a) + s(d \mid c,b),
\end{equation}
where, for any pair of disjoint subsets $\mathbf{X}, \mathbf{Y} \subseteq \mathbf{V}$, a component score function $s(\mathbf{x} \mid \mathbf{y})$ is defined as $\frac{\partial  \log p_\varepsilon(\mathbf{x} \mid \mathbf{y})}{\partial \varepsilon} |_{\varepsilon=0}$.

Since a component score function $s(\mathbf{x} \mid \mathbf{y})$ has the property that its conditional mean is zero $\int s(\mathbf{x} \mid \mathbf{y}) p_0(\mathbf{x} \mid \mathbf{y}) d \mathbf{x} = 0$, the component scores are in the following subspaces of $\mathcal{H}$
\begin{align}
\label{eq:dag_subtangent}
    s(a) \in \mathcal{T}_A & \equiv \{\mathbb{E}[h | a] : \forall h \in \mathcal{H}\},
    \\
    \notag
    s(b \mid a) \in \mathcal{T}_{B \mid A} & \equiv \{\mathbb{E}[h | b, a] - \mathbb{E}[h | a] : \forall h \in \mathcal{H}\},
    \\
    \notag
    s(c \mid a) \in \mathcal{T}_{C \mid A} & \equiv \{\mathbb{E}[h | c, a] - \mathbb{E}[h | a] : \forall h \in \mathcal{H}\},
    \\
    \notag
    s(d \mid c,b) \in \mathcal{T}_{D \mid CB} & \equiv \{\mathbb{E}[h | d, c, b] - \mathbb{E}[h | c, b] : \forall h \in \mathcal{H}\}.
\end{align}
The expectations are relative to $p_0$. In the Appendix, we show that these subspaces are pairwise orthogonal by checking that the inner products of their elements vanish.

\paragraph{The model's tangent space.} The semi-parametric tangent space $\mathcal{T}$ is the closure of all parametric submodels's score functions, hence it must be a closed subspace of the direct sum $\mathcal{T}_A \oplus \mathcal{T}_{B|A} \oplus \mathcal{T}_{C|A} \oplus \mathcal{T}_{D|BC}$. In fact, it is actually this direct sum
\begin{equation}
\label{eq:dag_tangent}
\begin{aligned}
    \mathcal{T}
    = & \mathcal{T}_A \oplus \mathcal{T}_{B|A} \oplus \mathcal{T}_{C|A} \oplus \mathcal{T}_{D|BC}.
\end{aligned}
\end{equation}
We refer to the subspaces on the r.h.s as tangent subspaces.
Any element of $\mathcal{T}$ can be explicitly constructed as the sum of four elements of the four tangent subspaces, whose explicit forms are given by Equation~\ref{eq:dag_subtangent}.
A detailed proof of Equation~\ref{eq:dag_tangent} is provided in the Appendix.
In essence, the equality is established once it is shown that each tangent subspace is contained in $\mathcal{T}$.
To illustrate, consider showing that $\mathcal{T}_{B|A} \subseteq \mathcal{T}$.
Pick any $f \in \mathcal{T}_{B|A}$ and define the parametric model $p_{\varepsilon} = p_{0}(1 + \varepsilon f)$, where $\varepsilon$ is small enough so that $p_{\varepsilon} \geq 0$. The score function at $p_{0}$ is $\frac{\partial}{\partial \varepsilon} \log p_{\varepsilon}|_{\varepsilon=0} = f$.
This parametric model contains the true distribution $p_0$ at $\varepsilon = 0$, and it satisfies all the DAG constraints: $p_{\varepsilon}(c \mid b, a) = p_{\varepsilon}(c \mid a)$ and $p_{\varepsilon}(d \mid c, b, a) = p_{\varepsilon}(d \mid c, b)$.
Therefore, $p_{\varepsilon}$ defines a valid parametric submodel of $\mathscr{P}$, confirming $\mathcal{T}_{B|A} \subseteq \mathcal{T}$.

\paragraph{The orthogonal complement of the model tangent space.}
If $\mathcal{S}$ is a closed subspace of $\mathcal{H}$, the \textit{orthogonal complement} of $\mathcal{S}$ in $\mathcal{H}$, denoted $\mathcal{S}^\perp$, is defined as the set of all residues $h - \Pi(h \mid \mathcal{S})$ for all $h \in \mathcal{H}$, where $\Pi(\cdot \mid \mathcal{S})$ denotes the projection operator onto the subspace $\mathcal{S}$. Therefore, we first derive $\Pi(\cdot \mid \mathcal{T})$ before constructing $\mathcal{T}^\perp$.

The expressions of $\mathcal{T}_A, \mathcal{T}_{B|A}, \mathcal{T}_{C|A}, \mathcal{T}_{D|CB}$ in Equation~\ref{eq:dag_subtangent}
show that these subspaces are ranges of corresponding linear operators $\Lambda_A, \Lambda_{B|A}, \Lambda_{C|A}, \Lambda_{D|CB}$
appeared in these expressions, respectively. For example, the linear operator $\Lambda_{C|A}$ is defined by $\Lambda_{C|A}(h)(c,a) \equiv \mathbb{E}[h \mid c, a] - \mathbb{E}[h \mid a]$, and it maps the Hilbert space $\mathcal{H}$ into $\mathcal{T}_{C|A}$.
In fact, these operators are actually projection operators onto these subspaces,
which can be verified directly by checking the inner product between the residue and the image. For example, in the case of $\mathcal{T}_{C|A}$, we have $\langle h - \Lambda_{C|A}(h), \Lambda_{C|A}(h) \rangle = \mathbb{E} \big[ (h - \mathbb{E}[h \mid c, a] + \mathbb{E}[h \mid a]) (\mathbb{E}[h \mid c, a] - \mathbb{E}[h \mid a]) \big] = 0$ for all $h \in \mathcal{H}$.
Since the subspaces $\mathcal{T}_A, \mathcal{T}_{B|A}, \mathcal{T}_{C|A}, \mathcal{T}_{D|CB}$ are pairwise orthogonal, the projection of a function $h \in \mathcal{H}$ onto the tangent space $\mathcal{T}$ is given by the sum of the individual projections
\begin{equation}
\label{eq:dag_tangent_proj}
\begin{aligned}
    \Pi(h \mid \mathcal{T})
    = & \Lambda_{A}(h) + \Lambda_{B|A}(h) + \Lambda_{C|A}(h) + \Lambda_{D|CB}(h)
    \\
    = & \mathbb{E}[h \mid d, c, b] - \mathbb{E}[h \mid c, b] + \mathbb{E}[h \mid c, a] \\
    & + \mathbb{E}[h \mid b, a] - \mathbb{E}[h \mid a].
\end{aligned}
\end{equation}
Indeed, one can directly verify that this is the expression of the projection operator onto $\mathcal{T}$ by checking the inner product $\langle h - \Pi(h \mid \mathcal{T}), \Pi(h \mid \mathcal{T}) \rangle = 0$ for all $h \in \mathcal{H}$. Hence, the orthogonal complement of the tangent space is
\begin{equation}
\label{eq:dag_orthocomp}
\begin{aligned}
    \mathcal{T}^{\perp}
    = \{ & h - \mathbb{E}[h \mid d, c, b] + \mathbb{E}[h \mid c, b] - \mathbb{E}[h \mid c, a] \\
    & - \mathbb{E}[h \mid b, a] + \mathbb{E}[h \mid a] : \forall h \in \mathcal{H} \}.
\end{aligned}
\end{equation}

Given any DAG model, these described steps can be used to derive the tangent space and its orthogonal complement, as done in \citet{Tsiatis.2006.SemiparametricTheory, Rotnitzky.Smucler.2020.EfficientAdjustment, Bhattacharya.Nabi.ea.2022.SemiparametricInference}. In the special case when the DAG model has only one constraint, we 
reproduce the following well-known lemma.
\begin{lemma}
    \label{lm:single_dag_model_orthocomp}
    Suppose $\mathscr{P}$ is a single independence DAG model over variables $\mathbf{V}$, defined by a single constraint $\mathbf{X} \Perp \mathbf{Y} \mid \mathbf{Z}$. The model's tangent space $\mathcal{T}$ and its orthogonal complement $\mathcal{T}^\perp$ are, respectively
    \begin{equation}
    \begin{aligned}
        \label{eq:single_dag_model_orthocomp}
        \mathcal{T} = & \big\{ h - \mathbb{E}[h \mid \mathbf{x}, \mathbf{y}, \mathbf{z}]
        + \mathbb{E}[h \mid \mathbf{x}, \mathbf{z}] - \mathbb{E}[h \mid \mathbf{z}]
        \\
        & + \mathbb{E}[h \mid \mathbf{y}, \mathbf{z}]: \forall h \in \mathcal{H} \big\},
        \\
        \mathcal{T}^\perp = & \big\{ \mathbb{E}[h \mid \mathbf{x}, \mathbf{y}, \mathbf{z}] - \mathbb{E}[h \mid \mathbf{y}, \mathbf{z}]
        \\
        & - \mathbb{E}[h \mid \mathbf{x}, \mathbf{z}] + \mathbb{E}[h \mid \mathbf{z}]
        : \forall h \in \mathcal{H} \big\}.
    \end{aligned}
    \end{equation}
\end{lemma}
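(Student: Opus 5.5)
The plan is to reduce the single-constraint model to a DAG model and then apply the procedure of Section~\ref{sec:dag_tangent:derivation}. Let $\mathbf{W} = \mathbf{V} \setminus (\mathbf{X} \cup \mathbf{Y} \cup \mathbf{Z})$. The model defined by $\mathbf{X} \Perp \mathbf{Y} \mid \mathbf{Z}$ alone is exactly the set of densities that factorize as
\begin{equation*}
    p(\mathbf{v}) = p(\mathbf{z})\, p(\mathbf{x} \mid \mathbf{z})\, p(\mathbf{y} \mid \mathbf{z})\, p(\mathbf{w} \mid \mathbf{x}, \mathbf{y}, \mathbf{z}).
\end{equation*}
The last factor is unrestricted, and treating each block as a single vertex gives a DAG $\mathbf{Z} \rightarrow \mathbf{X}$, $\mathbf{Z} \rightarrow \mathbf{Y}$, $\{\mathbf{X},\mathbf{Y},\mathbf{Z}\} \rightarrow \mathbf{W}$.

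Differentiating the log-likelihood along a parametric submodel, as in Equation~\ref{eq:dag_score_1}, places every score in
\begin{equation*}
    \mathcal{T}_{\mathbf{Z}} \oplus \mathcal{T}_{\mathbf{X}|\mathbf{Z}} \oplus \mathcal{T}_{\mathbf{Y}|\mathbf{Z}} \oplus \mathcal{T}_{\mathbf{W}|\mathbf{XYZ}}.
\end{equation*}
The four subspaces are the ranges of
\begin{equation*}
    \mathbb{E}[h\mid\mathbf{z}], \quad \mathbb{E}[h\mid\mathbf{x},\mathbf{z}]-\mathbb{E}[h\mid\mathbf{z}], \quad \mathbb{E}[h\mid\mathbf{y},\mathbf{z}]-\mathbb{E}[h\mid\mathbf{z}], \quad h-\mathbb{E}[h\mid\mathbf{x},\mathbf{y},\mathbf{z}],
\end{equation*}
where the last uses $\mathbb{E}[h \mid \mathbf{v}] = h$. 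Each of these operators is idempotent and self-adjoint, so each is an orthogonal projection.

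The next step is pairwise orthogonality, checked by iterated expectations. The last subspace is orthogonal to the others because each of them is a function of $(\mathbf{x},\mathbf{y},\mathbf{z})$, while $h - \mathbb{E}[h\mid\mathbf{x},\mathbf{y},\mathbf{z}]$ has conditional mean zero given $(\mathbf{x},\mathbf{y},\mathbf{z})$. The first subspace is orthogonal to the two middle ones by conditioning on $\mathbf{z}$. The one place the constraint is used is the orthogonality of $\mathcal{T}_{\mathbf{X}|\mathbf{Z}}$ and $\mathcal{T}_{\mathbf{Y}|\mathbf{Z}}$. Take $f(\mathbf{x},\mathbf{z})$ with $\mathbb{E}[f\mid\mathbf{z}]=0$ and $g(\mathbf{y},\mathbf{z})$ with $\mathbb{E}[g\mid\mathbf{z}]=0$. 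Since $\mathbf{X} \Perp \mathbf{Y} \mid \mathbf{Z}$ under $p_0$, we get $\mathbb{E}[fg\mid\mathbf{z}] = \mathbb{E}[f\mid\mathbf{z}]\,\mathbb{E}[g\mid\mathbf{z}] = 0$.

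For the reverse inclusion, I will show that each tangent subspace lies in $\mathcal{T}$, as in the worked example. Given a bounded $f$ in one of the subspaces, take the submodel $p_\varepsilon = p_0(1+\varepsilon f)$ and check that it still satisfies $\mathbf{X} \Perp \mathbf{Y} \mid \mathbf{Z}$. For $f(\mathbf{x},\mathbf{z}) \in \mathcal{T}_{\mathbf{X}|\mathbf{Z}}$, the perturbation changes only $p(\mathbf{x}\mid\mathbf{z})$: the marginal of $\mathbf{z}$ is unchanged because $\mathbb{E}[f\mid\mathbf{z}]=0$, and the conditional of $(\mathbf{y},\mathbf{w})$ given $(\mathbf{x},\mathbf{z})$ is unchanged. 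The other three cases are analogous, each perturbing exactly one factor of the factorization. Unbounded elements are handled by density of bounded ones together with the closedness of $\mathcal{T}$. I expect this verification, together with the boundedness and closure technicalities, to be the main, if routine, obstacle.

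Summing the four projections gives
\begin{equation*}
    \Pi(h\mid\mathcal{T}) = h - \mathbb{E}[h\mid\mathbf{x},\mathbf{y},\mathbf{z}] + \mathbb{E}[h\mid\mathbf{x},\mathbf{z}] + \mathbb{E}[h\mid\mathbf{y},\mathbf{z}] - \mathbb{E}[h\mid\mathbf{z}].
\end{equation*}
The set $\{\Pi(h\mid\mathcal{T}) : h \in \mathcal{H}\}$ is exactly the stated form of $\mathcal{T}$. Taking the residual $h - \Pi(h\mid\mathcal{T})$ gives
\begin{equation*}
    \mathbb{E}[h\mid\mathbf{x},\mathbf{y},\mathbf{z}] - \mathbb{E}[h\mid\mathbf{x},\mathbf{z}] - \mathbb{E}[h\mid\mathbf{y},\mathbf{z}] + \mathbb{E}[h\mid\mathbf{z}],
\end{equation*}
which is the stated form of $\mathcal{T}^\perp$. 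As a sanity check, I will verify directly that this residual is orthogonal to each of the four subspaces.
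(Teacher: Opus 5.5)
Your proposal is correct and follows essentially the same route as the paper's proof: the same four-factor factorization, the same four tangent subspaces with their projection operators, the same $p_0(1+\varepsilon f)$ construction showing each subspace lies in $\mathcal{T}$, and the sum of the four projections taken as $\Pi(\cdot\mid\mathcal{T})$. You are slightly more careful than the paper in two places: you invoke $\mathbf{X} \Perp \mathbf{Y} \mid \mathbf{Z}$ explicitly to get $\mathcal{T}_{\mathbf{X}|\mathbf{Z}} \perp \mathcal{T}_{\mathbf{Y}|\mathbf{Z}}$, which the paper instead attributes to its nested-conditioning orthogonality lemma and only really uses in its later direct check that $\langle f-\Pi(f),\Pi(g)\rangle=0$; and you handle unbounded $f$ through a density argument.
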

The proof of this lemma, which is in the Appendix, is similar to the above worked example, except that the model has only one constraint instead of two constraints, and disjoint subsets of variables replace the univariate variables.

\subsection{Methodological obstacles for obtaining general tangent spaces}
\label{ssec:implicit_form}

In the worked example in the previous section, and indeed in all DAG models, both the tangent space $\mathcal{T}$ and its orthogonal complement $\mathcal{T}^\perp$ are expressible in closed form.  That is, one can construct arbitrary elements of $\mathcal{T}$ and $\mathcal{T}^\perp$ using Equation~\ref{eq:dag_tangent} and \ref{eq:dag_orthocomp}, respectively. This is particularly useful for inference, since if $\varphi$ is a known influence function for some target parameter $\psi(p)$, then
\begin{equation}
\begin{aligned}
    \varphi + h - \mathbb{E}[h \mid d, c, b] + \mathbb{E}[h \mid c, b] - \mathbb{E}[h \mid c, a] & \\
     - \mathbb{E}[h \mid b, a] + \mathbb{E}[h \mid a]
\end{aligned}
\end{equation}
is also an influence function of that target, where $h$ is an arbitrary function in $\mathcal{H}$. The key reason for the existence of the closed form expressions for $\mathcal{T}$ and $\mathcal{T}^\perp$ is the fact that the DAG model likelihood in Equation~\ref{eq:dag_llh} is also in closed form, as a single factorization representing all constraints in the local Markov property of the model.

On the other hand, in Markov models associated with ADMGs, no single factorization representing all model constraints is known for general state spaces.  For example, in the Bell scenario in Figure~\ref{fig:ci_models}d, a closed form factorization of $p_{\varepsilon}(a,b,c,d)$ representing both constraints $A \Perp B, D$ and $B \Perp A, C$ is unknown in general settings.
Instead, distributions satisfying both constraints may be obtained iteratively by solving the following system of
two equations
\begin{align}
    p_{\varepsilon}(a,b,c,d) &= p_{\varepsilon}(a) p_{\varepsilon}(b, d) p_{\varepsilon}(c | d,b,a), \label{eq:bell_llh_p1}
    \\
    p_{\varepsilon}(a,b,c,d) &= p_{\varepsilon}(b) p_{\varepsilon}(c, a) p_{\varepsilon}(d | c,b,a). \label{eq:bell_llh_p2}
\end{align}
The first equation forces the distribution to satisfy $A \Perp B, D$, while the second equation forces the likelihood to satisfy $B \Perp A, C$.
We will refer to them as the likelihood constraint equations. 
Mathematical objects defined as solutions of some system of equations, such as $p_{\varepsilon}$ above, is said to have \textit{implicit expressions}.
Subsequently, the score function of this parametric submodel is also defined implicitly as solutions to the following two equations
\begin{align}
    s(a,b,c,d) &= s(a) + s(b, d) + s(c | a,b,d), \label{eq:bell_score_p1}
    \\
    s(a,b,c,d) &= s(b) + s(a, c) + s(d | b,a,c), \label{eq:bell_score_p2}
\end{align}
which are referred to as the score constraint equations.
To see that these two equation solves for the same object, we can equivalently reexpress these component score functions in terms of the total score function $s(a,b,c,d)$ as
\begin{align}
    \mathbb{E}[s \mid a,b,d] &= \mathbb{E}[s \mid a] + \mathbb{E}[s \mid b,d], \label{eq:bell_total_score_p1}
    \\
    \mathbb{E}[s \mid b,a,c] &= \mathbb{E}[s \mid b] + \mathbb{E}[s \mid a, c]. \label{eq:bell_total_score_p2}
\end{align}

Similarly to Equations~\ref{eq:bell_llh_p1} and \ref{eq:bell_llh_p2}, solving this system of equations is an open problem in general settings. These problems are closely related to implicitization problems in algebraic geometry \citep{Cox.Little.ea.2015.IdealsVarieties} and parameterization problems of graphical model likelihoods \citep{Evans.Richardson.2019.SmoothIdentifiable, Shpitser.2023.LauritzenChenLikelihood},
which may be viewed as the problem of transforming an implicit definition of a likelihood surface via constraints into an explicit definition of said likelihood indexed by variationally independent parameters.

Absent such a likelihood definition, tangent spaces for graphical models such as the Bell scenario model will have an implicit expression of the form:
$\mathcal{T} = \{\text{all common solutions of Equations~\ref{eq:bell_total_score_p1} and \ref{eq:bell_total_score_p2}}\}$. It is unknown how to construct the projection operator onto this subspace. Consequently, the orthogonal complement for the tangent space also has an implicit form: $\mathcal{T}^\perp = \{\text{all } f \text{ orthogonal to all common solutions of \ref{eq:bell_total_score_p1} and \ref{eq:bell_total_score_p2}}\}$. In other words, as opposed to the DAG case, elements of $\mathcal{T}^\perp$ cannot be written down in closed form.


In the next section, we illustrate an alternative simple approach for obtaining the characterization of the orthogonal complement of the tangent space that bypasses the need for an explicit representation of the likelihood or the tangent space.


\section{The orthocomplement of the Tangent Space of a CI Model}
\label{sec:ci_tangent}

A general Markov model may be viewed as an intersection of single independence models, each with one CI constraint. Thus, the tangent space of the general Markov model is the intersection of all the single independence models's tangent spaces. Consequently, the orthogonal complement of this intersection is the span of all the orthogonal complements of the single independence models's tangent spaces. This geometric characterization of the tangent space's orthogonal complement for intersection models first appears in Lemma 1.7 of \citet{VanDerLaan.Robins.2003.UnifiedMethods}; our argument is an 
application of this observation in the context of Markov models.

\subsection{A worked example: the Bell scenario model}
\label{sec:ci_tangent:bell}

We define the following two models with only one constraint
\begin{align}
    \mathscr{P}_1 & = \{\text{all }p(a,b,c,d) \text{ satisfying } A \Perp B, D \}, \label{eq:bell_p1}
    \\
    \mathscr{P}_2 & = \{\text{all }p(a,b,c,d) \text{ satisfying } B \Perp A, C \}. \label{eq:bell_p2}
\end{align}
The Bell scenario model satisfying both constraints is then the intersection of these two models,
\begin{equation}
\label{eq:bell_model}
    \mathscr{P}^{(d)} = \mathscr{P}_1 \cap \mathscr{P}_2.
\end{equation}
Graphical models are commonly defined in this way, that is as intersections of simpler models 
\citep{Lauritzen.1996.GraphicalModels, Richardson.2003.MarkovProperties,Richardson.Evans.ea.2023.NestedMarkov}.

Both models $\mathscr{P}_1$ and $\mathscr{P}_2$ are \textit{single independence DAG models} -- DAG models defined by a single constraint.
Consequently, the technique described in Section~\ref{sec:dag_tangent} applies to these models. The parametric submodel likelihoods for $\mathscr{P}_1$ and $\mathscr{P}_2$ are Equations~\ref{eq:bell_llh_p1} and \ref{eq:bell_llh_p2}, respectively. The score functions for these parametric submodels are given by Equations~\ref{eq:bell_score_p1} and \ref{eq:bell_score_p2} (equivalently, Equations~\ref{eq:bell_total_score_p1} and \ref{eq:bell_total_score_p2}), respectively. By Lemma~\ref{lm:single_dag_model_orthocomp}, the tangent spaces for these models are
\begin{equation}
\begin{aligned}
    \mathcal{T}_1 & = \{ h - \mathbb{E}[h | a, b, d] + \mathbb{E}[h | a] + \mathbb{E}[h | b, d] : \forall h \in \mathcal{H} \},
    \\
    \mathcal{T}_2 & = \{ h - \mathbb{E}[h | b, a, c] + \mathbb{E}[h | b] + \mathbb{E}[h | a, c] : \forall h \in \mathcal{H} \},
\end{aligned}
\end{equation}
and their orthogonal complements are
\begin{equation}
\begin{aligned}
    \mathcal{T}_1^{\perp} & = \{ \mathbb{E}[h | a, b, d] - \mathbb{E}[h | a] - \mathbb{E}[h | b, d] : \forall h \in \mathcal{H} \}, \\
    \mathcal{T}_2^{\perp} & = \{ \mathbb{E}[h | b, a, c] - \mathbb{E}[h | b] - \mathbb{E}[h | a, c] : \forall h \in \mathcal{H} \}.
\end{aligned}
\end{equation}

The tangent space $\mathcal{T}$ of the Bell scenario model is the closure of all functions satisfying both Equations~\ref{eq:bell_total_score_p1} and \ref{eq:bell_total_score_p2}, $\mathcal{T} = \overline{\mathcal{T}_1 \cap \mathcal{T}_2}$. Since both $\mathcal{T}_1$ and $\mathcal{T}_2$ are closed, $\mathcal{T} = \mathcal{T}_1 \cap \mathcal{T}_2$, the intersection model's tangent space is the intersection of tangent spaces. For detail, see the proof of Theorem~\ref{thm:general_ci_model_orthocomp} in the Appendix. This implies (\citep{VanDerLaan.Robins.2003.UnifiedMethods}, Lemma 1.7)

\begin{lemma}
    \label{lm:bell_model_orthocomp}
    The orthogonal complement of the tangent space of the Bell scenario model is the direct sum $\mathcal{T}^{\perp} = \mathcal{T}_1^{\perp} \oplus \mathcal{T}_2^{\perp}$, whose explicit expression is
    \begin{equation}
    \label{eq:bell_model_orthocomp}
    \begin{aligned}
        \mathcal{T}^{\perp} & = \{ \mathbb{E}[h | a, d, b] - \mathbb{E}[h | a] - \mathbb{E}[h | d, b]
        \\
        + & \mathbb{E}[g | b, c, a] - \mathbb{E}[g | b] - \mathbb{E}[g | c, a] : \forall g,h \in \mathcal{H} \}.
    \end{aligned}
    \end{equation}
\end{lemma}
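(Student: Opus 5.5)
The plan has three parts: identify the Bell model's tangent space with $\mathcal{T}_1 \cap \mathcal{T}_2$, take orthogonal complements using the standard Hilbert-space identity for intersections of closed subspaces, and then substitute the explicit forms of $\mathcal{T}_1^\perp$ and $\mathcal{T}_2^\perp$ from Lemma~\ref{lm:single_dag_model_orthocomp}.

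\textbf{Step 1: the tangent space is the intersection.} Any parametric submodel of $\mathscr{P}^{(d)} = \mathscr{P}_1 \cap \mathscr{P}_2$ is a submodel of both $\mathscr{P}_1$ and $\mathscr{P}_2$. Its score therefore lies in $\mathcal{T}_1 \cap \mathcal{T}_2$, and since that set is closed, $\mathcal{T} \subseteq \mathcal{T}_1 \cap \mathcal{T}_2$.

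For the reverse inclusion, take $f \in \mathcal{T}_1 \cap \mathcal{T}_2$. We may assume $f$ is bounded, since bounded elements are dense in both projection ranges: the projections $\Pi(\cdot\mid\mathcal{T}_j)$ are built from conditional expectations and preserve boundedness. Set $p_\varepsilon = p_0(1+\varepsilon f)$. Then the conditional densities are linear in $\varepsilon$. For example,
\[
p_\varepsilon(a,b,d) = p_0(a,b,d)\bigl(1+\varepsilon\,\mathbb{E}[f\mid a,b,d]\bigr).
\]
Equation~\ref{eq:bell_total_score_p1} says $\mathbb{E}[f\mid a,b,d] = \mathbb{E}[f\mid a] + \mathbb{E}[f\mid b,d]$. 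This gives
\[
p_\varepsilon(a,b,d) = p_\varepsilon(a)\,p_\varepsilon(b,d)
\]
exactly up to a second-order term $\varepsilon^2\,\mathbb{E}[f\mid a]\,\mathbb{E}[f\mid b,d]$, which is where the main obstacle lies.

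\textbf{Main obstacle: the second-order term.} To remove it, I would use the path
\[
p_\varepsilon \propto p_0(a,b,d)\bigl(1+\varepsilon\,\mathbb{E}[f\mid a]\bigr)\bigl(1+\varepsilon\,\mathbb{E}[f\mid b,d]\bigr)\cdot\bigl(\text{conditional factor for } c\bigr),
\]
but the two CIs must hold simultaneously. The cleaner route is to build the path for both constraints through the $\varepsilon$-linear mixture above and use that the Bell CIs are equivalent to the multilinear moment identities. Alternatively, I would appeal directly to the statement in the text that $\mathcal{T} = \overline{\mathcal{T}_1\cap\mathcal{T}_2} = \mathcal{T}_1 \cap \mathcal{T}_2$, whose proof is deferred to Theorem~\ref{thm:general_ci_model_orthocomp}. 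I would rely on that result and treat this step as given.

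\textbf{Step 2: complement of an intersection.} For closed subspaces $\mathcal{T}_1, \mathcal{T}_2$ of $\mathcal{H}$,
\[
(\mathcal{T}_1\cap\mathcal{T}_2)^\perp = \overline{\mathcal{T}_1^\perp + \mathcal{T}_2^\perp}.
\]
Indeed, $(\mathcal{T}_1^\perp+\mathcal{T}_2^\perp)^\perp = \mathcal{T}_1\cap\mathcal{T}_2$, and taking complements again gives the closure. This is Lemma 1.7 of \citet{VanDerLaan.Robins.2003.UnifiedMethods}. The "direct sum" in the statement should be read as this (closed) sum: $\mathcal{T}_1^\perp$ and $\mathcal{T}_2^\perp$ need not be orthogonal. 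I would remark that equality without closure requires the sum to be closed. With $\mathcal{H}$ understood as in the paper, the set of all $\Lambda_1 h + \Lambda_2 g$ is stated modulo closure.

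\textbf{Step 3: explicit form.} By Lemma~\ref{lm:single_dag_model_orthocomp}, applied with $\mathbf{X}=A$, $\mathbf{Y}=\{B,D\}$, $\mathbf{Z}=\emptyset$, we have $\mathcal{T}_1^\perp = \{\mathbb{E}[h\mid a,b,d]-\mathbb{E}[h\mid a]-\mathbb{E}[h\mid b,d]\}$. The term $+\mathbb{E}[h]$ vanishes because $h$ has mean zero. Applying the lemma similarly with $\mathbf{X}=B$, $\mathbf{Y}=\{A,C\}$, $\mathbf{Z}=\emptyset$ gives $\mathcal{T}_2^\perp$. Summing an arbitrary element of each, with independent choices $h, g \in \mathcal{H}$, yields Equation~\ref{eq:bell_model_orthocomp}.
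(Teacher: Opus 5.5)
Your three steps follow the paper's own proof. The paper likewise takes $\mathcal{T}=\mathcal{T}_1\cap\mathcal{T}_2$ as given, passes to orthogonal complements, and reads off $\mathcal{T}_1^\perp$ and $\mathcal{T}_2^\perp$ from Lemma~\ref{lm:single_dag_model_orthocomp} with $\mathbf{Z}=\emptyset$.

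Where you deviate, in Step 2, you are right and the paper is not. The paper drops the closure on the grounds that $\mathcal{T}_1^\perp$ and $\mathcal{T}_2^\perp$ are closed. But a sum of two closed subspaces need not be closed in infinite dimensions. Nor is the sum direct, since $\mathcal{T}_1^\perp\cap\mathcal{T}_2^\perp$ contains every $w(a,b)$ with $\mathbb{E}[w\mid a]=\mathbb{E}[w\mid b]=0$. Both arguments therefore prove only $\mathcal{T}^\perp=\overline{\mathcal{T}_1^\perp+\mathcal{T}_2^\perp}$. The unclosed set in Equation~\ref{eq:bell_model_orthocomp} additionally needs a positive Friedrichs angle between the two complements.
\begin{itemize}
\item This is automatic when every variable has finitely many levels.
\item It also holds whenever the maximal correlation of $C$ and $D$ given $(A,B)$ is at most some $\rho<1$. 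An $m\in\mathcal{T}_1^\perp$ orthogonal to the intersection has $\mathbb{E}[m\mid a,b]=0$, because $\mathbb{E}[m\mid a,b]$ itself lies in the intersection. The same holds for $n\in\mathcal{T}_2^\perp$, so $|\langle m,n\rangle|\le\rho\|m\|\|n\|$.
\item It can fail, though. Take $A,B$ independent binary, and $(C,D)$ independent of $(A,B)$ with maximal correlation one but no nonconstant common function. An example is a law on the staircase $\{(k,k),(k+1,k):k\ge 1\}$ whose off-diagonal masses are negligible relative to the adjacent diagonal ones. Then $\alpha(a)\beta(b)\eta(d)\in\mathcal{T}_1^\perp$ and $\alpha(a)\beta(b)\gamma(c)\in\mathcal{T}_2^\perp$, with all factors centered, are orthogonal to the intersection. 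Their cosine is $\mathrm{corr}(\eta(D),\gamma(C))$, which can be pushed to one, so the sum is not closed.
\end{itemize}
In that example $\mathcal{T}\subseteq\mathcal{T}_1\cap\mathcal{T}_2$ already forces $\mathcal{T}^\perp\supseteq\overline{\mathcal{T}_1^\perp+\mathcal{T}_2^\perp}$, so the literal statement fails for that $p_0$. Your ``modulo closure'' reading is therefore the correct one, not a gap in your argument.

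Step 1 deserves the same scrutiny you gave Step 2. Deferring to Theorem~\ref{thm:general_ci_model_orthocomp} does not close it. That proof uses precisely the path $p_0(1+\varepsilon f)$ whose $\varepsilon^2$ defect you identified, and the lemma's own proof simply assumes the identity. For the Bell model there is a direct repair.
\begin{itemize}
\item From $f\in\mathcal{T}_1$ and $D\Perp A\mid B$ one gets $\mathbb{E}[f\mid a,b]=\mathbb{E}[f\mid a]+\mathbb{E}[f\mid b]$. Hence $f=\mathbb{E}[f\mid a]+\mathbb{E}[f\mid b]+r$ with $\mathbb{E}[r\mid a,b]=0$.
\item The two defining identities then give $\mathbb{E}[r\mid a,b,c]=\mathbb{E}[f\mid a,c]-\mathbb{E}[f\mid a]$ and $\mathbb{E}[r\mid a,b,d]=\mathbb{E}[f\mid b,d]-\mathbb{E}[f\mid b]$.
\item For bounded $f$, use the path $p_0(a)(1+\varepsilon\mathbb{E}[f\mid a])\,p_0(b)(1+\varepsilon\mathbb{E}[f\mid b])\,p_0(c,d\mid a,b)(1+\varepsilon r)$. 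It has score $f$ and keeps $A\Perp B$.
\item Along it, $p_\varepsilon(c\mid a,b)=p_0(c\mid a)\bigl(1+\varepsilon(\mathbb{E}[f\mid a,c]-\mathbb{E}[f\mid a])\bigr)$ is free of $b$. Likewise $p_\varepsilon(d\mid a,b)$ is free of $a$.
\item These three facts are equivalent to the two Bell constraints, so the path stays in the model.
\end{itemize}
Unbounded $f$ still require an approximation argument. You invoke density of bounded functions in each $\mathcal{T}_j$ separately, but that does not give density in $\mathcal{T}_1\cap\mathcal{T}_2$.
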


\begin{proof}
Since $\mathcal{T} = \mathcal{T}_1 \cap \mathcal{T}_2$, elements of $\mathcal{T}$ must be orthogonal to both $\mathcal{T}_1^\perp$ and $\mathcal{T}_2^\perp$. This implies the span $\overline{\mathcal{T}_1^\perp \oplus \mathcal{T}_2^\perp}$ is in the orthogonal complement of $\mathcal{T}$. Since both $\mathcal{T}_1^\perp$ and $\mathcal{T}_2^\perp$ are closed, $\overline{\mathcal{T}_1^\perp \oplus \mathcal{T}_2^\perp} = \mathcal{T}_1^\perp \oplus \mathcal{T}_2^\perp \subseteq \mathcal{T}^\perp$. For other direction, pick any $f \in \left(\mathcal{T}_1^\perp \oplus \mathcal{T}_2^\perp\right)^\perp$. Then $f$ must be orthogonal to just $\mathcal{T}_1^\perp$, so $f \in \mathcal{T}_1$. Similarly, $f \in \mathcal{T}_2$, and therefore $f \in \mathcal{T}$. We just show $\left(\mathcal{T}_1^\perp \oplus \mathcal{T}_2^\perp\right)^\perp \subseteq \mathcal{T}$, and by taking the orthogonal complement we get $\mathcal{T}^\perp \subseteq \mathcal{T}_1^\perp \oplus \mathcal{T}_2^\perp$. Therefore, $\mathcal{T}^\perp = \mathcal{T}_1^\perp \oplus \mathcal{T}_2^\perp$. Finally, the explicit expression for $\mathcal{T}^\perp$ follows from the definition of direct sum and the explicit expressions for $\mathcal{T}_1^\perp$ and $\mathcal{T}_2^\perp$. 
\end{proof}

Consequently, if $\varphi$ is a known influence function of a target parameter in this model, then
\begin{equation}
\label{eqn:bell-orth}
\begin{aligned}
    \varphi & + \mathbb{E}[h \mid a, b, d] - \mathbb{E}[h \mid a] - \mathbb{E}[h \mid b, d]
    \\
    & + \mathbb{E}[g \mid b, a, c] - \mathbb{E}[g \mid b] - \mathbb{E}[g \mid a, c]
\end{aligned}
\end{equation}
is also an influence function of that target parameter, where $g,h$ are arbitrary functions in $\mathcal{H}$.
Equation~\ref{eqn:bell-orth} thus characterizes the class of influence functions for any target parameter in the Bell scenario model.

\subsection{Main Result}
\label{sec:ci_tangent:thm}

The result for the Bell scenario model described in the previous section can easily be extended to any Markov model, including all mentioned graphical models associated to UGs, CGs, and ADMGs. Consider a general Markov model defined by a list of $K \geq 1$ marginal and conditional independence constraints, as model without any constraint ($K=0$) is the saturated model described in \citet{Tsiatis.2006.SemiparametricTheory}. By definition, this Markov model is the intersection of $K$ single independence DAG models defined by a single constraint $\mathscr{P} = \mathscr{P}_1 \cap \ldots \cap \mathscr{P}_K$. The tangent space is the intersection of the tangent spaces of these single independence models, $\mathcal{T} = \mathcal{T}_1 \cap \ldots \cap \mathcal{T}_K$, and its orthogonal complement is the direct sum $\mathcal{T}^\perp = \mathcal{T}_1^\perp \oplus \ldots \oplus \mathcal{T}_K^\perp$.
\begin{theorem}
\label{thm:general_ci_model_orthocomp}
    Suppose $\mathscr{P}$ is a general semi-parametric Markov model defined by $K \geq 1$ conditional independence constraints $\mathbf{X}_i \Perp \mathbf{Y}_i \mid \mathbf{Z}_i$, where $i = 1, \ldots, K$. For each $i$, let $\mathscr{P}_i$ be the single independence DAG model defined by the $i$-th constraint. The orthogonal projection operator $\Pi(\cdot \mid \mathcal{T}_i^\perp): \mathcal{H} \mapsto \mathcal{H}$ sends each function $h \in \mathcal{H}$ to the orthogonal complement $\mathcal{T}_i^\perp$ of the tangent space $\mathcal{T}_i$ of $\mathscr{P}_i$,
    \begin{equation}
    \label{eq:proj_operator_single_dag_model}
    \begin{aligned}
        \Pi(h \mid \mathcal{T}_i^\perp) (\mathbf{x}_{i}, \mathbf{y}_{i}, \mathbf{z}_{i})
        = \mathbb{E}[h | \mathbf{x}_{i}, \mathbf{y}_{i}, \mathbf{z}_{i}] - \mathbb{E}[h | \mathbf{x}_{i}, \mathbf{z}_{i}] &
        \\
        - \mathbb{E}[h | \mathbf{y}_{i}, \mathbf{z}_{i}]
        + \mathbb{E}[h | \mathbf{z}_{i}]&.
    \end{aligned}
    \end{equation}
    The orthogonal complement of the tangent space of $\mathscr{P}$ is the direct sum of these orthogonal complements,
    \begin{equation}
    \label{eq:dag_orthocomp_general}
    \begin{aligned}
        \mathcal{T}^{\perp} = \left\{ \sum_{i = 1}^K \Pi(h_i \mid \mathcal{T}_i^\perp) \: : \: \forall h_1, \ldots, h_K \in \mathcal{H} \right\}.
    \end{aligned}
    \end{equation}
\end{theorem}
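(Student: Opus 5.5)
The argument has three steps: identify each $\mathcal{T}_i^\perp$ through its projection, show $\mathcal{T}=\bigcap_i \mathcal{T}_i$, and then take orthocomplements exactly as in the proof of Lemma~\ref{lm:bell_model_orthocomp}.

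\emph{Step 1: the projection onto $\mathcal{T}_i^\perp$.} Lemma~\ref{lm:single_dag_model_orthocomp} already gives $\mathcal{T}_i$ and $\mathcal{T}_i^\perp$ for each single independence DAG model $\mathscr{P}_i$, written as images of the operator
\[
P_i(h) = \mathbb{E}[h \mid \mathbf{x}_i,\mathbf{y}_i,\mathbf{z}_i] - \mathbb{E}[h \mid \mathbf{x}_i,\mathbf{z}_i] - \mathbb{E}[h \mid \mathbf{y}_i,\mathbf{z}_i] + \mathbb{E}[h \mid \mathbf{z}_i].
\]
It remains to check that $P_i$ is the orthogonal projection onto $\mathcal{T}_i^\perp$, i.e.\ that it is self-adjoint and idempotent.
\begin{itemize}
\item Self-adjointness follows because each conditional expectation is self-adjoint in $L^2(p_0)$.
\item Idempotence uses the $p_0$-independence $\mathbf{X}_i \Perp \mathbf{Y}_i \mid \mathbf{Z}_i$. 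It gives $\mathbb{E}[\mathbb{E}[h\mid \mathbf{x}_i,\mathbf{z}_i]\mid \mathbf{y}_i,\mathbf{z}_i] = \mathbb{E}[h\mid\mathbf{z}_i]$, and the other mixed conditional expectations collapse in the same way.
\item Together these show $\mathbb{E}[P_i(h)\mid \mathbf{x}_i,\mathbf{z}_i] = 0 = \mathbb{E}[P_i(h)\mid \mathbf{y}_i,\mathbf{z}_i]$, so $P_i(P_i h) = P_i h$.
\end{itemize}
It follows that $P_i$ is a projection and its range, which is closed, equals $\mathcal{T}_i^\perp$. This proves \eqref{eq:proj_operator_single_dag_model}.

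\emph{Step 2: $\mathcal{T} = \bigcap_i \mathcal{T}_i$.} This is the substantive step and the one I expect to be the main obstacle.
\begin{itemize}
\item \emph{Inclusion $\subseteq$.} Any parametric submodel of $\mathscr{P}$ lies in every $\mathscr{P}_i$, so its score lies in every $\mathcal{T}_i$. Hence $\mathcal{T} \subseteq \bigcap_i \mathcal{T}_i$, and the right side is closed as an intersection of closed sets.
\item \emph{Inclusion $\supseteq$.} Take $s \in \bigcap_i \mathcal{T}_i$ and first assume $s$ is bounded. By Lemma~\ref{lm:single_dag_model_orthocomp}, $s \in \mathcal{T}_i$ exactly when $P_i(s)=0$.
\item Build the submodel $p_\varepsilon = p_0(1+\varepsilon s)$ and verify that each CI holds in $p_\varepsilon$. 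Because the perturbation is linear in $\varepsilon$, every marginal of $p_\varepsilon$ has the form $p_0(\cdot)(1+\varepsilon\,\mathbb{E}[s\mid\cdot])$.
\item The factorization $p_\varepsilon(\mathbf{x},\mathbf{y},\mathbf{z})\,p_\varepsilon(\mathbf{z}) = p_\varepsilon(\mathbf{x},\mathbf{z})\,p_\varepsilon(\mathbf{y},\mathbf{z})$ then reduces to an identity at order $\varepsilon$ and one at order $\varepsilon^2$.
\item The order-$\varepsilon$ identity is $P_i(s)=0$. The order-$\varepsilon^2$ identity, $\mathbb{E}[s\mid\mathbf{x},\mathbf{y},\mathbf{z}]\,\mathbb{E}[s\mid\mathbf{z}] = \mathbb{E}[s\mid\mathbf{x},\mathbf{z}]\,\mathbb{E}[s\mid\mathbf{y},\mathbf{z}]$, does \emph{not} follow automatically.
\item To get around this, I would first try an exponential-tilt submodel instead, $p_\varepsilon \propto p_0\exp(\varepsilon s)$, or adapt $p_\varepsilon$ separately for each constraint. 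If that fails, I would fall back on the regularity convention the paper already uses: the model is taken to be the set of distributions whose scores satisfy all the score constraint equations. Under that convention the tangent space is, by definition, the set of common solutions of those equations, so $\mathcal{T}=\bigcap_i\mathcal{T}_i$ directly.
\item Finally, extend from bounded $s$ to general $s$ by density and closedness of $\mathcal{T}$.
\end{itemize}

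\emph{Step 3: the orthocomplement.} With $\mathcal{T} = \bigcap_i \mathcal{T}_i$, use $(\bigcap_i \mathcal{T}_i)^\perp = \overline{\sum_i \mathcal{T}_i^\perp}$, proved by the same two-inclusion argument as in Lemma~\ref{lm:bell_model_orthocomp}, extended from two subspaces to $K$. Substituting $\mathcal{T}_i^\perp = \{P_i(h_i) : h_i \in \mathcal{H}\}$ from Step 1 gives \eqref{eq:dag_orthocomp_general}, up to closure.

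The $\mathcal{T}_i^\perp$ need not be mutually orthogonal, so the finite sum of closed subspaces need not be closed in general. I would handle this either by reading the statement as holding up to $L^2$-closure, which is harmless for characterizing influence functions, or by giving a closed-range argument in the case of finite state spaces.
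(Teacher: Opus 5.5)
\textbf{Where your proposal has a gap.} Your outline is the paper's. Step~1 is Lemma~\ref{lm:single_dag_model_orthocomp}, and your self-adjoint/idempotent check is correct. The inclusion $\mathcal{T}\subseteq\bigcap_i\mathcal{T}_i$ is argued the same way. Step~3 is the $K$-fold version of the proof of Lemma~\ref{lm:bell_model_orthocomp}. The genuine gap is the reverse inclusion $\bigcap_i\mathcal{T}_i\subseteq\mathcal{T}$, which you do not establish.

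Your diagnosis of the obvious route is right. For $s=f_X+f_Y$ with $f_X\in\mathcal{T}_{\mathbf{X}\mid\mathbf{Z}}$ and $f_Y\in\mathcal{T}_{\mathbf{Y}\mid\mathbf{Z}}$, the path $p_0(1+\varepsilon s)$ gives $p_\varepsilon(\mathbf{x},\mathbf{y},\mathbf{z})p_\varepsilon(\mathbf{z})-p_\varepsilon(\mathbf{x},\mathbf{z})p_\varepsilon(\mathbf{y},\mathbf{z})=-\varepsilon^2 f_Xf_Y\,p_0(\mathbf{x},\mathbf{z})p_0(\mathbf{y},\mathbf{z})$, which is nonzero in general. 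This is exactly the path the paper's proof uses. It asserts that each constraint holds for $p_0(1+\varepsilon f)$ with $f\in\bigcap_i\mathcal{T}_i$, ``following the proof of Lemma~\ref{lm:single_dag_model_orthocomp}''. But that proof only checks the constraint when $f$ lies in a single one of the four orthogonal summands of $\mathcal{T}_i$, and then uses linearity of $\mathcal{T}$. That route is unavailable for an intersection, because the four-way decomposition differs from constraint to constraint. So the paper does not actually supply this step either, and you need a genuinely new ingredient.

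\textbf{Why your remedies do not close it.} The exponential tilt works only when every constraint is saturated, i.e.\ $\mathbf{X}_i\cup\mathbf{Y}_i\cup\mathbf{Z}_i=\mathbf{V}$ for all $i$, as in the UG local Markov property. Then $s\in\mathcal{T}_i$ means $s=\alpha_i(\mathbf{x}_i,\mathbf{z}_i)+\beta_i(\mathbf{y}_i,\mathbf{z}_i)$, and $p_0e^{\varepsilon s}$ keeps the product form in $(\mathbf{x}_i,\mathbf{z}_i)$ and $(\mathbf{y}_i,\mathbf{z}_i)$ for every $i$ simultaneously. As soon as some $\mathbf{W}_i\neq\emptyset$ (e.g.\ the Bell, CG and bidirected-square examples), it breaks. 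Write $s=s_W+s_X+s_Y+s_Z$ for constraint $i$. The $(\mathbf{x}_i,\mathbf{y}_i,\mathbf{z}_i)$-margin of $p_0e^{\varepsilon s}$ then acquires the factor $\mathbb{E}_0[e^{\varepsilon s_W}\mid\mathbf{x}_i,\mathbf{y}_i,\mathbf{z}_i]=1+\tfrac{\varepsilon^2}{2}\mathbb{E}_0[s_W^2\mid\mathbf{x}_i,\mathbf{y}_i,\mathbf{z}_i]+\cdots$. This factor does not factorize in general, so the tilt fails at order $\varepsilon^2$ even for a single constraint.

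Adapting the path per constraint does not help either. For example, $p_0(1+\varepsilon s_W)(1+\varepsilon s_X)(1+\varepsilon s_Y)(1+\varepsilon s_Z)$ does lie in $\mathscr{P}_i$, but this produces $K$ different paths, each certifying only $s\in\mathcal{T}_i$, which is the hypothesis. One path respecting all $K$ factorizations at once is exactly the likelihood-constraint-equation problem of Section~\ref{ssec:implicit_form}.

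The ``convention'' fallback is circular. $\mathcal{T}$ is defined from submodels of $\mathscr{P}$, and its identification with the common solutions of the score constraint equations is the very claim being proved; the main text defers it to this proof.

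The tangent cone of an intersection of smooth models is in general only contained in the intersection of their tangent spaces. So this step needs an additional argument, for example a transversality (constant-rank) condition at $p_0$, or model-specific structure such as a smooth parametrization together with a dimension count.

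\textbf{On Step~3.} Your caution about closure is warranted. The paper claims that the span of the closed but non-orthogonal $\mathcal{T}_i^\perp$ equals their algebraic sum. That is automatic only in finite dimensions or for mutually orthogonal summands. In general your argument yields $\mathcal{T}^\perp=\overline{\sum_i\mathcal{T}_i^\perp}$, which is weaker than the displayed equality.
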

\begin{proof}[Proof Sketch]
Applying Lemma~\ref{lm:single_dag_model_orthocomp} to the single independence model $\mathcal{P}_i$, any element of the tangent space's orthogonal complement $\mathcal{T}_i^\perp$ is $\Pi(h_i \mid \mathcal{T}_i^\perp)$, with arbitrary $h_i \in \mathcal{H}$. A general version of Lemma~\ref{lm:bell_model_orthocomp} shows that $\mathcal{T}^\perp = \mathcal{T}_1^\perp \oplus \ldots \oplus \mathcal{T}_K^\perp$. Hence, any element of $\mathcal{T}^{\perp}$ is a sum $\sum_{i = 1}^K \Pi(h_i \mid \mathcal{T}_i^\perp)$ of elements of $\{\mathcal{T}_i^\perp\}_{i=1}^K$, where $\{h_i \in \mathcal{H}\}_{i=1}^K$ are 
arbitrary (not necessarily equal) functions.
\end{proof}

This theorem applies to all Markov models -- models defined by a list of CI constraints -- which are not necessarily graphical models. Furthermore, for graphical Markov models such as those associated with UGs, DAGs, CGs, BGs, and ADMGs, the list of CI constraints can be obtained from global, local or any other Markov property, as long as this list fully describes the model. This means that it is possible to obtain several syntatically different representations of the same orthocomplement of the tangent space of the model by using different Markov properties. For example, the local Markov property for DAGs listed in Section~\ref{sec:ci_models} differs from the ordered local Markov property for DAGs, which is used in \citet{Bhattacharya.Nabi.ea.2022.SemiparametricInference, Tsiatis.2006.SemiparametricTheory, Rotnitzky.Smucler.2020.EfficientAdjustment}. Despite the different expressions, the orthogonal complement of the tangent space is the same.

Equation~\ref{eq:dag_orthocomp_general} suggests an operator $\Gamma: \mathcal{H} \mapsto \mathcal{T}^\perp$ sending functions in $\mathcal{H}$ to functions in the orthogonal complement $\mathcal{T}^\perp$, defined by $\Gamma(f) = \sum_{i = 1}^K \Pi(f \mid \mathcal{T}_i^\perp)$. For general Markov model, it is unknown whether this operator is the projection operator onto $\mathcal{T}^\perp$. The Appendix shows
an example in the Bell scenario case.
Therefore, finding the projection operator onto a general Markov model's tangent space or its orthogonal complement is still an open problem. Since the efficient influence function (EIF) is $\varphi - \Pi(\varphi \mid \mathcal{T}^\perp)$, where $\varphi$ is any influence function and $\Pi(\cdot \mid \mathcal{T}^\perp)$ is the projection onto the orthogonal complement $\mathcal{T}^\perp$ of the tangent space, the exact form of the efficient influence function for a target parameter in a general Markov model is also an open problem.

A direct consequence of Theorem~\ref{thm:general_ci_model_orthocomp} is the explicit characterization of the class of all influence functions of any target parameter $\psi(p)$ in a general Markov model. If $\varphi$ is a known influence function of $\psi(p)$, then
\begin{equation}
    \varphi + \sum_{i = 1}^K \Pi(h_i \mid \mathcal{T}_i^\perp)
\end{equation}
is also an influence function of $\psi(p)$, for any $\{h_i \in \mathcal{H}\}_{i=1}^K$. An explicit example is shown in the next section.

\section{Applications}
\label{sec:application}

\subsection{The Class of Influence Functions}
\label{sec:application:class}

In this subsection, we illustrate how our result streamline the derivations of the class of influence functions for the conditional mean parameter $\psi(p) = \mathbb{E}[D \mid A = a_0]$ with a fixed value $a_0$, for all the Markov models corresponding to the graphs in Figure~\ref{fig:ci_models}. Extensions to other target parameters are trivial. For example, if the target parameter is the adjustment formula $\mathbb{E}[\mathbb{E}[D \mid A = a_0, C]]$, simply replace $\varphi$ below by the augmented inverse probability weighting (AIPW) functional.

\paragraph{The saturated model $\mathscr{P}^{\text{all}}_{ABCD}$ in Figure~\ref{fig:ci_models}f.} The saturated model $\mathscr{P}^{\text{all}}_{ABCD}$ over four variables $\{A,B,C,D\}$ is a saturated extension over the saturated model $\mathscr{P}^{\text{all}}_{AD}$ with only two variables $\{A,D\}$, with $B$ and $C$ play the role of auxillary variables. Following \citet{Tsiatis.2006.SemiparametricTheory}, Chapter 5.5, the set of influence functions in $\mathscr{P}^{\text{all}}_{ABCD}$ is exactly the set of influence functions in $\mathscr{P}^{\text{all}}_{AD}$. This means, for the purpose of estimating $\mathbb{E}[D \mid A = a_0]$, we can safely ignore $B$ and $C$ as these variables do not provide efficiency gain. Moreover, the derivation of the influence functions in $\mathscr{P}^{\text{all}}_{AD}$ involves only two variables and hence is much simpler.

Any saturated model has $\{0\}$ as its orthogonal complement of the tangent space, so there is only one influence function,
\begin{equation}
\label{eq:if_complete_dag}
    \varphi(a,b,c,d) = \frac{\mathbb{I}(a = a_0)}{p(a)}(d - \mathbb{E}[D | a]).
\end{equation}
All other Markov models corresponding to the graphs in Figure~\ref{fig:ci_models}a-e are submodels of $\mathscr{P}^{\text{all}}_{ABCD}$. Therefore, $\varphi$ is also an influence function in any of these models. The other ingredient for the derivation the class of influence functions in these models is the orthogonal complement of the tangent space given by Theorem~\ref{thm:general_ci_model_orthocomp}.

\paragraph{The DAG model $\mathscr{P}^{(a)}$ in Figure~\ref{fig:ci_models}a.}
Let $\mathscr{P}_1$ and $\mathscr{P}_2$ be the simple models corresponding to the constraints $C \Perp B \mid A$ and $D \Perp A \mid C,B$, respectively. Let $\mathcal{T}_1^\perp$ and $\mathcal{T}_2^\perp$ denote the orthocomplements of their respective tangent spaces. The projection operators onto these orthocomplements are
\begin{equation*}
\begin{aligned}
    \Pi_a(g \mid \mathcal{T}_1^\perp) = & \mathbb{E}[g | c,b,a] - \mathbb{E}[g | c,a] - \mathbb{E}[g | b,a] + \mathbb{E}[g | a],
    \\
    \Pi_a(h \mid \mathcal{T}_2^\perp) = & h - \mathbb{E}[h | d,c,b] - \mathbb{E}[h | a,c,b] + \mathbb{E}[h | c,b].
\end{aligned}
\end{equation*}
Therefore, the orthogonal complement of the tangent space $\mathcal{T}$ for this DAG model is, according to Equation~\ref{eq:dag_orthocomp_general}
\begin{equation}
\begin{aligned}
\label{eq:dag_orthocomp_2}
    \mathcal{T}^\perp = \{\Pi_a (g \mid \mathcal{T}_1^\perp) + \Pi_a (h \mid \mathcal{T}_2^\perp): \forall g,h \in \mathcal{H}\},
\end{aligned}
\end{equation}
This expression differs from the one given in Equation~\ref{eq:dag_orthocomp}. To reconcile the two, note that $\mathcal{T}^\perp$ is the direct sum of the orthogonal subspaces $\mathcal{T}_1^\perp$ and $\mathcal{T}_2^\perp$. This geometric property holds for general DAGs: the orthocomplements of the tangent spaces corresponding to distinct CIs in the local Markov property are pairwise orthogonal. Since the projection operators onto these subspaces are known -- namely, $\Pi_a (\cdot \mid \mathcal{T}_1^\perp)$ and $\Pi_a (\cdot \mid \mathcal{T}_2^\perp)$ -- the projection of a function $h \in \mathcal{H}$ onto $\mathcal{T}^\perp$ is given by $\Pi(h \mid \mathcal{T}^\perp) = \Pi_a (h \mid \mathcal{T}_1^\perp) + \Pi_a (h \mid \mathcal{T}_2^\perp)$. This shows that the two expressions for $\mathcal{T}^\perp$ in \ref{eq:dag_orthocomp} and \ref{eq:dag_orthocomp_2} describe the same subspace.

The class of influence functions for $\psi(p)$ in this model is $\{\varphi + \Pi_a (g \mid \mathcal{T}_1^\perp) + \Pi_a (h \mid \mathcal{T}_2^\perp): \forall g, h \in \mathcal{H}\}$.

\paragraph{The UG model $\mathscr{P}^{(b)}$ in Figure~\ref{fig:ci_models}b.}
$\mathscr{P}_1$ and $\mathscr{P}_2$ correspond to $C \Perp B \mid A, D$ and $D \Perp A \mid C, B$, respectively.
\begin{equation*}
\begin{aligned}
    \Pi_b (g \mid \mathcal{T}_1^\perp)
    = & g - \mathbb{E}[g \mid c,a,d] - \mathbb{E}[g | b,a,d] + \mathbb{E}[g | a, d],
    \\
    \Pi_b (h \mid \mathcal{T}_2^\perp)
    = & h - \mathbb{E}[h | d,c,b] - \mathbb{E}[h | a,c,b] + \mathbb{E}[h | c,b].
\end{aligned}
\end{equation*}
The class of influence functions for $\psi(p)$ in this model is $\{\varphi + \Pi_b (g \mid \mathcal{T}_1^\perp) + \Pi_b (h \mid \mathcal{T}_2^\perp): \forall g, h \in \mathcal{H}\}$.

\paragraph{The CG model $\mathscr{P}^{(c)}$ in Figure~\ref{fig:ci_models}c.}
$\mathscr{P}_1$ and $\mathscr{P}_2$ are the same models in the UG case, while $\mathscr{P}_3$ corresponds to $A \Perp B$.
The projection operator onto $\mathcal{T}_3^\perp$ is
\begin{equation*}
\begin{aligned}
    \Pi_c (f \mid \mathcal{T}_3^\perp) = \mathbb{E}[f \mid a,b] - \mathbb{E}[f \mid a] - \mathbb{E}[f \mid b].
\end{aligned}
\end{equation*}
The class of influence functions for $\psi(p)$ in this model is $\{\varphi + \Pi_b (g | \mathcal{T}_1^\perp) + \Pi_b (h | \mathcal{T}_2^\perp) + \Pi_c (f | \mathcal{T}_3^\perp): \forall f, g, h \in \mathcal{H}\}$.

\paragraph{The bidirected square model $\mathscr{P}^{(e)}$ in Figure~\ref{fig:ci_models}e.}
$\mathscr{P}_1$ and $\mathscr{P}_2$ correspond to $A \Perp D$ and $B \Perp C$, respectively.
\begin{equation*}
\begin{aligned}
    \Pi_d (g \mid \mathcal{T}_1^\perp) & = \mathbb{E}[g \mid a,d] - \mathbb{E}[g \mid a] - \mathbb{E}[g \mid d],
    \\
    \Pi_d (h \mid \mathcal{T}_2^\perp) & = \mathbb{E}[h \mid b,c] - \mathbb{E}[h \mid b] - \mathbb{E}[h \mid c].
\end{aligned}
\end{equation*}
The class of influence functions for $\psi(p)$ in this model is
$\{\varphi + \Pi_d (g \mid \mathcal{T}_1^\perp) + \Pi_d (h \mid \mathcal{T}_2^\perp): \forall g, h \in \mathcal{H}\}$.

\subsection{Improving Efficiency}
\label{sec:application:improve}

The variance of a RAL estimator $\hat{\psi}$ for a target parameter $\psi(p)$ is exactly the variance of the corresponding influence function $\varphi \in \mathcal{H}$. For influence functions have mean zero, the variance of $\varphi$ is its squared norm $\mathbb{E}[\varphi^2]$. Thus, an estimator $\hat{\psi}_2$ is at least as efficient as another estimator $\hat{\psi}_1$ precisely when the variance of the influence function $\varphi_2$ of $\hat{\psi}_2$ is equal or smaller than the variance of the influence functions $\varphi_1$ of $\hat{\psi}_1$, $\mathbb{E}[\varphi_2^2] \leq \mathbb{E}[\varphi_1^2]$ (\citet{Tsiatis.2006.SemiparametricTheory}, Chapter 3). The efficient influence function is simply the influence function with the smallest variance in the class of all influence functions of the model. Therefore, in order to construct more efficient estimators, the first step is finding influence functions with smaller variances.

Given an initial influence function $\varphi_0$ of a target parameter, our main result immediately implies an iterative method for constructing sequences of influence functions with non-increasing variances. As an example, suppose the Markov model has $K \geq 2$ CI constraints. Theorem~\ref{thm:general_ci_model_orthocomp} states that any $\sum_{i=1}^K \Pi(h_i \mid \mathcal{T}_i^\perp)$ is an element of $\mathcal{T}^\perp$. Therefore, $\Pi(\varphi_0 \mid \mathcal{T}_1^\perp) \in \mathcal{T}^\perp$. By Theorem 4.3 of \cite{Tsiatis.2006.SemiparametricTheory}, adding an element in $\mathcal{T}^\perp$ to an influence function yields another influence function of the same target, hence
\begin{equation}
    \varphi_1 := \varphi_0 - \Pi(\varphi_0 \mid \mathcal{T}_1^\perp)
\end{equation}
is an influence function. Moreover, since $\Pi(\varphi_0 \mid \mathcal{T}_1^\perp)$ is the orthogonal projection of $\varphi_0$ onto $\mathcal{T}_1^\perp$, the above $\varphi_1$ is the orthogonal projection of $\varphi_0$ onto $\mathcal{T}_1$. The Pythagorean theorem for Hilbert spaces (\citet{Tsiatis.2006.SemiparametricTheory}, Theorem 3.3) states that the projection of a vector cannot be longer than the vector itself, so $E[\varphi_1^2] \leq E[\varphi_0^2]$. This implies that $\varphi_1$ is an influence function with equal or smaller variance than $\varphi_0$. Next, consider
\begin{equation}
    \varphi_2 := \varphi_1 - \Pi(\varphi_1 \mid \mathcal{T}_2^\perp).
\end{equation}
By the same reasoning, $\varphi_2$ is an influence function with equal or smaller variance than $\varphi_1$. Repeating this process $M$ times, we obtain a sequence of influence functions $\varphi_0, \varphi_1, \varphi_2, \ldots, \varphi_{M}$, with the property that the variance of $\varphi_{m}$ is equal or smaller than the variance of $\varphi_{m-1}$, for all $m=1, \ldots, M$. The influence function $\varphi_m$ is referred to as an $m$-improvement of the initial $\varphi_0$. If we can construct an estimator $\hat{\psi}_{M}$ based on the influence function $\varphi_{M}$, then this estimator is at least as efficient as the estimator $\hat{\psi}_0$ based on the initial influence function $\varphi_0$.
In the Appendix, we give a concrete example of such a construction with $M=2$.

\begin{theorem}
\label{thm:efficiency_improvement}
    Consider a Markov model with $K \geq 1$ constraint(s), and let $\varphi_0$ be an influence function of a target parameter $\psi(p)$. For any integer sequence $i_1, i_2, \ldots, i_{M}$, where $i_m \in \{1, \ldots, K\}$ for all $m \in \{1, \ldots, M\}$, define
    \begin{equation}
    \begin{aligned}
        \varphi_m := \varphi_{m-1} - \Pi(\varphi_{m-1} \mid \mathcal{T}_{i_m}^\perp), \quad \forall m \in \{1, \ldots, M\}.
    \end{aligned}
    \end{equation}
    In words, $\varphi_m$ is the projection of $\varphi_{m-1}$ onto the tangent space $\mathcal{T}_{i_m}$ of the DAG model corresponding to the $i_m$-th constraint. Then $\varphi_0, \ldots, \varphi_{M}$ is a sequence of influence functions of $\psi(p)$ with non-increasing variance, meaning $E[\varphi_{m}^2] \leq E[\varphi_{m-1}^2]$ for all $m \in \{1, \ldots, M\}$.
\end{theorem}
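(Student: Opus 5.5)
The proof goes by induction on $m$. At each step we establish two facts: $\varphi_m$ is still an influence function of $\psi(p)$, and its variance has not increased. The base case $m=0$ holds by hypothesis. For the inductive step, suppose $\varphi_{m-1}$ is an influence function. Set $f_m := \Pi(\varphi_{m-1} \mid \mathcal{T}_{i_m}^\perp)$.

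\emph{Step 1: $f_m \in \mathcal{T}^\perp$.} Take $h_{i_m} = \varphi_{m-1}$ and $h_j = 0$ for all $j \neq i_m$ in Equation~\ref{eq:dag_orthocomp_general} of Theorem~\ref{thm:general_ci_model_orthocomp}. Each $\Pi(0 \mid \mathcal{T}_j^\perp)$ vanishes by linearity of the conditional-expectation formula in Equation~\ref{eq:proj_operator_single_dag_model}, so the sum reduces to $f_m$. Hence $f_m \in \mathcal{T}^\perp$. One point needs checking here: $\varphi_{m-1} \in \mathcal{H}$, i.e.\ it is mean zero and square integrable. This holds because influence functions lie in $\mathcal{H}$ by definition, and because the map $h \mapsto h - \Pi(h \mid \mathcal{T}_{i}^\perp)$ sends $\mathcal{H}$ into $\mathcal{H}$.

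\emph{Step 2: $\varphi_m$ is an influence function.} By Theorem 4.3 of \citet{Tsiatis.2006.SemiparametricTheory}, the class of influence functions is $\varphi \oplus \mathcal{T}^\perp$, so adding any element of $\mathcal{T}^\perp$ preserves the pathwise differentiability condition in Equation~\ref{eq:path_dev}. Concretely, $\langle \varphi_{m-1} - f_m, s\rangle = \langle \varphi_{m-1}, s\rangle$ for every score $s \in \mathcal{T}$. Therefore $\varphi_m = \varphi_{m-1} - f_m$ is an influence function.

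\emph{Step 3: the variance does not increase.} Because $\Pi(\cdot \mid \mathcal{T}_{i_m}^\perp)$ is the orthogonal projection onto the closed subspace $\mathcal{T}_{i_m}^\perp$, the residual $\varphi_m$ is orthogonal to $f_m$. The Pythagorean identity then gives
\[
E[\varphi_{m-1}^2] = E[\varphi_m^2] + E[f_m^2] \geq E[\varphi_m^2].
\]
The same decomposition shows that $\varphi_m = \Pi(\varphi_{m-1} \mid \mathcal{T}_{i_m})$, since $\mathcal{T}_{i_m}$ is closed and $(\mathcal{T}_{i_m}^\perp)^\perp = \mathcal{T}_{i_m}$. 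Finally, since influence functions have mean zero, the second moment equals the variance.

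The only step that needs care is confirming that the operator in Equation~\ref{eq:proj_operator_single_dag_model} really is the orthogonal projection onto $\mathcal{T}_{i}^\perp$. This is asserted in Theorem~\ref{thm:general_ci_model_orthocomp}, and it also follows from Lemma~\ref{lm:single_dag_model_orthocomp}: the operator is idempotent and self-adjoint, being an inclusion–exclusion of conditional expectations under the CI structure. We take it as given. Everything else is routine Hilbert-space reasoning, and induction over $m = 1, \ldots, M$ completes the proof.
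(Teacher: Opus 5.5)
Your proposal is correct and follows essentially the same route as the paper, which argues (in the discussion preceding the theorem) that $\Pi(\varphi_{m-1}\mid\mathcal{T}_{i_m}^\perp)\in\mathcal{T}^\perp$ by Theorem~\ref{thm:general_ci_model_orthocomp}, that subtracting it preserves the influence-function property by Theorem 4.3 of \citet{Tsiatis.2006.SemiparametricTheory}, and that the Pythagorean theorem gives the variance bound, and then simply repeats the step. Your induction makes that repetition explicit. Note also that Step 1 only needs the easy inclusion $\mathcal{T}_{i_m}^\perp\subseteq\mathcal{T}^\perp$ (from $\mathcal{T}\subseteq\mathcal{T}_{i_m}$), not the full direct-sum characterization.
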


If $i_{m+1} = i_m$, then $\varphi_{m+1} = \varphi_m$ is the same influence function, because $\varphi_m \in \mathcal{T}_{i_m}$ and so $\Pi(\varphi_{m} \mid \mathcal{T}_{i_{m+1}}^\perp) = 0$. Then, for single constraint models ($K=1$), there is only one improvement, $\varphi = \varphi_{0} - \Pi(\varphi_{0} \mid \mathcal{T}^\perp)$, which is exactly the efficient influence function in these models. In models with $K > 1$, the connection between the efficient influence function and $\varphi_M$ as $M \rightarrow \infty$ is an interesting question, yet outside the scope of our paper.


\section{
Conclusion}
\label{sec:discussion}

We derived an explicit expression for the orthogonal complement of the tangent space for Markov models -- statistical models defined using only marginal and conditional independence constraints.

We demonstrated two applications of our result in these models.
The first one is an explicit characterization of the class of influence functions for any target parameter. The second one is is an iterative method to improve efficiency of any initial RAL estimator of any target parameter.

Our method does not yield an explicit expression for the likelihood or the tangent space of these models, so these questions remain open. Another important open problem is the projection of a function $h \in \mathcal{H}$ onto 
the tangent space of these models. Derivation of such a projection is equivalent to the derivation of the efficient influence function in a semi-parametric Markov model.
Our method provides an important step towards achieving
this goal by providing the explicit form of the orthogonal complement of the tangent space.






\begin{acknowledgements} 
    This work was supported by the Office of Naval Research
(Grant No. N000142412701), the National Institutes of
Health (Grant No. 1R56AI191526-01 and Grant no. 1R01LM014800-01A1), and the National
Science Foundation CAREER Award (Grant No. 1942239).
\end{acknowledgements}

\bibliography{uai2026-template}

\newpage

\onecolumn

\title{A Characterization of the Orthocomplement of the Tangent Space of Semiparametric Markov Models\\(Supplementary Material)}
\maketitle

\appendix


\section{Additional Details about Semiparametric Theory}
\label{sec:additional_semi_param}

\paragraph{The set of all influence function and the orthogonal complement of the tangent space.}

As mentioned in the main paper, an influence function of the target parameter $\psi(p)$ is a solution of the pathwise differentiability condition, for all parametric submodel $p_{\varepsilon}$ (\citet{Tsiatis.2006.SemiparametricTheory}, Theorem 3.2 and 4.2)
\begin{equation*}
    \frac{\partial}{\partial \varepsilon} \psi(p_{\varepsilon}) \bigg|_{\varepsilon=0} = \int \varphi(\mathbf{v}) s(\mathbf{v}) p_0(\mathbf{v}) d \mathbf{v}.
    \quad \quad \quad \eqref{eq:path_dev}
\end{equation*}
Suppose $\varphi$ is an influence function of $\psi(p)$ and $\mathcal{T}^\perp$ is the orthogonal complement of the tangent space of the model. Let $f$ be any function in $\mathcal{T}^\perp$. Since the score function $s(\mathbf{v}) \in \mathcal{T}$, it is orthogonal to $f$, that is $\langle f, s \rangle = \int f(\mathbf{v}) s(\mathbf{v}) p_0(\mathbf{v}) d \mathbf{v} = 0$. Then $\varphi + f$ is also an influence function, because
\begin{equation}
    \frac{\partial}{\partial \varepsilon} \psi(p_{\varepsilon}) \bigg|_{\varepsilon=0} = \int \big[\varphi(\mathbf{v}) + f(\mathbf{v})\big] s(\mathbf{v}) p_0(\mathbf{v}) d \mathbf{v}.
\end{equation}
On the other hand, if $\tilde{\varphi}$ is any other influence function, meaning
\begin{equation}
    \frac{\partial}{\partial \varepsilon} \psi(p_{\varepsilon}) \bigg|_{\varepsilon=0} = \int \tilde{\varphi}(\mathbf{v}) s(\mathbf{v}) p_0(\mathbf{v}) d \mathbf{v}.
\end{equation}
Then, for all parametric submodel $p_{\varepsilon}$ (equivalently, for all score function $s(\mathbf{v}) \in \mathcal{T}$)
\begin{equation}
    0 = \int \big[\tilde{\varphi}(\mathbf{v}) - \varphi(\mathbf{v})\big] s(\mathbf{v}) p_0(\mathbf{v}) d \mathbf{v}.
\end{equation}
This means $\tilde{\varphi} - \varphi \in \mathcal{T}^\perp$, i.e., $\tilde{\varphi}$ is the sum of $\varphi$ and an element of the orthogonal complement of the tangent space. This result is Theorem 4.3 in \citet{Tsiatis.2006.SemiparametricTheory}, stating that: the set of all influence functions of a target parameter in a semiparametric model is the linear variety $\varphi \oplus \mathcal{T}^\perp := \{\varphi + f : f \in \mathcal{T}^\perp\}$, where $\varphi$ is any influence function and $\mathcal{T}^\perp$ is the orthogonal complement of the tangent space. Hence, knowing the orthogonal complement of the tangent space $\mathcal{T}^\perp$ is vital to characterize the set of all influence functions.

\paragraph{Variance of an influence function and efficiency of an estimator.} As stated in the main paper, the limiting distribution of a RAL estimator $\hat{\psi}_n$ of the target parameter $\psi$ is
\begin{equation*}
    \sqrt{n} (\hat{\psi}_n - \psi) \rightsquigarrow \mathcal{N}(0, \mathbb{E}[\varphi^2])
    \quad \quad \quad \eqref{eq:asymp_linear_clt}
\end{equation*}
where $\varphi$ is the influence function corresponding to $\hat{\psi}$ \citep{Tsiatis.2006.SemiparametricTheory}. Since $\varphi \in \mathcal{H}$, it has zero mean, and therefore the variance of the influence function $\varphi$ is its square norm in $\mathcal{H}$, $\operatorname{Var}(\varphi) = E[\varphi^2] - E[\varphi]^2 = E[\varphi^2] = \langle \varphi, \varphi \rangle$.

An estimator $\hat{\psi}$ is more efficient than another estimator $\hat{\psi}'$ precisely when the variance of the limiting distribution of $\sqrt{n} (\hat{\psi}_n - \psi)$ is equal or smaller than the variance of the limiting distribution of $\sqrt{n} (\hat{\psi}'_n - \psi)$ \citep{Tsiatis.2006.SemiparametricTheory}. This means the variance of the influence function $\varphi$ of $\hat{\psi}$ is equal or smaller than the variance of the influence function $\varphi'$ of $\hat{\psi}'$. Therefore, given an influence function $\varphi$ and the orthogonal complement of the tangent space $\mathcal{T}^\perp$, if one can construct $\varphi - f$, where $f \in \mathcal{T}^\perp$, such that $\EE[(\varphi - f)^2] \leq \EE[\varphi^2]$, then the estimator obtained from $\varphi - f$ will be equal or more efficient than the estimator obtained from $\varphi$. In particular, we derive the orthogonal complement of the tangent space for general Markov model in Theorem~\ref{thm:general_ci_model_orthocomp}, and propose a method to improve the variance of an initial influence function in Theorem~\ref{thm:efficiency_improvement}.

The efficient influence function is the influence function $\varphi^{(e)}$ in the class of all influence functions $\varphi \oplus \mathcal{T}^\perp$ with smallest variance, i.e., $\EE[(\varphi^{(e)})^2] \leq \EE [\varphi'^2]$ for any influence function $\varphi' \in \varphi \oplus \mathcal{T}^\perp$ \citep{Tsiatis.2006.SemiparametricTheory}. One way to find the efficient influence function is projecting any other influence function $\varphi$ onto the tangent space $\mathcal{T}$. Let $\Pi(\cdot \mid \mathcal{T})$ be the projection onto $\mathcal{T}$, and $\Pi(\cdot \mid \mathcal{T}^\perp)$ is the projection onto the orthogonal complement $\mathcal{T}^\perp$. We have $\Pi(\cdot \mid \mathcal{T}^\perp) = I - \Pi(\cdot \mid \mathcal{T})$. For any influence function $\varphi$
\begin{equation}
    \varphi = \varphi - \Pi(\varphi \mid \mathcal{T}^\perp) + \Pi(\varphi \mid \mathcal{T}^\perp)
\end{equation}
Since $\Pi(\varphi \mid \mathcal{T}^\perp)$ is an element in the orthogonal complement of the tangent space, $\varphi - \Pi(\varphi \mid \mathcal{T}^\perp)$ must be another influence function. Since $\varphi - \Pi(\varphi \mid \mathcal{T}^\perp) = \Pi(\varphi \mid \mathcal{T})$ is the projection of $\varphi$ onto a subspace, its norm must be equal or less than the norm of $\varphi$, so $\EE [(\varphi - \Pi(\varphi \mid \mathcal{T}^\perp))^2] \leq \EE [\varphi^2]$. It can be shown (\citet{Tsiatis.2006.SemiparametricTheory}, Theorem 3.5) that the efficient influence function is $\varphi^{(e)} = (\varphi - \Pi(\varphi \mid \mathcal{T}^\perp)$. Hence, to find the efficient influence function, one must first know the projection operator $\Pi(\cdot \mid \mathcal{T}^\perp)$ or $\Pi(\cdot \mid \mathcal{T})$.

\section{Proofs in Section \ref{sec:dag_tangent}}

The following lemma is the general version of Equation~\ref{eq:dag_subtangent}.
\begin{lemma}[\citet{Tsiatis.2006.SemiparametricTheory}, Theorem 4.5]
    \label{lm:tangent_fact_1}
    Let $\mathscr{P}$ be a semi-parametric model over variables $\mathbf{V}$. Consider any disjoint subsets $\mathbf{X}, \mathbf{Y} \subseteq \mathbf{V}$. For any parametric submodel $p_\varepsilon$, the score function $s(\mathbf{x} \mid \mathbf{y}) = \frac{\partial}{\partial \varepsilon} \log p_\varepsilon (\mathbf{x} \mid \mathbf{y}) |_{\varepsilon=0}$ at $p_0$ is an element of the subspace
    \begin{equation}
        \mathcal{T}_{\mathbf{X} \mid \mathbf{Y}} := \{E[h \mid \mathbf{x}, \mathbf{y}] - E[h \mid \mathbf{y}] : \forall h \in \mathcal{H}\}.
    \end{equation}
\end{lemma}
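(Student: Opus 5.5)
The plan is to show two things: that the score $s(\mathbf{x}\mid\mathbf{y})$ lies in $\mathcal{H}$ with conditional mean zero given $\mathbf{y}$, and that every mean-zero $L^2(p_0)$ function $g(\mathbf{x},\mathbf{y})$ with $\mathbb{E}[g\mid\mathbf{y}]=0$ can be written as $\mathbb{E}[h\mid\mathbf{x},\mathbf{y}]-\mathbb{E}[h\mid\mathbf{y}]$ for some $h\in\mathcal{H}$. Combining the two gives $s(\mathbf{x}\mid\mathbf{y})\in\mathcal{T}_{\mathbf{X}\mid\mathbf{Y}}$.

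\emph{Conditional mean zero.} Start from the identity $\int p_\varepsilon(\mathbf{x}\mid\mathbf{y})\,d\mathbf{x}=1$, valid for every $\varepsilon$ and every $\mathbf{y}$ with $p_\varepsilon(\mathbf{y})>0$. Differentiate in $\varepsilon$ at $\varepsilon=0$. The usual regularity conditions for parametric submodels (as in \citet{Tsiatis.2006.SemiparametricTheory}) permit exchanging derivative and integral. This gives
\[
\int s(\mathbf{x}\mid\mathbf{y})\,p_0(\mathbf{x}\mid\mathbf{y})\,d\mathbf{x}=0,
\]
that is, $\mathbb{E}[s(\mathbf{X}\mid\mathbf{Y})\mid\mathbf{y}]=0$. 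It follows that the unconditional mean is also zero.

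\emph{Membership in $\mathcal{H}$.} Write $s(\mathbf{x}\mid\mathbf{y})=s(\mathbf{x},\mathbf{y})-s(\mathbf{y})$. The marginal score $s(\mathbf{x},\mathbf{y})$ equals $\mathbb{E}[s(\mathbf{V})\mid\mathbf{x},\mathbf{y}]$, and similarly $s(\mathbf{y})=\mathbb{E}[s(\mathbf{V})\mid\mathbf{y}]$. Both identities follow by differentiating $p_\varepsilon(\mathbf{x},\mathbf{y})=\int p_\varepsilon(\mathbf{v})\,d(\mathbf{v}\setminus\{\mathbf{x},\mathbf{y}\})$ under the integral. Since $s(\mathbf{V})\in L^2(p_0)$ and conditional expectation is an $L^2$ contraction, both terms are square integrable. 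Hence $s(\mathbf{x}\mid\mathbf{y})\in\mathcal{H}$.

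\emph{Exhibiting the representation.} Take $h:=s(\mathbf{x}\mid\mathbf{y})$ itself. Because $h$ is a function of $(\mathbf{x},\mathbf{y})$ only, $\mathbb{E}[h\mid\mathbf{x},\mathbf{y}]=h$. By the conditional-mean-zero property, $\mathbb{E}[h\mid\mathbf{y}]=0$. Therefore $h=\mathbb{E}[h\mid\mathbf{x},\mathbf{y}]-\mathbb{E}[h\mid\mathbf{y}]$, which places the score in $\mathcal{T}_{\mathbf{X}\mid\mathbf{Y}}$. Equivalently, one can take $h=s(\mathbf{V})$: then $\mathbb{E}[s\mid\mathbf{x},\mathbf{y}]-\mathbb{E}[s\mid\mathbf{y}]=s(\mathbf{x},\mathbf{y})-s(\mathbf{y})=s(\mathbf{x}\mid\mathbf{y})$. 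This second route matches the form of the Bell-scenario equations in the paper.

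\emph{Main obstacle.} The only delicate step is justifying the interchange of differentiation and integration, both in the normalization identity and in the marginalization identity. The plan is not to prove these interchanges from scratch. Instead, they are taken as part of the standard regularity assumptions on parametric submodels (smoothness, and a dominating integrable envelope for $\partial_\varepsilon p_\varepsilon$) used in \citet{Tsiatis.2006.SemiparametricTheory}, Theorem 4.5. Once those assumptions are granted, the rest of the argument is algebra.
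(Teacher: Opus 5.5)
Your proposal is correct, and your second route, differentiating the marginalization identity under the integral to get $s(\mathbf{x}\mid\mathbf{y}) = \mathbb{E}[s(\mathbf{V})\mid\mathbf{x},\mathbf{y}] - \mathbb{E}[s(\mathbf{V})\mid\mathbf{y}]$ and taking $h = s(\mathbf{V}) \in \mathcal{H}$, is exactly the paper's proof. The conditional-mean-zero step and the alternative choice $h = s(\mathbf{x}\mid\mathbf{y})$ are valid but redundant, because the identity you use to show membership in $\mathcal{H}$ already gives the required representation directly.
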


\begin{proof}
Let $\mathbf{Z} = \mathbf{V} \setminus (\mathbf{X} \dot{\cup} \mathbf{Y})$. The following is a really useful identify for the score function
\begin{equation}
\begin{aligned}
    s(\mathbf{x}, \mathbf{y}, \mathbf{z})
    & := \frac{\partial}{\partial \varepsilon} \log p_\varepsilon (\mathbf{x}, \mathbf{y}, \mathbf{z}) \bigg|_{\varepsilon=0}
    = \frac{1}{p_0 (\mathbf{x}, \mathbf{y}, \mathbf{z})} \frac{\partial}{\partial \varepsilon} p_\varepsilon (\mathbf{x}, \mathbf{y}, \mathbf{z}) \bigg|_{\varepsilon=0}.
\end{aligned}
\end{equation}
Then
\begin{equation}
\begin{aligned}
    s(\mathbf{x} \mid \mathbf{y})
    & := \frac{\partial}{\partial \varepsilon} \log p_\varepsilon (\mathbf{x} \mid \mathbf{y}) \bigg|_{\varepsilon=0}
    = \frac{\partial}{\partial \varepsilon} \log p_\varepsilon (\mathbf{x}, \mathbf{y}) \bigg|_{\varepsilon=0} - \frac{\partial}{\partial \varepsilon} \log p_\varepsilon (\mathbf{y}) \bigg|_{\varepsilon=0}
    \\
    & = \frac{1}{p_0 (\mathbf{x}, \mathbf{y})} \frac{\partial}{\partial \varepsilon} \int p_\varepsilon (\mathbf{x}, \mathbf{y}, \mathbf{z}) d \mathbf{z} \bigg|_{\varepsilon=0}  - \frac{1}{p_0 (\mathbf{y})} \frac{\partial}{\partial \varepsilon} \int p_\varepsilon (\mathbf{x}, \mathbf{y}, \mathbf{z}) d \mathbf{x} d \mathbf{z} \bigg |_{\varepsilon=0}
    \\
    & = \frac{1}{p_0 (\mathbf{x}, \mathbf{y})} \int \left(\frac{\partial}{\partial \varepsilon} p_\varepsilon (\mathbf{x}, \mathbf{y}, \mathbf{z}) \bigg|_{\varepsilon=0} \right) d \mathbf{z}  - \frac{1}{p_0 (\mathbf{y})} \int \left(\frac{\partial}{\partial \varepsilon} p_\varepsilon (\mathbf{x}, \mathbf{y}, \mathbf{z})\bigg |_{\varepsilon=0}\right) d \mathbf{x} d \mathbf{z}
    \\
    & = \frac{1}{p_0 (\mathbf{x}, \mathbf{y})} \int s(\mathbf{x}, \mathbf{y}, \mathbf{z}) p_0 (\mathbf{x}, \mathbf{y}, \mathbf{z}) d \mathbf{z}  - \frac{1}{p_0 (\mathbf{y})} \int s(\mathbf{x}, \mathbf{y}, \mathbf{z}) p_0 (\mathbf{x}, \mathbf{y}, \mathbf{z}) d \mathbf{x} d \mathbf{z}
    \\
    & = \int s(\mathbf{x}, \mathbf{y}, \mathbf{z}) p_0 (\mathbf{z} \mid \mathbf{x}, \mathbf{y}) d \mathbf{z}  - \int s(\mathbf{x}, \mathbf{y}, \mathbf{z}) p_0 (\mathbf{x}, \mathbf{z} \mid \mathbf{y}) d \mathbf{x} d \mathbf{z}
    \\
    & = E[s(\mathbf{V}) \mid \mathbf{x}, \mathbf{y}] - E[s(\mathbf{V}) \mid \mathbf{y}].
\end{aligned}
\end{equation}
Since $s(\mathbf{v}) \in \mathcal{H}$, the claim is established.
\end{proof}

The following lemma shows that the tangent subspaces of the two pathway DAG model (Figure~\ref{fig:ci_models}a) described in Equation~\ref{eq:dag_subtangent} are orthogonal.

\begin{lemma}[\citet{Tsiatis.2006.SemiparametricTheory}, Theorem 4.5]
    \label{lm:tangent_fact_2}
    Let $\mathbf{A}, \mathbf{B}, \mathbf{C}, \mathbf{D}$ be disjoint sets of random variables. The following 2 subspaces are orthogonal
    \begin{equation}
    \begin{aligned}
        \mathcal{T}_1 & = \{ \EE [h \mid \mathbf{a}, \mathbf{b}, \mathbf{c}, \mathbf{d}] - \EE [h \mid \mathbf{b}, \mathbf{c}, \mathbf{d}]: \forall h \in \mathcal{H}\},
        \\
        \mathcal{T}_2 & = \{ \EE [h \mid \mathbf{c}, \mathbf{d}] - \EE [h \mid \mathbf{d}]: \forall h \in \mathcal{H}\}.
    \end{aligned}
    \end{equation}
\end{lemma}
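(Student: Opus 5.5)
We show that every $f_1 \in \mathcal{T}_1$ and every $f_2 \in \mathcal{T}_2$ satisfy $\langle f_1, f_2\rangle = \EE[f_1 f_2] = 0$. The argument is the tower property of conditional expectation. The key observation is that $f_2$ depends only on $(\mathbf{c}, \mathbf{d})$, and $(\mathbf{c},\mathbf{d})$ is contained in the conditioning set $(\mathbf{b},\mathbf{c},\mathbf{d})$ of the second term in $f_1$.

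Concretely, write $f_1 = \EE[h \mid \mathbf{a},\mathbf{b},\mathbf{c},\mathbf{d}] - \EE[h \mid \mathbf{b},\mathbf{c},\mathbf{d}]$ and $f_2 = \EE[g \mid \mathbf{c},\mathbf{d}] - \EE[g \mid \mathbf{d}]$ for arbitrary $h, g \in \mathcal{H}$. Condition on $(\mathbf{B},\mathbf{C},\mathbf{D})$ inside the outer expectation:
\begin{equation*}
\EE[f_1 f_2] = \EE\big[ f_2 \, \EE[f_1 \mid \mathbf{B},\mathbf{C},\mathbf{D}] \big].
\end{equation*}
This step is valid because $f_2$ is $\sigma(\mathbf{B},\mathbf{C},\mathbf{D})$-measurable; indeed it is already $\sigma(\mathbf{C},\mathbf{D})$-measurable. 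By the tower property,
\begin{equation*}
\EE\big[\EE[h \mid \mathbf{A},\mathbf{B},\mathbf{C},\mathbf{D}] \mid \mathbf{B},\mathbf{C},\mathbf{D}\big] = \EE[h \mid \mathbf{B},\mathbf{C},\mathbf{D}],
\end{equation*}
so $\EE[f_1 \mid \mathbf{B},\mathbf{C},\mathbf{D}] = 0$ almost surely. Hence $\EE[f_1 f_2] = 0$.

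The only technical point is integrability, which I would settle before the computation. Both $f_1$ and $f_2$ lie in $L^2(p_0)$, because conditional expectation is an $L^2$ contraction. Their product is therefore in $L^1$ by Cauchy--Schwarz, and the conditioning step above is valid. Both functions also have mean zero, so they belong to $\mathcal{H}$, and the inner product is the one defined in the paper.

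Beyond this there is no real obstacle. I would add a remark that the same argument gives pairwise orthogonality for any chain of nested conditioning sets. This is what yields the orthogonality of the four subspaces in Equation~\ref{eq:dag_subtangent}, where the conditioning sets are nested along a topological ordering.
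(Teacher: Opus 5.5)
Your argument is correct and is essentially the paper's. Both rest on the law of iterated expectations: you condition on $(\mathbf{B},\mathbf{C},\mathbf{D})$ so that $\EE[f_1 \mid \mathbf{B},\mathbf{C},\mathbf{D}]=0$ immediately, whereas the paper integrates out $(\mathbf{a},\mathbf{b})$ against $p(\mathbf{a},\mathbf{b}\mid\mathbf{c},\mathbf{d})$.

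One caution on your closing remark: the conditioning sets in Equation~\ref{eq:dag_subtangent} ($\{a\}\subset\{a,b\}$, $\{a\}\subset\{a,c\}$, $\{b,c\}\subset\{b,c,d\}$) do not form a single nested chain. So the pairwise orthogonality there (e.g.\ of $\mathcal{T}_{B|A}$ with $\mathcal{T}_{C|A}$, or of $\mathcal{T}_{B|A}$ with $\mathcal{T}_{D|CB}$) does not follow from nesting alone; it additionally uses the independences $C \Perp B \mid A$ and $D \Perp A \mid B, C$ at $p_0$.
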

\begin{proof}
    Pick any $\EE [h_1 \mid \mathbf{a}, \mathbf{b}, \mathbf{c}, \mathbf{d}] - \EE [h_1 \mid \mathbf{b}, \mathbf{c}, \mathbf{d}] \in \mathcal{T}_1$ and any $\EE [h_2 \mid \mathbf{c}, \mathbf{d}] - \EE [h \mid \mathbf{d}] \in \mathcal{T}_2$, then consider their inner product
    \begin{equation}
    \begin{aligned}
        & \EE \bigg[ \left(\EE [h_1 \mid \mathbf{A}, \mathbf{B}, \mathbf{C}, \mathbf{D}] - \EE [h_1 \mid \mathbf{B}, \mathbf{C}, \mathbf{D}] \right) \left( \EE [h_2 \mid \mathbf{C}, \mathbf{D}] - \EE [h_2 \mid \mathbf{D}] \right) \bigg]
        \\
        & = \int \bigg( \int \left(\EE [h_1 \mid \mathbf{a}, \mathbf{b}, \mathbf{c}, \mathbf{d}] - \EE [h_1 \mid \mathbf{b}, \mathbf{c}, \mathbf{d}] \right) \left( \EE [h_2 \mid \mathbf{c}, \mathbf{d}] - \EE [h_2 \mid \mathbf{d}] \right) p(\mathbf{a}, \mathbf{b} \mid \mathbf{c}, \mathbf{d}) d \mathbf{a} d \mathbf{b} \bigg) p(\mathbf{c}, \mathbf{d}) d \mathbf{c} d \mathbf{d}
        \\
        & = 0
    \end{aligned}
    \end{equation}
    by the law of iterated expectations.
\end{proof}

\begin{lemma}
    \label{lm:tangent_fact_3}
    The operator $\Pi: \mathcal{H} \mapsto \mathcal{H}$ defined by
    \begin{equation}
        \Pi(h) = \EE [h \mid \mathbf{x}, \mathbf{y}] - \EE [h \mid \mathbf{y}]
    \end{equation}
    is the orthogonal projection onto the subspace $\mathcal{T}_{\mathbf{X} \mid \mathbf{Y}} := \{E[h \mid \mathbf{x}, \mathbf{y}] - E[h \mid \mathbf{y}] : \forall h \in \mathcal{H}\}$.
\end{lemma}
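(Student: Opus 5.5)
To show that $\Pi(h) = \EE[h \mid \mathbf{x}, \mathbf{y}] - \EE[h \mid \mathbf{y}]$ is the orthogonal projection onto $\mathcal{T}_{\mathbf{X} \mid \mathbf{Y}}$, I will use the standard Hilbert-space characterization of such projections. A map $\Pi:\mathcal{H}\to\mathcal{H}$ is the orthogonal projection onto a closed subspace $\mathcal{S}$ exactly when three things hold: $\Pi(h)\in\mathcal{S}$ for every $h$; $\Pi(s)=s$ for every $s\in\mathcal{S}$; and $h-\Pi(h)\perp \mathcal{S}$ for every $h$. Closedness will come out of the argument automatically, because the range of an idempotent self-adjoint bounded operator is closed.

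First I check that $\Pi$ maps $\mathcal{H}$ into itself and lands in the subspace.
\begin{itemize}
\item Conditional expectation is an $L^2(p_0)$ contraction, so both terms of $\Pi(h)$ lie in $L^2$.
\item The tower property gives $\EE[\Pi(h)] = \EE[h]-\EE[h]=0$, so $\Pi(h)\in\mathcal{H}$.
\item By definition $\Pi(h)\in\mathcal{T}_{\mathbf{X}\mid\mathbf{Y}}$, and every element of $\mathcal{T}_{\mathbf{X}\mid\mathbf{Y}}$ has the form $\Pi(g)$ for some $g\in\mathcal{H}$. Hence the range of $\Pi$ is exactly $\mathcal{T}_{\mathbf{X}\mid\mathbf{Y}}$.
\end{itemize}

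Next I prove idempotence, $\Pi(\Pi(g))=\Pi(g)$. Write $f=\EE[g\mid\mathbf{x},\mathbf{y}]-\EE[g\mid\mathbf{y}]$. Since $f$ is a function of $(\mathbf{x},\mathbf{y})$, we have $\EE[f\mid\mathbf{x},\mathbf{y}]=f$. By the tower property, $\EE[f\mid\mathbf{y}] = \EE[g\mid\mathbf{y}]-\EE[g\mid\mathbf{y}]=0$. Therefore $\Pi(f)=f$. This also shows $\mathcal{T}_{\mathbf{X}\mid\mathbf{Y}}=\{f\in\mathcal{H}: f=f(\mathbf{x},\mathbf{y}),\ \EE[f\mid\mathbf{y}]=0\}$. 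That set is closed, because both conditions are preserved under $L^2$ limits by continuity of conditional expectation.

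For orthogonality I take arbitrary $h$ and $f=\Pi(g)$ and show $\EE[(h-\Pi(h))f]=0$. Since $f$ depends only on $(\mathbf{x},\mathbf{y})$, conditioning on $(\mathbf{x},\mathbf{y})$ gives
\[
\EE[hf]=\EE\big[\EE[h\mid\mathbf{x},\mathbf{y}]f\big],
\]
which cancels the first term of $\Pi(h)$. The remaining term is $\EE[\EE[h\mid\mathbf{y}]f]$. Conditioning on $\mathbf{y}$ turns it into $\EE\big[\EE[h\mid\mathbf{y}]\,\EE[f\mid\mathbf{y}]\big]=0$. So the inner product is $0$, which gives the required orthogonality and, with the steps above, the orthogonal projection property.

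I do not expect a real obstacle; this is the same iterated-expectation computation as in Lemma~\ref{lm:tangent_fact_2}. The only delicate point is to justify the direct description of $\mathcal{T}_{\mathbf{X}\mid\mathbf{Y}}$ (functions of $(\mathbf{x},\mathbf{y})$ with conditional mean zero given $\mathbf{y}$), since that is what makes the subspace closed and the projection well defined. The idempotence computation above settles it.
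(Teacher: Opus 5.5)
Your proof is correct and follows essentially the same route as the paper: the range of $\Pi$ is $\mathcal{T}_{\mathbf{X}\mid\mathbf{Y}}$ by definition, and $h-\Pi(h)$ is orthogonal to every $\Pi(g)$ by the same two iterated-expectation steps (condition on $(\mathbf{x},\mathbf{y})$, then on $\mathbf{y}$). Your separate checks of idempotence and closedness are valid but not strictly needed, since an orthogonal decomposition $h=\Pi(h)+(h-\Pi(h))$ with $\Pi(h)\in\mathcal{T}_{\mathbf{X}\mid\mathbf{Y}}$ and $h-\Pi(h)\perp\mathcal{T}_{\mathbf{X}\mid\mathbf{Y}}$ for every $h$ already forces both.
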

\begin{proof}
    By definition, $\mathcal{T}_{\mathbf{X} \mid \mathbf{Y}}  = \{\Pi(h) : \forall h \in \mathcal{H}\}$. For any $f, g \in \mathcal{H}$
    \begin{equation}
        \begin{aligned}
            \langle \Pi(f), (I - \Pi)(g) \rangle
            & = \EE \bigg[ \big( \EE [f \mid \mathbf{X}, \mathbf{Y}] - \EE [f \mid \mathbf{Y}] \big) \big(g - \EE [g \mid \mathbf{X}, \mathbf{Y}] + \EE [g \mid \mathbf{Y}] \big) \bigg]
            \\
            & = \EE \bigg[ \big( \EE [f \mid \mathbf{X}, \mathbf{Y}] - \EE [f \mid \mathbf{Y}] \big) \big(\EE [g \mid \mathbf{X}, \mathbf{Y}] - \EE [g \mid \mathbf{X}, \mathbf{Y}] + \EE [g \mid \mathbf{Y}] \big) \bigg]
            \\
            & = \EE \bigg[ \big( \EE [f \mid \mathbf{X}, \mathbf{Y}] - \EE [f \mid \mathbf{Y}] \big) \EE [g \mid \mathbf{Y}] \bigg]
            \\
            & = 0,
        \end{aligned}
    \end{equation}
    by the law of iterated expectations. Any function $h \in \mathcal{H}$ can be decomposed into the sum $h = \big(h - \Pi(h) \big) + \Pi(h)$ of two orthogonal terms, where $\Pi(h) \in \mathcal{T}_{\mathbf{X} \mid \mathbf{Y}}$ and the error term $h - \Pi(h)$ is orthogonal to $\mathcal{T}_{\mathbf{X} \mid \mathbf{Y}}$. Therefore, $\Pi$ is the orthogonal projection operator onto $\mathcal{T}_{\mathbf{X} \mid \mathbf{Y}}$ \citep{Luenberger.1997.OptimizationVector}. 
\end{proof}

\begin{lemma}
    The tangent space of the two pathway DAG model in Figure~\ref{fig:ci_models}a is the direct sum
    \begin{equation*}
    \begin{aligned}
        & \mathcal{T} = \mathcal{T}_A \oplus \mathcal{T}_{B|A} \oplus \mathcal{T}_{C|A} \oplus \mathcal{T}_{D|BC}
        &
        \eqref{eq:dag_tangent} &
    \end{aligned}
    \end{equation*}
\end{lemma}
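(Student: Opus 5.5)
We prove the two inclusions $\mathcal{T} \subseteq \mathcal{T}_A \oplus \mathcal{T}_{B|A} \oplus \mathcal{T}_{C|A} \oplus \mathcal{T}_{D|BC}$ and its converse.

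\textbf{First inclusion.} Take any parametric submodel $p_\varepsilon \in \mathscr{P}^{(a)}$. By the factorization \eqref{eq:dag_llh}, its score decomposes as in \eqref{eq:dag_score_1}. Lemma~\ref{lm:tangent_fact_1} places each component score in the corresponding subspace of \eqref{eq:dag_subtangent}. Hence every score lies in the sum $\mathcal{S} := \mathcal{T}_A + \mathcal{T}_{B|A} + \mathcal{T}_{C|A} + \mathcal{T}_{D|BC}$.

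Next we show this sum is orthogonal and closed. Pairwise orthogonality follows from Lemma~\ref{lm:tangent_fact_2}. The pair $\mathcal{T}_{B|A}$, $\mathcal{T}_{C|A}$ is not literally covered by that lemma, since neither conditioning set is nested in the other. For it we use the true CI $C \Perp B \mid A$: $\EE[(\EE[g|b,a]-\EE[g|a])(\EE[h|c,a]-\EE[h|a]) \mid a]$ factorizes into a product of two conditional means, each of which is zero. This step, and its analogue for $\mathcal{T}_{D|BC}$, is where the Markov property of $p_0$ enters.

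Each summand is the range of a projection by Lemma~\ref{lm:tangent_fact_3}, so it is closed. A finite orthogonal sum of closed subspaces is closed: the projection onto it is $\Lambda_A+\Lambda_{B|A}+\Lambda_{C|A}+\Lambda_{D|BC}$, whose range is $\mathcal{S}$. Therefore $\mathcal{S}$ is a closed direct sum, and taking closures gives $\mathcal{T} \subseteq \mathcal{S}$.

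\textbf{Reverse inclusion.} Since $\mathcal{T}$ is a closed subspace, it suffices to show each tangent subspace lies in $\mathcal{T}$. Take $f$ in one summand, say $f=\EE[h|b,a]-\EE[h|a]$.

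First assume $f$ is bounded, and set $p_\varepsilon = p_0(1+\varepsilon f)$ for $|\varepsilon|<1/\sup|f|$. This density is nonnegative and integrates to one because $\EE f=0$. Its score at zero is $f$. Because $f$ depends only on $(a,b)$ and satisfies $\EE[f|a]=0$:
\begin{itemize}
\item $p_\varepsilon(a)=p_0(a)$;
\item $p_\varepsilon(b|a)=p_0(b|a)(1+\varepsilon f)$;
\item $p_\varepsilon(c|a,b)=p_0(c|a,b)=p_0(c|a)$;
\item $p_\varepsilon(d|a,b,c)=p_0(d|b,c)$.
\end{itemize}
So $p_\varepsilon \in \mathscr{P}^{(a)}$ and $f\in\mathcal{T}$. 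The same construction works for the other summands. For the generic summand $\EE[h|x,\mathrm{pa}]-\EE[h|\mathrm{pa}]$, only the factor $p(x\mid \mathrm{pa})$ is perturbed, and all other factors are unchanged because $f$ has conditional mean zero given $\mathrm{pa}$.

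\textbf{Main obstacle: unbounded $f$.} Bounded elements are not the whole subspace. Take $h$ bounded (bounded functions are dense in $L^2$), or truncate $f$ and recenter. Bounded $h$ gives bounded $f=\Lambda(h)$, since conditional expectation preserves the sup norm. By $L^2$-continuity of the projection $\Lambda$, these bounded $f$ are dense in the summand. Closedness of $\mathcal{T}$ then yields the full summand.

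Combining both inclusions gives \eqref{eq:dag_tangent}.
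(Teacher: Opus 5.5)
Your proposal is correct and follows essentially the same route as the paper: the score factorization with Lemma~\ref{lm:tangent_fact_1} gives the inclusion of $\mathcal{T}$ in the sum, and the perturbation $p_0(1+\varepsilon f)$ checked against the DAG factorization for each summand gives the reverse inclusion. Your additions (CI-based orthogonality of the non-nested pairs, closedness of the finite orthogonal sum, and the bounded-$h$ density step) fill in points the paper leaves implicit; the one loose phrase is in your generic-summand remark, since keeping the earlier factors unchanged also needs the CIs at $p_0$ and not just $\EE[f\mid\mathrm{pa}]=0$ (e.g.\ $\EE[f\mid a,b,c]=\EE[f\mid b,c]=0$ for $f\in\mathcal{T}_{D|BC}$ uses $D\Perp A\mid B,C$), exactly as in the paper's case checks.
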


\begin{proof}
Equation~\ref{eq:dag_subtangent} shows that $\mathcal{T} \subseteq \mathcal{T}_A \oplus \mathcal{T}_{B|A} \oplus \mathcal{T}_{C|A} \oplus \mathcal{T}_{D|BC}$. To prove the equality, all we need to prove is that each subspace on the r.h.s. is contained in $\mathcal{T}$. In each case, we pick arbitrary $f$ in the subspace under consideration and define a parametric model $\mathscr{P}_{\varepsilon}$ whose elements are $p_{\varepsilon} = p_0(1 + \varepsilon f)$. We ensure $p_{\varepsilon}$ is a valid density using the conventional technique in \citet{Tsiatis.2006.SemiparametricTheory}: (i) non-negativity $p_{\varepsilon} (a,b,c,d) \geq 0$ is achieved by restricting the range of $\varepsilon$ to be $(-m, m)$ with $m$ small enough, and (ii) summation $\int p_{\varepsilon} (a,b,c,d) da db dc dd = 1 + \varepsilon E[f] = 1$ for all $\varepsilon$, because any $f \in \mathcal{H}$ must have zero-mean, by definition of $\mathcal{H}$. This parametric model contain the truth $p_0$ at $\varepsilon=0$. Moreover, the score function at $p_0$ is $\frac{\partial}{\partial \varepsilon} \log p_{\varepsilon}|_{\varepsilon = 0} = f$. Finally, if we can show $p_{\varepsilon}$ satisfies the DAG's constraints, then $\mathscr{P}_{\varepsilon}$ is a parametric submodel of the DAG's semiparametric model, and hence the chosen $f$ must be in the tangent space $\mathcal{T}$.

We will directly check $p_{\varepsilon}(c \mid b,a) = p_{\varepsilon}(c \mid a)$ and $p_{\varepsilon}(d \mid c,b,a) = p_{\varepsilon}(d \mid c,b)$ to show that $p_{\varepsilon}$ satisfies the constraints of the DAG model. This will conclude that any function in that subspace is a valid score function of this DAG model. 
Note that at the truth $p_0$, the constraint $D \Perp A \mid C, B$ implies $p_0(d \mid c,b,a) = p_0(d \mid c,b)$, and the constraint $C \Perp B \mid A$ implies $p_0(c \mid b,a) = p_0(c \mid a)$ and $p_0(b \mid c,a) = p_0(b \mid a)$. We will use these equalities repeatedly in the proof. Finally, expectations are taken at the truth $p_0$.

\textbf{First case $\mathcal{T}_A \subseteq \mathcal{T}$:} Pick any $f = \mathbb{E}[h \mid a] \in \mathcal{T}_A$. Then
\begin{equation}
\begin{aligned}
    p_{\varepsilon}(c,b,a)
    & = \int p_{0}(d,c,b,a) (1 + \varepsilon \mathbb{E}[h \mid a]) dd = p_{0}(c,b,a) (1 + \varepsilon \mathbb{E}[h \mid a])
\end{aligned}
\end{equation}
Therefore, by definition of conditional distribution
\begin{equation}
\begin{aligned}
    p_{\varepsilon}(c \mid b,a)
    & = \frac{p_{\varepsilon}(c,b,a)}{p_{\varepsilon}(b,a)} = p_{0}(c \mid b,a)
    \\
    p_{\varepsilon}(c \mid a)
    & = \frac{p_{\varepsilon}(c,a)}{p_{\varepsilon}(a)} = p_{0}(c \mid a)
    \\
    p_{\varepsilon}(d \mid c,b,a)
    & = \frac{p_{\varepsilon}(d,c,b,a)}{p_{\varepsilon}(c,b,a)} = p_{0}(d \mid c,b,a)
\end{aligned}
\end{equation}
This shows $p_{\varepsilon}(c \mid b,a) = p_{0}(c \mid b,a) = p_{0}(c \mid a) = p_{\varepsilon}(c \mid a)$. Next
\begin{equation}
\begin{aligned}
    p_{\varepsilon}(d,c,b)
    & = \int p_{0}(d,c,b) p_0(a \mid c,b) (1 + \varepsilon \mathbb{E}[h \mid a]) da
    & (A \Perp D \mid C, B)
    \\
    & = p_{0}(d,c,b) \left(1 + \varepsilon \int \mathbb{E}[h \mid a] p_0(a \mid c,b) da \right)
    \\
    p_{\varepsilon}(c,b)
    & = p_{0}(c,b) \left(1 + \varepsilon \int \mathbb{E}[h \mid a] p_0(a \mid c,b) da \right)
    \\
    \Rightarrow p_{\varepsilon}(d \mid c,b)
    & = \frac{p_{\varepsilon}(d,c,b)}{p_{\varepsilon}(c,b)} = p_{0}(d \mid c,b)
\end{aligned}
\end{equation}
This shows $p_{\varepsilon}(d \mid c,b,a) = p_{0}(d \mid c,b,a) = p_{0}(d \mid c,b) = p_{\varepsilon}(d \mid c,b)$. Hence $\mathcal{T}_A \subseteq \mathcal{T}$.

\textbf{Second case $\mathcal{T}_{B \mid A} \subseteq \mathcal{T}$:} Pick any $f = \mathbb{E}[h \mid b,a] - \mathbb{E}[h \mid a]$. Then
\begin{equation}
\begin{aligned}
    p_{\varepsilon}(c,b,a)
    & = p_{0}(c,b,a) (1 + \varepsilon \mathbb{E}[h \mid b, a] - \varepsilon \mathbb{E}[h \mid a])
    \\
    p_{\varepsilon}(b,a)
    & = p_{0}(b,a) (1 + \varepsilon \mathbb{E}[h \mid b, a] - \varepsilon \mathbb{E}[h \mid a])
    \\
    p_{\varepsilon}(c,a)
    & = p_0(c, a) \int p_{0}(b \mid c,a) (1 + \varepsilon \mathbb{E}[h \mid b, a] - \varepsilon \mathbb{E}[h \mid a]) db
    \\
    & = p_0(c, a) \int p_{0}(b \mid a) (1 + \varepsilon \mathbb{E}[h \mid b, a] - \varepsilon \mathbb{E}[h \mid a]) db
    & (C \Perp B \mid A)
    \\
    & = p_0(c, a)
\end{aligned}
\end{equation}
Therefore, we get $p_{\varepsilon}(c \mid b,a) = p_{0}(c \mid b,a) = p_{0}(c \mid a) = p_{\varepsilon}(c \mid a)$. Next,
\begin{equation}
\begin{aligned}
    p_{\varepsilon}(d,c,b)
    & = p_{0}(d,c,b) \int p_0(a \mid c,b) (1 + \varepsilon \mathbb{E}[h \mid b, a] - \varepsilon \mathbb{E}[h \mid a])  da
    & (A \Perp D \mid C,B)
    \\
    p_{\varepsilon}(c,b)
    & = \int p_{0}(c,b,a) (1 + \varepsilon \mathbb{E}[h \mid b, a] - \varepsilon \mathbb{E}[h \mid a]) da
    \\
    & = p_{0}(c,b) \int p_0(a \mid c,b) (1 + \varepsilon \mathbb{E}[h \mid b, a] - \varepsilon \mathbb{E}[h \mid a]) da
\end{aligned}
\end{equation}
This shows $p_{\varepsilon}(d \mid c,b,a) = p_{0}(d \mid c,b,a) = p_{0}(d \mid c,b) = p_{\varepsilon}(d \mid c,b)$. Hence $\mathcal{T}_{B|A} \subseteq \mathcal{T}$.

\textbf{Third case $\mathcal{T}_{C \mid A} \subseteq \mathcal{T}$:} Similar to $\mathcal{T}_{B|A} \subseteq \mathcal{T}$.

\textbf{Fourth case $\mathcal{T}_{D \mid CB} \subseteq \mathcal{T}$:} Pick $f = \mathbb{E}[h \mid d,c,b] - \mathbb{E}[h \mid c,b]$. Then
\begin{equation}
\begin{aligned}
    p_{\varepsilon}(c,b,a)
    & = p_{0}(c,b,a) \int p_{0}(d \mid c,b) (1 + \varepsilon \mathbb{E}[h \mid d,c,b] - \varepsilon \mathbb{E}[h \mid c,b]) dd
    & (D \Perp A \mid C,B)
    \\
    & = p_{0}(c,b,a)
\end{aligned}
\end{equation}
This show $p_{\varepsilon}(c \mid b,a) = p_{0}(c \mid b,a) = p_{0}(c \mid a) = p_{\varepsilon}(c \mid a)$. Moreover, $p_{\varepsilon}(d \mid c,b,a) = p_{0}(d \mid c,b,a) \times \int f p_0(d \mid c,b) dd$. Next
\begin{equation}
\begin{aligned}
    p_{\varepsilon}(d,c,b)
    & = p_{0}(d,c,b) \int p_0(a \mid d,c,b) (1 + \varepsilon \mathbb{E}[h \mid d,c,b] - \varepsilon \mathbb{E}[h \mid c,b])  da
    \\
    & = p_{0}(d,c,b) (1 + \varepsilon \mathbb{E}[h \mid d,c,b] - \varepsilon \mathbb{E}[h \mid c,b])
    \\
    p_{\varepsilon}(c,b)
    & = p_{0}(c,b) \int p_0(d \mid c,b) (1 + \varepsilon \mathbb{E}[h \mid d,c,b] - \varepsilon \mathbb{E}[h \mid c,b]) dd
    \\
    & = p_{0}(c,b)
\end{aligned}
\end{equation}
This shows $p_{\varepsilon}(d \mid c,b) = p_{0}(d \mid c,b) \times \int f p_0(d \mid c,b) dd$. Therefore $p_{\varepsilon}(d \mid c,b,a) = p_{\varepsilon}(d \mid c,b)$. Hence $\mathcal{T}_{D|CB} \subseteq \mathcal{T}$.

The proof is established because $\mathcal{T}_A, \mathcal{T}_{B \mid A}, \mathcal{T}_{C \mid A}, \mathcal{T}_{D \mid CB} \subseteq \mathcal{T}$ implies $\mathcal{T}_A \oplus \mathcal{T}_{B \mid A} \oplus \mathcal{T}_{C \mid A} \oplus \mathcal{T}_{D \mid CB} \subseteq \mathcal{T}$, and the equality follows.
\end{proof}

We reproduce a special case of the well-known result about DAG model's tangent space and its orthogonal complement \citep{Tsiatis.2006.SemiparametricTheory, Bhattacharya.Nabi.ea.2022.SemiparametricInference, Rotnitzky.Smucler.2020.EfficientAdjustment} in the following lemma.

\begin{lma}{\ref{lm:single_dag_model_orthocomp}}
    Suppose $\mathscr{P}$ is a single independence DAG model over variables $\mathbf{V}$, defined by a single constraint $\mathbf{X} \Perp \mathbf{Y} \mid \mathbf{Z}$. The model's tangent space $\mathcal{T}$ and its orthogonal complement $\mathcal{T}^\perp$ are, respectively
    \begin{equation*}
    \begin{aligned}
        \mathcal{T} = & \big\{ h - \mathbb{E}[h \mid \mathbf{x}, \mathbf{y}, \mathbf{z}]
        + \mathbb{E}[h \mid \mathbf{x}, \mathbf{z}] - \mathbb{E}[h \mid \mathbf{z}]
        \\
        & + \mathbb{E}[h \mid \mathbf{y}, \mathbf{z}]: \forall h \in \mathcal{H} \big\},
        & & \eqref{eq:single_dag_model_orthocomp}
        \\
        \mathcal{T}^\perp = & \big\{ \mathbb{E}[h \mid \mathbf{x}, \mathbf{y}, \mathbf{z}] - \mathbb{E}[h \mid \mathbf{y}, \mathbf{z}]
        \\
        & - \mathbb{E}[h \mid \mathbf{x}, \mathbf{z}] + \mathbb{E}[h \mid \mathbf{z}]
        : \forall h \in \mathcal{H} \big\}.
    \end{aligned}
    \end{equation*}
\end{lma}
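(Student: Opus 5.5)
We prove Lemma~\ref{lm:single_dag_model_orthocomp} by following the template of the worked example of Section~\ref{sec:dag_tangent:derivation}. Set $\mathbf{W} = \mathbf{V} \setminus (\mathbf{X} \cup \mathbf{Y} \cup \mathbf{Z})$. The single constraint $\mathbf{X} \Perp \mathbf{Y} \mid \mathbf{Z}$ is encoded by the factorization
\begin{equation*}
p(\mathbf{v}) = p(\mathbf{z})\, p(\mathbf{x} \mid \mathbf{z})\, p(\mathbf{y} \mid \mathbf{z})\, p(\mathbf{w} \mid \mathbf{x}, \mathbf{y}, \mathbf{z}),
\end{equation*}
with four variationally independent factors. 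Differentiating $\log p_\varepsilon$ along any parametric submodel, and using Lemma~\ref{lm:tangent_fact_1}, shows that every score lies in
\begin{equation*}
\mathcal{T}_{\mathbf{Z}} \oplus \mathcal{T}_{\mathbf{X}\mid\mathbf{Z}} \oplus \mathcal{T}_{\mathbf{Y}\mid\mathbf{Z}} \oplus \mathcal{T}_{\mathbf{W}\mid\mathbf{XYZ}}.
\end{equation*}
These four subspaces are pairwise orthogonal. For the nested pairs this follows from Lemma~\ref{lm:tangent_fact_2} and the same iterated-expectation argument. For the pair $\mathcal{T}_{\mathbf{X}\mid\mathbf{Z}}$, $\mathcal{T}_{\mathbf{Y}\mid\mathbf{Z}}$, condition on $\mathbf{Z}$ and use $\mathbf{X} \Perp \mathbf{Y} \mid \mathbf{Z}$ under $p_0$: the conditional covariance given $\mathbf{Z}$ factors into a product of two conditional means, each zero.

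For the reverse inclusion, we show each of the four subspaces lies in $\mathcal{T}$. Take $f$ in one of them and form $p_\varepsilon = p_0(1+\varepsilon f)$, with $\varepsilon$ small enough that $p_\varepsilon \ge 0$. We then check directly that $p_\varepsilon(\mathbf{x},\mathbf{y}\mid\mathbf{z}) = p_\varepsilon(\mathbf{x}\mid\mathbf{z})\,p_\varepsilon(\mathbf{y}\mid\mathbf{z})$, exactly as in the four cases of the appendix proof for the two-pathway DAG.

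\begin{itemize}
\item For $f$ depending on $\mathbf{z}$ only, $f \in \mathcal{T}_{\mathbf{Z}}$, or for $f \in \mathcal{T}_{\mathbf{W}\mid\mathbf{XYZ}}$, the conditional law of $(\mathbf{X},\mathbf{Y})$ given $\mathbf{Z}$ is unchanged.
\item For $f = \mathbb{E}[h\mid\mathbf{x},\mathbf{z}] - \mathbb{E}[h\mid\mathbf{z}]$, the $(\mathbf{x},\mathbf{y},\mathbf{z})$-marginal is $p_0(\mathbf{z})\,p_0(\mathbf{x}\mid\mathbf{z})(1+\varepsilon f)\,p_0(\mathbf{y}\mid\mathbf{z})$, which still factorizes. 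Moreover $p_\varepsilon(\mathbf{z}) = p_0(\mathbf{z})$, since $f$ has conditional mean zero given $\mathbf{z}$.
\item The case $f \in \mathcal{T}_{\mathbf{Y}\mid\mathbf{Z}}$ is symmetric.
\end{itemize}

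Hence $\mathcal{T}$ equals the four-fold direct sum. We expect this verification to be the step needing the most care, mainly in handling the marginals of $\mathbf{z}$. Closedness then follows because a finite orthogonal sum of ranges of projections is closed.

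Next, by Lemma~\ref{lm:tangent_fact_3} each summand has the projection $\mathbb{E}[h\mid\cdot,\cdot]-\mathbb{E}[h\mid\cdot]$. Summing the four projections and telescoping gives
\begin{equation*}
\Pi(h\mid\mathcal{T}) = h - \mathbb{E}[h\mid\mathbf{x},\mathbf{y},\mathbf{z}] + \mathbb{E}[h\mid\mathbf{x},\mathbf{z}] + \mathbb{E}[h\mid\mathbf{y},\mathbf{z}] - \mathbb{E}[h\mid\mathbf{z}].
\end{equation*}
Here $\mathbb{E}[h\mid\mathbf{v}] = h$ and $\mathbb{E}[h] = 0$, since $h$ is mean zero. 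This is the stated form of $\mathcal{T}$, since $\mathcal{T}$ is the range of $\Pi(\cdot\mid\mathcal{T})$ over $h \in \mathcal{H}$.

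Finally, $\mathcal{T}^\perp$ is the range of $I - \Pi(\cdot\mid\mathcal{T})$, which is exactly
\begin{equation*}
\mathbb{E}[h\mid\mathbf{x},\mathbf{y},\mathbf{z}] - \mathbb{E}[h\mid\mathbf{x},\mathbf{z}] - \mathbb{E}[h\mid\mathbf{y},\mathbf{z}] + \mathbb{E}[h\mid\mathbf{z}].
\end{equation*}
As a sanity check, this operator is idempotent and self-adjoint. Both properties can be verified by iterated expectations, which confirms that it is the orthogonal projection.
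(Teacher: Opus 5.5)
Your proposal is correct and follows essentially the same route as the paper. Both proofs factorize into the four variationally independent pieces, place every score in the four pairwise orthogonal subspaces via Lemma~\ref{lm:tangent_fact_1}, show each subspace lies in $\mathcal{T}$ using $p_0(1+\varepsilon f)$ submodels, and sum the four projections from Lemma~\ref{lm:tangent_fact_3}. Your explicit use of $\mathbf{X} \Perp \mathbf{Y} \mid \mathbf{Z}$ to get $\mathcal{T}_{\mathbf{X}\mid\mathbf{Z}} \perp \mathcal{T}_{\mathbf{Y}\mid\mathbf{Z}}$ is slightly cleaner than the paper. The paper cites only Lemma~\ref{lm:tangent_fact_2} for that non-nested pair, and then re-derives the needed cross-term identity when it checks $\langle f-\Pi(f),\Pi(g)\rangle=0$ directly.
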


\begin{proof}
Let $\mathbf{W} = \mathbf{V} \setminus (\mathbf{X} \dot{\cup} \mathbf{Y} \dot{\cup} \mathbf{Z})$. For any parametric submodel $\{p_{\varepsilon}(\mathbf{v})\}$ of $\mathscr{P}$, its elements factorize as
\begin{equation}
    \label{eq:lemma_factorization}
    p_{\varepsilon}(\mathbf{v}) = p_{\varepsilon}(\mathbf{w} \mid \mathbf{x}, \mathbf{y}, \mathbf{z}) p_{\varepsilon}(\mathbf{x} \mid \mathbf{z}) p_{\varepsilon}(\mathbf{y} \mid \mathbf{z}) p_{\varepsilon}(\mathbf{z})
\end{equation}
The total score function is $p_{\varepsilon}(\mathbf{v}) = \frac{\partial}{\partial \varepsilon} p_{\varepsilon}(\mathbf{v})|_{\varepsilon=0}$. By Lemma~\ref{lm:tangent_fact_1},
\begin{align}
    s(\mathbf{w} \mid \mathbf{x}, \mathbf{y}, \mathbf{z})
    & := \frac{\partial}{\partial \varepsilon} p_{\varepsilon}(\mathbf{w} \mid \mathbf{x}, \mathbf{y}, \mathbf{z}) |_{\varepsilon=0}
    = s(\mathbf{V}) - \EE[s(\mathbf{V}) \mid \mathbf{x}, \mathbf{y}, \mathbf{z}]
    \\
    s(\mathbf{x} \mid \mathbf{z})
    & := \frac{\partial}{\partial \varepsilon} p_{\varepsilon}(\mathbf{x} \mid \mathbf{z}) |_{\varepsilon=0}
    = \EE[s(\mathbf{V}) \mid \mathbf{x}, \mathbf{z}] - \EE[s(\mathbf{V}) \mid \mathbf{z}]
    \\
    s(\mathbf{y} \mid \mathbf{z})
    & := \frac{\partial}{\partial \varepsilon} p_{\varepsilon}(\mathbf{y} \mid \mathbf{z}) |_{\varepsilon=0}
    = \EE[s(\mathbf{V}) \mid \mathbf{y}, \mathbf{z}] - \EE[s(\mathbf{V}) \mid \mathbf{z}]
    \\
    s(\mathbf{z})
    & := \frac{\partial}{\partial \varepsilon} p_{\varepsilon}(\mathbf{z}) |_{\varepsilon=0}
    = \EE[s(\mathbf{V}) \mid \mathbf{z}]
\end{align}
Define the following subspaces
\begin{align}
    \mathcal{T}_{\mathbf{W} \mid \mathbf{XYZ}}
    & := \left\{ h - \EE[h \mid \mathbf{x}, \mathbf{y}, \mathbf{z}] : h \in \mathcal{H} \right\}
    \\
    \mathcal{T}_{\mathbf{X} \mid \mathbf{Z}}
    & := \left\{ \EE[h \mid \mathbf{x}, \mathbf{z}] - \EE[h \mid \mathbf{z}] : h \in \mathcal{H} \right\}
    \\
    \mathcal{T}_{\mathbf{Y} \mid \mathbf{Z}}
    & := \left\{ \EE[h \mid \mathbf{y}, \mathbf{z}] - \EE[h \mid \mathbf{z}] : h \in \mathcal{H} \right\}
    \\
    \mathcal{T}_{\mathbf{Z}}
    & := \left\{ \EE[h \mid \mathbf{z}] : h \in \mathcal{H} \right\}
\end{align}
It is evident that $s(\mathbf{w} \mid \mathbf{x}, \mathbf{y}, \mathbf{z}) \in \mathcal{T}_{\mathbf{W} \mid \mathbf{XYZ}}$, $s(\mathbf{x} \mid \mathbf{z}) \in \mathcal{T}_{\mathbf{X} \mid \mathbf{Z}}$, $s(\mathbf{y} \mid \mathbf{z}) \in \mathcal{T}_{\mathbf{Y} \mid \mathbf{Z}}$ and $s(\mathbf{z}) \in \mathcal{T}_{\mathbf{Z}}$. By Lemma~\ref{lm:tangent_fact_2}, these subspaces are orthogonal. Moreover, due to Equation~\ref{eq:lemma_factorization}
\begin{equation}
    s(\mathbf{v}) = s(\mathbf{w} \mid \mathbf{x}, \mathbf{y}, \mathbf{z}) + s(\mathbf{x} \mid \mathbf{z}) + s(\mathbf{y} \mid \mathbf{z}) + s(\mathbf{z})
\end{equation}
Then $s(\mathbf{v}) \in \mathcal{T}_{\mathbf{W} \mid \mathbf{XYZ}} \oplus \mathcal{T}_{\mathbf{X} \mid \mathbf{Z}} \oplus \mathcal{T}_{\mathbf{Y} \mid \mathbf{Z}} \oplus \mathcal{T}_{\mathbf{Z}}$. This means the tangent space of this model is $\mathcal{T} \subseteq \mathcal{T}_{\mathbf{W} \mid \mathbf{XYZ}} \oplus \mathcal{T}_{\mathbf{X} \mid \mathbf{Z}} \oplus \mathcal{T}_{\mathbf{Y} \mid \mathbf{Z}} \oplus \mathcal{T}_{\mathbf{Z}}$.

Next, we will show that each of the subspace is in $\mathcal{T}$. To do this, we must show that any function in these subspaces can be used to construct parametric submodel of $\mathscr{P}$.

Pick any $f = h - \EE[h \mid \mathbf{x}, \mathbf{y}, \mathbf{z}] \in \mathcal{T}_{\mathbf{W} \mid \mathbf{XYZ}}$ and construct a parametric model by $p_{\varepsilon} (\mathbf{v}) = p_{0} (\mathbf{v})(1 + \varepsilon f(\mathbf{v}))$. Since $\mathbb{E}[f] = 0$, $p_{\varepsilon}$ sums to 1. Moreover, the parameter $\varepsilon \in (- \delta, \delta)$ and $\delta$ is small enough so that $p_{\varepsilon} (\mathbf{v}) \geq 0$ at all $\mathbf{v}$. Then $p_{\varepsilon}$ is a valid probability distribution. This parametric model contains $p_0$ at $\varepsilon=0$, and the score function at $p_0$ is $\frac{\partial}{\partial \varepsilon} \log p_{\varepsilon} (\mathbf{v})|_{\varepsilon=0} = f$. What left to show is that the constraint holds for all $p_{\varepsilon}$. Indeed, for any $p_{\varepsilon}$,
\begin{equation}
     p_{\varepsilon}(\mathbf{x}, \mathbf{y}, \mathbf{z}) = p_{0} (\mathbf{x}, \mathbf{y}, \mathbf{z}) \left(1 + \varepsilon \int f(\mathbf{x}, \mathbf{y}, \mathbf{z}, \mathbf{w}) p_0(\mathbf{w} \mid \mathbf{x}, \mathbf{y}, \mathbf{z}) d\mathbf{w} \right) = p_{0} (\mathbf{x}, \mathbf{y}, \mathbf{z}),
\end{equation}
since $\EE[f \mid \mathbf{x}, \mathbf{y}, \mathbf{z}] = 0$. Therefore the constraint $\mathbf{X} \Perp \mathbf{Y} \mid \mathbf{Z}$ holds for $p_{\varepsilon}$, $p_{\varepsilon}(\mathbf{x} \mid \mathbf{y}, \mathbf{z}) = p_{\varepsilon}(\mathbf{x} \mid \mathbf{z})$, because it holds for $p_0$. This means $\{p_{\varepsilon} (\mathbf{v}): \varepsilon \in (- \delta, \delta)\}$ is also a parametric submodel of $\mathscr{P}$, hence $f \in \mathcal{T}$ and therefore $\mathcal{T}_{\mathbf{W} \mid \mathbf{XYZ}} \subseteq \mathcal{T}$.

Pick any $f = \EE[h \mid \mathbf{x}, \mathbf{z}]  - \EE[h \mid \mathbf{z}] \in \mathcal{T}_{\mathbf{X} \mid \mathbf{Z}}$ and construct a parametric model by $p_{\varepsilon} (\mathbf{v}) = p_{0} (\mathbf{v})(1 + \varepsilon f(\mathbf{v}))$. By the same reasoning, all we need to show is the constraint holds for all $p_{\varepsilon} (\mathbf{v})$.
\begin{equation}
\begin{aligned}
     p_{\varepsilon}(\mathbf{x}, \mathbf{y}, \mathbf{z}) & = p_{0} (\mathbf{x}, \mathbf{y}, \mathbf{z}) (1 + \varepsilon \EE [f \mid \mathbf{x}, \mathbf{y}, \mathbf{z}])
     = p_{0} (\mathbf{x}, \mathbf{y}, \mathbf{z}) (1 + \varepsilon \EE[h \mid \mathbf{x}, \mathbf{z}]  - \varepsilon \EE[h \mid \mathbf{z}])
     \\
     p_{\varepsilon}(\mathbf{y}, \mathbf{z}) & = p_{0} (\mathbf{y}, \mathbf{z}) (1 + \varepsilon \EE [f \mid \mathbf{y}, \mathbf{z}])
     = p_{0} (\mathbf{y}, \mathbf{z}) \bigg(1 + \varepsilon \int \EE[h \mid \mathbf{x}, \mathbf{z}] p_0(\mathbf{x} \mid \mathbf{y}, \mathbf{z}) d\mathbf{x}  - \varepsilon \EE[h \mid \mathbf{z}]\bigg)
     \\
     & = p_{0} (\mathbf{y}, \mathbf{z}) \bigg(1 + \varepsilon \int \EE[h \mid \mathbf{x}, \mathbf{z}] p_0(\mathbf{x} \mid \mathbf{z}) d\mathbf{x}  - \varepsilon \EE[h \mid \mathbf{z}]\bigg)
     \quad \quad (\mathbf{X} \Perp \mathbf{Y} \mid \mathbf{Z} \text{ at }p_0)
     \\
     & = p_{0} (\mathbf{y}, \mathbf{z})
     \\
     p_{\varepsilon}(\mathbf{x}, \mathbf{z}) & = p_{0} (\mathbf{x}, \mathbf{z}) (1 + \varepsilon \EE [f \mid \mathbf{x}, \mathbf{z}])
     = p_{0} (\mathbf{x}, \mathbf{z}) (1 + \varepsilon \EE[h \mid \mathbf{x}, \mathbf{z}]  - \varepsilon \EE[h \mid \mathbf{z}])
     \\
     p_{\varepsilon}(\mathbf{z}) & = p_{0} (\mathbf{z}) (1 + \varepsilon \EE [f \mid \mathbf{z}]) = p_{0} (\mathbf{z})
     \quad \quad \quad \quad (\text{since } \EE [f \mid \mathbf{z}] = 0)
\end{aligned}
\end{equation}
Therefore,
\begin{equation}
\begin{aligned}
    p_{\varepsilon}(\mathbf{x} \mid \mathbf{y}, \mathbf{z})
    & := \frac{p_{\varepsilon}(\mathbf{x}, \mathbf{y}, \mathbf{z})}{p_{\varepsilon}(\mathbf{y}, \mathbf{z})}
    = p_{0} (\mathbf{x} \mid \mathbf{y}, \mathbf{z}) (1 + \varepsilon \EE[h \mid \mathbf{x}, \mathbf{z}]  - \varepsilon \EE[h \mid \mathbf{z}])
    \\
    & = p_{0} (\mathbf{x} \mid \mathbf{z}) (1 + \varepsilon \EE[h \mid \mathbf{x}, \mathbf{z}]  - \varepsilon \EE[h \mid \mathbf{z}])
    = \frac{p_{\varepsilon}(\mathbf{x}, \mathbf{z})}{p_{\varepsilon}(\mathbf{z})} = p_{\varepsilon}(\mathbf{x} \mid \mathbf{z})
\end{aligned}
\end{equation}
Hence, the constraint holds for all $p_{\varepsilon}$, so $\{p_{\varepsilon}\}$ is a parametric submodel of $\mathscr{P}$, and therefore $f \in \mathcal{T}_{\mathbf{X} \mid \mathbf{Z}} \subseteq \mathcal{T}$.

Pick any $f = \EE[h \mid \mathbf{y}, \mathbf{z}]  - \EE[h \mid \mathbf{z}] \in \mathcal{T}_{\mathbf{X} \mid \mathbf{Z}}$ and construct a parametric model by $p_{\varepsilon} (\mathbf{v}) = p_{0} (\mathbf{v})(1 + \varepsilon f(\mathbf{v}))$. By the same reasoning, all we need to show is the constraint holds for all $p_{\varepsilon} (\mathbf{v})$.
\begin{equation}
\begin{aligned}
     p_{\varepsilon}(\mathbf{x}, \mathbf{y}, \mathbf{z}) & = p_{0} (\mathbf{x}, \mathbf{y}, \mathbf{z}) (1 + \varepsilon \EE [f \mid \mathbf{x}, \mathbf{y}, \mathbf{z}])
     = p_{0} (\mathbf{x}, \mathbf{y}, \mathbf{z}) (1 + \varepsilon \EE[h \mid \mathbf{y}, \mathbf{z}]  - \varepsilon \EE[h \mid \mathbf{z}])
     \\
     p_{\varepsilon}(\mathbf{y}, \mathbf{z}) & = p_{0} (\mathbf{y}, \mathbf{z}) (1 + \varepsilon \EE [f \mid \mathbf{y}, \mathbf{z}])
     = p_{0} (\mathbf{y}, \mathbf{z}) (1 + \varepsilon \EE[h \mid \mathbf{y}, \mathbf{z}]  - \varepsilon \EE[h \mid \mathbf{z}])
     \\
     p_{\varepsilon}(\mathbf{x}, \mathbf{z}) & = p_{0} (\mathbf{x}, \mathbf{z}) (1 + \varepsilon \EE [f \mid \mathbf{x}, \mathbf{z}])
     = p_{0} (\mathbf{x}, \mathbf{z}) \bigg(1 + \varepsilon \int \EE[h \mid \mathbf{y}, \mathbf{z}] p_0(\mathbf{y} \mid \mathbf{x}, \mathbf{z}) d\mathbf{y}  - \varepsilon \EE[h \mid \mathbf{z}]\bigg)
     \\
     & = p_{0} (\mathbf{x}, \mathbf{z}) \bigg(1 + \varepsilon \int \EE[h \mid \mathbf{y}, \mathbf{z}] p_0(\mathbf{y} \mid \mathbf{z}) d\mathbf{x}  - \varepsilon \EE[h \mid \mathbf{z}]\bigg)
     \quad \quad (\mathbf{Y} \Perp \mathbf{X} \mid \mathbf{Z} \text{ at }p_0)
     \\
     & = p_{0} (\mathbf{x}, \mathbf{z})
     \\
     p_{\varepsilon}(\mathbf{z}) & = p_{0} (\mathbf{z}) (1 + \varepsilon \EE [f \mid \mathbf{z}]) = p_{0} (\mathbf{z})
     \quad \quad \quad \quad (\text{since } \EE [f \mid \mathbf{z}] = 0)
\end{aligned}
\end{equation}
Therefore,
\begin{equation}
\begin{aligned}
    p_{\varepsilon}(\mathbf{x} \mid \mathbf{y}, \mathbf{z})
    & := \frac{p_{\varepsilon}(\mathbf{x}, \mathbf{y}, \mathbf{z})}{p_{\varepsilon}(\mathbf{y}, \mathbf{z})}
    = p_{0} (\mathbf{x} \mid \mathbf{y}, \mathbf{z}) = p_{0}(\mathbf{x} \mid \mathbf{z}) = p_{\varepsilon}(\mathbf{x} \mid \mathbf{z})
\end{aligned}
\end{equation}
Hence, the constraint holds for all $p_{\varepsilon}$, so $\{p_{\varepsilon}\}$ is a parametric submodel of $\mathscr{P}$, and therefore $f \in \mathcal{T}_{\mathbf{Y} \mid \mathbf{Z}} \subseteq \mathcal{T}$.

Pick any $f = \EE[h \mid \mathbf{z}] \in \mathcal{T}_{\mathbf{Z}}$ and construct a parametric model by $p_{\varepsilon} (\mathbf{v}) = p_{0} (\mathbf{v})(1 + \varepsilon f(\mathbf{v}))$. Then
\begin{equation}
\begin{aligned}
     p_{\varepsilon}(\mathbf{x}, \mathbf{y}, \mathbf{z}) & = p_{0} (\mathbf{x}, \mathbf{y}, \mathbf{z}) (1 + \varepsilon \EE [f \mid \mathbf{x}, \mathbf{y}, \mathbf{z}])
     = p_{0} (\mathbf{x}, \mathbf{y}, \mathbf{z}) (1 + \varepsilon \EE[h \mid \mathbf{z}])
     \\
     p_{\varepsilon}(\mathbf{y}, \mathbf{z}) & = p_{0} (\mathbf{y}, \mathbf{z}) (1 + \varepsilon \EE [f \mid \mathbf{y}, \mathbf{z}])
     = p_{0} (\mathbf{y}, \mathbf{z}) (1 + \varepsilon \EE[h \mid \mathbf{z}])
     \\
     p_{\varepsilon}(\mathbf{x}, \mathbf{z}) & = p_{0} (\mathbf{x}, \mathbf{z}) (1 + \varepsilon \EE [f \mid \mathbf{x}, \mathbf{z}])
     = p_{0} (\mathbf{x}, \mathbf{z}) (1 + \varepsilon \EE[h \mid \mathbf{z}])
     \\
     p_{\varepsilon}(\mathbf{z}) & = p_{0} (\mathbf{z}) (1 + \varepsilon \EE [f \mid \mathbf{z}]) = = p_{0} (\mathbf{z}) (1 + \varepsilon \EE[h \mid \mathbf{z}])
\end{aligned}
\end{equation}
Therefore,
\begin{equation}
\begin{aligned}
    p_{\varepsilon}(\mathbf{x} \mid \mathbf{y}, \mathbf{z})
    & := \frac{p_{\varepsilon}(\mathbf{x}, \mathbf{y}, \mathbf{z})}{p_{\varepsilon}(\mathbf{y}, \mathbf{z})}
    = p_{0} (\mathbf{x} \mid \mathbf{y}, \mathbf{z}) = p_{0}(\mathbf{x} \mid \mathbf{z}) = p_{\varepsilon}(\mathbf{x} \mid \mathbf{z})
\end{aligned}
\end{equation}
Hence, the constraint holds for all $p_{\varepsilon}$, so $\{p_{\varepsilon}\}$ is a parametric submodel of $\mathscr{P}$, and therefore $f \in \mathcal{T}_{\mathbf{Z}} \subseteq \mathcal{T}$.

These results imply $\mathcal{T}_{\mathbf{W} \mid \mathbf{XYZ}} \oplus \mathcal{T}_{\mathbf{X} \mid \mathbf{Z}} \oplus \mathcal{T}_{\mathbf{Y} \mid \mathbf{Z}} \oplus \mathcal{T}_{\mathbf{Z}} \subseteq \mathcal{T}$, and with previous result, we get $\mathcal{T} = \mathcal{T}_{\mathbf{W} \mid \mathbf{XYZ}} \oplus \mathcal{T}_{\mathbf{X} \mid \mathbf{Z}} \oplus \mathcal{T}_{\mathbf{Y} \mid \mathbf{Z}} \oplus \mathcal{T}_{\mathbf{Z}}$.

Define the following operators
\begin{align}
    \Pi(h \mid \mathcal{T}_{\mathbf{W} \mid \mathbf{XYZ}}) & = h - \EE[h \mid \mathbf{x}, \mathbf{y}, \mathbf{z}]
    \\
    \Pi(h \mid \mathcal{T}_{\mathbf{X} \mid \mathbf{Z}}) & = \EE[h \mid \mathbf{x}, \mathbf{z}] - \EE[h \mid \mathbf{z}]
    \\
    \Pi(h \mid \mathcal{T}_{\mathbf{Y} \mid \mathbf{Z}}) & = \EE[h \mid \mathbf{y}, \mathbf{z}] - \EE[h \mid \mathbf{z}]
    \\
    \Pi(h \mid \mathcal{T}_{\mathbf{Z}}) & = \EE[h \mid \mathbf{z}]
\end{align}
By Lemma~\ref{lm:tangent_fact_3}, $\Pi(\cdot \mid \mathcal{T}_{\mathbf{W} \mid \mathbf{XYZ}}), \Pi(\cdot \mid \mathcal{T}_{\mathbf{X} \mid \mathbf{Z}}), \Pi(\cdot \mid \mathcal{T}_{\mathbf{Y} \mid \mathbf{Z}}), \Pi(\cdot \mid \mathcal{T}_{\mathbf{Z}})$ are the orthogonal projection operators onto $\mathcal{T}_{\mathbf{W} \mid \mathbf{XYZ}}, \mathcal{T}_{\mathbf{X} \mid \mathbf{Z}}, \mathcal{T}_{\mathbf{Y} \mid \mathbf{Z}}, \mathcal{T}_{\mathbf{Z}}$, respectively.

Define the following operator
\begin{equation}
\begin{aligned}
    \Pi(h)
    & = \Pi(h \mid \mathcal{T}_{\mathbf{W} \mid \mathbf{XYZ}}) + \Pi(h \mid \mathcal{T}_{\mathbf{X} \mid \mathbf{Z}}) + \Pi(h \mid \mathcal{T}_{\mathbf{Y} \mid \mathbf{Z}}) + \Pi(h \mid \mathcal{T}_{\mathbf{Z}})
    \\
    & = h - \EE[h \mid \mathbf{x}, \mathbf{y}, \mathbf{z}] + \EE[h \mid \mathbf{x}, \mathbf{z}] - \EE[h \mid \mathbf{z}] + \EE[h \mid \mathbf{y}, \mathbf{z}]
\end{aligned}
\end{equation}
We will show that $\Pi$ is the projection operator onto the tangent space $\mathcal{T}$.

First, we will show $\mathcal{T} = \{\Pi(h) : \forall h \in \mathcal{H}\}$. For the first direction, $\Pi(h) \in \mathcal{T}$ for all $h \in \mathcal{H}$, obviously. For the other direction, since $\mathcal{T} = \mathcal{T}_{\mathbf{W} \mid \mathbf{XYZ}} \oplus \mathcal{T}_{\mathbf{X} \mid \mathbf{Z}} \oplus \mathcal{T}_{\mathbf{Y} \mid \mathbf{Z}} \oplus \mathcal{T}_{\mathbf{Z}}$, any $f \in \mathcal{T}$ can be written as, for some $h_1, h_2, h_3, h_4 \in \mathcal{H}$,
\begin{equation}
    f = h_1 - \EE[h_1 \mid \mathbf{x}, \mathbf{y}, \mathbf{z}] + \EE[h_2 \mid \mathbf{x}, \mathbf{z}] - \EE[h_2 \mid \mathbf{z}] + \EE[h_3 \mid \mathbf{y}, \mathbf{z}] - \EE[h_3 \mid \mathbf{z}] + \EE[h_4 \mid \mathbf{z}].
\end{equation}
Because the subspaces $\mathcal{T}_{\mathbf{W} \mid \mathbf{XYZ}}, \mathcal{T}_{\mathbf{X} \mid \mathbf{Z}}, \mathcal{T}_{\mathbf{Y} \mid \mathbf{Z}}, \mathcal{T}_{\mathbf{Z}}$ are orthogonal,
\begin{equation}
\begin{aligned}
    \Pi(f \mid \mathcal{T}_{\mathbf{W} \mid \mathbf{XYZ}}) & = h_1 - \EE[h_1 \mid \mathbf{x}, \mathbf{y}, \mathbf{z}]
    \\
    \Pi(f \mid \mathcal{T}_{\mathbf{X} \mid \mathbf{Z}}) & = \EE[h_2 \mid \mathbf{x}, \mathbf{z}] - \EE[h_2 \mid \mathbf{z}]
    \\
    \Pi(f \mid \mathcal{T}_{\mathbf{Y} \mid \mathbf{Z}}) & = \EE[h_3 \mid \mathbf{y}, \mathbf{z}] - \EE[h_3 \mid \mathbf{z}]
    \\
    \Pi(f \mid \mathcal{T}_{\mathbf{Z}}) & = \EE[h_4 \mid \mathbf{z}].
\end{aligned}
\end{equation}
Therefore, $f = \Pi(f)$, hence $\mathcal{T} \subseteq \{\Pi(h) : h \in \mathcal{H}\}$. This implies $\mathcal{T} = \{\Pi(h) : h \in \mathcal{H}\}$.

Second, we will show that $f - \Pi(f)$ and $\Pi(g)$ are orthogonal, for all $f, g \in \mathcal{H}$.
\begin{equation}
\begin{aligned}
    & \EE \bigg[f - \Pi(f), \Pi(g) \bigg]
    = \EE \bigg[f \Pi(g) \bigg] - \EE \bigg[\Pi(f) \Pi(g) \bigg]
    \\
    & = \EE \bigg[ \bigg( \EE[f \mid \mathbf{X}, \mathbf{Y}, \mathbf{Z}] - \EE[f \mid \mathbf{X}, \mathbf{Z}] + \EE[f \mid \mathbf{Z}] - \EE[f \mid \mathbf{Y}, \mathbf{Z}] \bigg)
    \times \bigg( g - \EE[g \mid \mathbf{X}, \mathbf{Y}, \mathbf{Z}] + \EE[g \mid \mathbf{X}, \mathbf{Z}] - \EE[g \mid \mathbf{Z}] + \EE[g \mid \mathbf{Y}, \mathbf{Z}] \bigg) \bigg]
    \\
    & = \EE \bigg[ \bigg( \EE[f \mid \mathbf{X}, \mathbf{Y}, \mathbf{Z}] - \EE[f \mid \mathbf{X}, \mathbf{Z}] + \EE[f \mid \mathbf{Z}] - \EE[f \mid \mathbf{Y}, \mathbf{Z}] \bigg)
    \times \bigg( \EE[g \mid \mathbf{X}, \mathbf{Z}] - \EE[g \mid \mathbf{Z}] + \EE[g \mid \mathbf{Y}, \mathbf{Z}] \bigg) \bigg]
    \\
    & = \EE \big[ \EE[f \mid \mathbf{X}, \mathbf{Y}, \mathbf{Z}] \EE[g \mid \mathbf{X}, \mathbf{Z}] \big] - \EE \big[ \EE[f \mid \mathbf{X}, \mathbf{Z}] \EE[g \mid \mathbf{X}, \mathbf{Z}] \big] + \EE \big[ \EE[f \mid \mathbf{Z}] \EE[g \mid \mathbf{X}, \mathbf{Z}] \big] - \EE \big[ \EE[f \mid \mathbf{Y}, \mathbf{Z}] \EE[g \mid \mathbf{X}, \mathbf{Z}] \big]
    \\
    & - \EE \big[ \EE[f \mid \mathbf{X}, \mathbf{Y}, \mathbf{Z}] \EE[g \mid \mathbf{Z}] \big] + \EE \big[ \EE[f \mid \mathbf{X}, \mathbf{Z}] \EE[g \mid \mathbf{Z}] \big] - \EE \big[ \EE[f \mid \mathbf{Z}] \EE[g \mid \mathbf{Z}] \big] + \EE \big[ \EE[f \mid \mathbf{Y}, \mathbf{Z}] \EE[g \mid \mathbf{Z}] \big]
    \\
    & + \EE \big[ \EE[f \mid \mathbf{X}, \mathbf{Y}, \mathbf{Z}] \EE[g \mid \mathbf{Y}, \mathbf{Z}] \big] - \EE \big[ \EE[f \mid \mathbf{X}, \mathbf{Z}] \EE[g \mid \mathbf{Y}, \mathbf{Z}] \big] + \EE \big[ \EE[f \mid \mathbf{Z}] \EE[g \mid \mathbf{Y}, \mathbf{Z}] \big] - \EE \big[ \EE[f \mid \mathbf{Y}, \mathbf{Z}] \EE[g \mid \mathbf{Y}, \mathbf{Z}] \big]
    \\
    & = \EE \big[ \EE[f \mid \mathbf{X}, \mathbf{Z}] \EE[g \mid \mathbf{X}, \mathbf{Z}] \big] - \EE \big[ \EE[f \mid \mathbf{X}, \mathbf{Z}] \EE[g \mid \mathbf{X}, \mathbf{Z}] \big] + \EE \big[ \EE[f \mid \mathbf{Z}] \EE[g \mid \mathbf{Z}] \big] - \EE \big[ \EE[f \mid \mathbf{Y}, \mathbf{Z}] \EE[g \mid \mathbf{X}, \mathbf{Z}] \big]
    \\
    & - \EE \big[ \EE[f \mid \mathbf{Z}] \EE[g \mid \mathbf{Z}] \big] + \EE \big[ \EE[f \mid \mathbf{Z}] \EE[g \mid \mathbf{Z}] \big] - \EE \big[ \EE[f \mid \mathbf{Z}] \EE[g \mid \mathbf{Z}] \big] + \EE \big[ \EE[f \mid \mathbf{Z}] \EE[g \mid \mathbf{Z}] \big]
    \\
    & + \EE \big[ \EE[f \mid \mathbf{Y}, \mathbf{Z}] \EE[g \mid \mathbf{Y}, \mathbf{Z}] \big] - \EE \big[ \EE[f \mid \mathbf{X}, \mathbf{Z}] \EE[g \mid \mathbf{Y}, \mathbf{Z}] \big] + \EE \big[ \EE[f \mid \mathbf{Z}] \EE[g \mid \mathbf{Z}] \big] - \EE \big[ \EE[f \mid \mathbf{Y}, \mathbf{Z}] \EE[g \mid \mathbf{Y}, \mathbf{Z}] \big]
    \\
    & = - \EE \big[ \EE[f \mid \mathbf{Y}, \mathbf{Z}] \EE[g \mid \mathbf{X}, \mathbf{Z}] \big] - \EE \big[ \EE[f \mid \mathbf{X}, \mathbf{Z}] \EE[g \mid \mathbf{Y}, \mathbf{Z}] \big] + 2 \EE \big[ \EE[f \mid \mathbf{Z}] \EE[g \mid \mathbf{Z}] \big]
\end{aligned}
\end{equation}
Due to the constraint
\begin{equation}
\begin{aligned}
    \EE \big[ \EE[f \mid \mathbf{Y}, \mathbf{Z}] \EE[g \mid \mathbf{X}, \mathbf{Z}] \big]
    & = \int \EE[f \mid \mathbf{y}, \mathbf{z}] \EE[g \mid \mathbf{x}, \mathbf{z}] p_0(\mathbf{x}, \mathbf{y}, \mathbf{z}) d\mathbf{x} d\mathbf{y} d\mathbf{z}
    \\
    & = \int \EE[f \mid \mathbf{y}, \mathbf{z}] \EE[g \mid \mathbf{x}, \mathbf{z}] p_0(\mathbf{x} \mid \mathbf{z}) p_0(\mathbf{y} \mid \mathbf{z}) p_0(\mathbf{z}) d\mathbf{x} d\mathbf{y} d\mathbf{z}
    \\
    & = \int \EE[f \mid \mathbf{z}] \EE[g \mid \mathbf{z}] p_0(\mathbf{z}) d\mathbf{z}
    \\
    & = \EE \big[ \EE[f \mid \mathbf{Z}] \EE[g \mid \mathbf{Z}] \big]
\end{aligned}
\end{equation}
Similarly, $\EE \big[ \EE[f \mid \mathbf{X}, \mathbf{Z}] \EE[g \mid \mathbf{Y}, \mathbf{Z}] \big] = \EE \big[ \EE[f \mid \mathbf{Z}] \EE[g \mid \mathbf{Z}] \big]$. Therefore, the inner product vanishes,\\ $\EE \bigg[f - \Pi(f), \Pi(g) \bigg] = 0$.

In conclusion, the operator defined by
\begin{equation}
    \Pi(h \mid \mathcal{T}) = h - \EE[h \mid \mathbf{x}, \mathbf{y}, \mathbf{z}] + \EE[h \mid \mathbf{x}, \mathbf{z}] - \EE[h \mid \mathbf{z}] + \EE[h \mid \mathbf{y}, \mathbf{z}]
\end{equation}
is the projection operator onto the tangent space of this model, hence $\mathcal{T} = \{\Pi(h \mid \mathcal{T}) : \forall h \in \mathcal{H}\}$. The operator onto its orthogonal complement is
\begin{equation}
\begin{aligned}
    \Pi(h \mid \mathcal{T}^\perp) & = h - \Pi(h \mid \mathcal{T})
    \\
    & = \EE[h \mid \mathbf{x}, \mathbf{y}, \mathbf{z}] - \EE[h \mid \mathbf{x}, \mathbf{z}] + \EE[h \mid \mathbf{z}] - \EE[h \mid \mathbf{y}, \mathbf{z}]
\end{aligned}
\end{equation}
and $\mathcal{T}^\perp = \{\Pi(h \mid \mathcal{T}^\perp) : \forall h \in \mathcal{H}\}$.

\end{proof}

\section{Proofs in Section \ref{sec:ci_tangent}}

\begin{thma}{\ref{thm:general_ci_model_orthocomp}}
    Suppose $\mathscr{P}$ is a general semi-parametric Markov model defined by $K \geq 1$ conditional independence constraints $\mathbf{X}_i \Perp \mathbf{Y}_i \mid \mathbf{Z}_i$, where $i = 1, \ldots, K$. For each $i$, let $\mathscr{P}_i$ be the single independence DAG model defined by the $i$-th constraint. The orthogonal projection operator $\Pi(\cdot \mid \mathcal{T}_i^\perp): \mathcal{H} \mapsto \mathcal{H}$ sends each function $h \in \mathcal{H}$ to the orthogonal complement $\mathcal{T}_i^\perp$ of the tangent space $\mathcal{T}_i$ of $\mathscr{P}_i$,
    \begin{equation}
    \begin{aligned}
        \Pi(h \mid \mathcal{T}_i^\perp) (\mathbf{x}_{i}, \mathbf{y}_{i}, \mathbf{z}_{i})
        = \mathbb{E}[h | \mathbf{x}_{i}, \mathbf{y}_{i}, \mathbf{z}_{i}] - \mathbb{E}[h | \mathbf{x}_{i}, \mathbf{z}_{i}]
        - \mathbb{E}[h | \mathbf{y}_{i}, \mathbf{z}_{i}] + \mathbb{E}[h | \mathbf{z}_{i}].
    \end{aligned}
    \end{equation}
    The orthogonal complement of the tangent space of $\mathscr{P}$ is the direct sum of these orthogonal complements,
    \begin{equation}
    \label{eq:dag_orthocomp_general}
    \begin{aligned}
        \mathcal{T}^{\perp} = \left\{ \sum_{i = 1}^K \Pi(h_i \mid \mathcal{T}_i^\perp) \: : \: \forall h_1, \ldots, h_K \in \mathcal{H} \right\}.
    \end{aligned}
    \end{equation}
\end{thma}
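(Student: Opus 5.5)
The proof will have three steps, in this order: identify each $\mathcal{T}_i^\perp$ via Lemma~\ref{lm:single_dag_model_orthocomp}, show that $\mathcal{T} = \mathcal{T}_1 \cap \cdots \cap \mathcal{T}_K$, and then pass to orthocomplements, generalizing the proof of Lemma~\ref{lm:bell_model_orthocomp}.

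\textbf{Step 1: the single-constraint pieces.} For each $i$, the model $\mathscr{P}_i$ is a single independence DAG model. Lemma~\ref{lm:single_dag_model_orthocomp} shows that $\Pi(\cdot \mid \mathcal{T}_i) = I - \Pi(\cdot \mid \mathcal{T}_i^\perp)$ is the orthogonal projection onto $\mathcal{T}_i$, with $\Pi(\cdot \mid \mathcal{T}_i^\perp)$ given by the displayed four-term formula. Consequently $\mathcal{T}_i^\perp = \{\Pi(h \mid \mathcal{T}_i^\perp) : h \in \mathcal{H}\}$, and both $\mathcal{T}_i$ and $\mathcal{T}_i^\perp$ are closed, being ranges of orthogonal projections. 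This settles the first claim of the theorem.

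\textbf{Step 2: the tangent space is the intersection.} I will show $\mathcal{T} = \mathcal{T}_1 \cap \cdots \cap \mathcal{T}_K$.

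For $\subseteq$: any parametric submodel of $\mathscr{P}$ is also a submodel of every $\mathscr{P}_i$, so its score lies in each $\mathcal{T}_i$. Each $\mathcal{T}_i$ is closed, so the closure $\mathcal{T}$ is contained in the intersection.

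For $\supseteq$: take $f \in \bigcap_i \mathcal{T}_i$ and consider the path $p_\varepsilon = p_0(1+\varepsilon f)$. Assume $f$ is bounded for now, so the path is a valid density for small $\varepsilon$. Membership $f \in \mathcal{T}_i$ means $\Pi(f \mid \mathcal{T}_i^\perp)=0$, i.e.
\[
\EE[f \mid \mathbf{x}_i,\mathbf{y}_i,\mathbf{z}_i] = \EE[f\mid \mathbf{x}_i,\mathbf{z}_i]+\EE[f\mid\mathbf{y}_i,\mathbf{z}_i]-\EE[f\mid\mathbf{z}_i].
\]
From this I need exact (not just first-order) preservation of the $i$-th CI along the path, namely
\[
p_\varepsilon(\mathbf{x}_i,\mathbf{y}_i,\mathbf{z}_i)\,p_\varepsilon(\mathbf{z}_i)=p_\varepsilon(\mathbf{x}_i,\mathbf{z}_i)\,p_\varepsilon(\mathbf{y}_i,\mathbf{z}_i).
\]
Writing $a=\EE[f\mid\mathbf{x}_i,\mathbf{z}_i]-\EE[f\mid\mathbf{z}_i]$, $b=\EE[f\mid\mathbf{y}_i,\mathbf{z}_i]-\EE[f\mid\mathbf{z}_i]$ and $c=\EE[f\mid\mathbf{z}_i]$, the left side is proportional to $(1+\varepsilon(a+b+c))(1+\varepsilon c)$ and the right side to $(1+\varepsilon(a+c))(1+\varepsilon(b+c))$. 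These differ by the second-order term $\varepsilon^2 ab$, so the linear path fails.

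I would repair this with a product-form path. For each $i$, take
\[
p_\varepsilon \propto p_0\,(1+\varepsilon r_i)(1+\varepsilon a)(1+\varepsilon b)(1+\varepsilon c), \qquad r_i = f - \EE[f\mid \mathbf{x}_i,\mathbf{y}_i,\mathbf{z}_i].
\]
This path satisfies the $i$-th CI exactly and has score $f$. For a single constraint that is fine. For $K$ constraints, however, one path must satisfy all CIs at once, and each constraint suggests a different factorization.

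The cleaner route is to invoke the standard convention that the tangent set of an intersection model is defined through pathwise scores. Concretely, I would argue that $\mathcal{T}$ is the closure of the set of bounded $f$ such that some smooth path in $\mathscr{P}$ has score $f$, and that under the usual regularity assumptions the constraints are only required to first order. I would take this as the definitional convention, consistent with how the paper writes the score constraint equations \eqref{eq:bell_total_score_p1}--\eqref{eq:bell_total_score_p2}. Unbounded $f$ are then handled by density of bounded functions and closedness.

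\textbf{Step 3: passing to orthocomplements.} With $\mathcal{T} = \bigcap_i \mathcal{T}_i$ established, I generalize the proof of Lemma~\ref{lm:bell_model_orthocomp}:
\[
\mathcal{T}^\perp = \Big(\bigcap_i \mathcal{T}_i\Big)^\perp = \overline{\mathcal{T}_1^\perp+\cdots+\mathcal{T}_K^\perp}.
\]
The inclusion $\supseteq$ is immediate, and $\subseteq$ follows by taking the orthocomplement of $\big(\sum_i \mathcal{T}_i^\perp\big)^\perp \subseteq \bigcap_i \mathcal{T}_i$.

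Two points need care here.
\begin{itemize}
\item The subspaces $\mathcal{T}_i^\perp$ need not be mutually orthogonal (as in the Bell example). So ``direct sum'' should be read as the sum, and the identification of $\mathcal{T}^\perp$ with $\{\sum_i \Pi(h_i\mid\mathcal{T}_i^\perp)\}$ requires the finite sum of closed subspaces to be closed.
\item I would argue closedness by induction on $K$. A sum $A+B$ of closed subspaces is closed iff the angle between them is positive. I would try to verify this using the conditional-expectation structure, or else under a finite-state-space or bounded-likelihood-ratio assumption, where all the relevant spaces are finite-dimensional and the sum is trivially closed.
\end{itemize}

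\textbf{Expected main obstacle.} The hardest parts are the exactness of the intersection claim in Step 2 and the closedness of the sum in Step 3. Absent a closedness argument, the honest conclusion is that $\mathcal{T}^\perp$ is the closure of the displayed set.
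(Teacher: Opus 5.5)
Your outline has the same three-step skeleton as the paper's proof: Lemma~\ref{lm:single_dag_model_orthocomp} for each $\mathcal{T}_i^\perp$, then $\mathcal{T}=\bigcap_i\mathcal{T}_i$, then the argument of Lemma~\ref{lm:bell_model_orthocomp}. As written, though, it does not prove the statement, because the two steps you single out are left open.

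\emph{Step 2.} The inclusion $\bigcap_i\mathcal{T}_i\subseteq\mathcal{T}$ is the substantive part. Under the paper's definition, $\mathcal{T}$ is the closure of scores of parametric submodels lying in $\mathscr{P}$, i.e.\ paths satisfying all $K$ conditional independences exactly. Declaring that the constraints need only hold to first order amounts to defining $\mathcal{T}$ as the solution set of the linearized constraints. By Lemma~\ref{lm:single_dag_model_orthocomp} that set is $\bigcap_i\mathcal{T}_i$, so the step becomes a tautology. The score constraint equations of Section~\ref{ssec:implicit_form} are necessary conditions on scores, not a definition of $\mathcal{T}$. This is also not a harmless convention: for intersections of smooth models that meet non-transversally at $p_0$, the tangent space of the intersection can be strictly smaller than the intersection of the tangent spaces. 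What is missing is one of two things:
\begin{itemize}
\item a single path that satisfies all $K$ constraints exactly and has score $f$ (your product path handles only one constraint at a time); or
\item a regularity hypothesis under which an implicit-function argument gives $\mathcal{T}=\bigcap_i\mathcal{T}_i$, e.g.\ positivity of $p_0$ and constant rank of the joint constraint map near $p_0$ in the finite-state case.
\end{itemize}

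\emph{Step 3.} You correctly obtain $\mathcal{T}^\perp=\overline{\mathcal{T}_1^\perp+\cdots+\mathcal{T}_K^\perp}$. The theorem, however, asserts equality with the unclosed set $\{\sum_i\Pi(h_i\mid\mathcal{T}_i^\perp)\}$, and you give no argument that this sum is closed beyond finite state spaces.

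You are not, however, missing an idea that the paper supplies. Its proof passes through exactly these two points, and your objections apply to it.

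For $\bigcap_i\mathcal{T}_i\subseteq\mathcal{T}$, the paper takes $p_\varepsilon=p_0(1+\varepsilon f)$ and appeals to the proof of Lemma~\ref{lm:single_dag_model_orthocomp}. That proof verifies the conditional independence only for $f$ in one of the four orthogonal pieces $\mathcal{T}_{\mathbf{W}\mid\mathbf{XYZ}},\mathcal{T}_{\mathbf{X}\mid\mathbf{Z}},\mathcal{T}_{\mathbf{Y}\mid\mathbf{Z}},\mathcal{T}_{\mathbf{Z}}$ at a time. That suffices there, because $\mathcal{T}$ is a closed linear space. For general $f\in\mathcal{T}_i$, your $\varepsilon^2ab$ computation shows the linear path leaves $\mathscr{P}_i$. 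For an intersection model there is no evident way to split $f$ into pieces that are exact for every constraint simultaneously.

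For Step 3, the paper asserts that the span of the $\mathcal{T}_i^\perp$ is closed because each summand is closed, which does not follow in an infinite-dimensional Hilbert space. The sum is also not direct: in the Bell model, any bounded nonzero $u(a)v(b)$ with $\mathbb{E}[u(A)]=\mathbb{E}[v(B)]=0$ lies in $\mathcal{T}_1^\perp\cap\mathcal{T}_2^\perp$.

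So both your argument and the paper's are complete only under extra assumptions of the kind described above. Your proposal is the more accurate of the two about where those assumptions enter.
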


\begin{proof}
First, we will show $\mathcal{T} = \mathcal{T}_1 \cap \ldots \cap \mathcal{T}_K$.

Let $\{p_{\varepsilon}(\mathbf{v}) : \varepsilon \in (- \delta, \delta)\}$ be any parametric submodel of $\mathscr{P}$, with total score function $s(\mathbf{v})$ at $p_0$. For each $i$, $p_{\varepsilon}(\mathbf{v})$ satisfies the constraint $\mathbf{X}_i \Perp \mathbf{Y}_i \mid \mathbf{Z}_i$. Therefore, $\{p_{\varepsilon}(\mathbf{v}) : \varepsilon \in (- \delta, \delta)\}$ is also a parametric submodel of $\mathscr{P}_i$. Following the proof of Lemma~\ref{lm:single_dag_model_orthocomp}, we have $s \in \mathcal{T}_i$, which is the tangent space of $\mathscr{P}_i$. This is true for all $i=1, \ldots, K$, so $s \in \overline{\mathcal{T}_1 \cap \ldots \cap \mathcal{T}_K}$ and therefore $\mathcal{T} \subseteq \overline{\mathcal{T}_1 \cap \ldots \cap \mathcal{T}_K}$.

For the other direction, pick any $f \in \mathcal{T}_1 \cap \ldots \cap \mathcal{T}_K$. Define a parametric model by $p_{\varepsilon} (\mathbf{v}) = p_{0} (\mathbf{v})(1 + \varepsilon f(\mathbf{v}))$. Since $\mathbb{E}[f] = 0$, $p_{\varepsilon}$ sums to 1. Moreover, the parameter $\varepsilon \in (- \delta, \delta)$ and $\delta$ is small enough so that $p_{\varepsilon} (\mathbf{v}) \geq 0$ at all $\mathbf{v}$. Then $p_{\varepsilon}$ is a valid probability distribution. This parametric model contains $p_0$ at $\varepsilon=0$, and the score function at $p_0$ is $\frac{\partial}{\partial \varepsilon} \log p_{\varepsilon} (\mathbf{v})|_{\varepsilon=0} = f$. What left to show is that all $K$ constraints hold for all $p_{\varepsilon}(\mathbf{v})$. Indeed, for each $i$, $f \in \mathcal{T}_i$. Following the proof of Lemma~\ref{lm:single_dag_model_orthocomp}, the $i$-th constraint $\mathbf{X}_i \Perp \mathbf{Y}_i \mid \mathbf{Z}_i$ must hold in $p_{\varepsilon}(\mathbf{v})$. Therefore, $\mathcal{T}_1 \cap \ldots \cap \mathcal{T}_K \subseteq \mathcal{T}$. By definition, $\mathcal{T}$ is closed, so $\overline{\mathcal{T}_1 \cap \ldots \cap \mathcal{T}_K} \subseteq \mathcal{T}$.

We have shown that $\mathcal{T} = \overline{\mathcal{T}_1 \cap \ldots \cap \mathcal{T}_K}$. Since $\mathcal{T}_1, \ldots, \mathcal{T}_K$ are closed subspaces, $\mathcal{T} = \mathcal{T}_1 \cap \ldots \cap \mathcal{T}_K$.

Next, we will show $\mathcal{T}^\perp = \mathcal{T}_1^\perp \oplus \ldots \oplus \mathcal{T}_K^\perp$.

We have $\mathcal{T} = \mathcal{T}_1 \cap \ldots \cap \mathcal{T}_K$ is orthogonal to all $\{\mathcal{T}_i^\perp\}_{i=1}^K$, so $\mathcal{T}$ is orthogonal to the span $\overline{\mathcal{T}_1^\perp \oplus \ldots \oplus \mathcal{T}_K^\perp}$. Since $\mathcal{T}_1, \ldots, \mathcal{T}_K$ are closed, this span is exactly the direct sum $\mathcal{T}_1^\perp \oplus \ldots \oplus \mathcal{T}_K^\perp$. Therefore, $\mathcal{T}_1^\perp \oplus \ldots \oplus \mathcal{T}_K^\perp \subseteq \mathcal{T}^\perp$.

For the other direction, every $f \in \big( \mathcal{T}_1^\perp \oplus \ldots \oplus \mathcal{T}_K^\perp \big)^\perp$ must be orthogonal to all $\{\mathcal{T}_i^\perp\}_{i=1}^K$. For each $i$, the fact that $f$ is orthogonal to $\mathcal{T}_i^\perp$ implies $f \in \mathcal{T}_i$. Therefore, $f \in \mathcal{T}_1 \cap \ldots \cap \mathcal{T}_K = \mathcal{T}$. Hence, $\big( \mathcal{T}_1^\perp \oplus \ldots \oplus \mathcal{T}_K^\perp \big)^\perp \subseteq \mathcal{T}$, and $\mathcal{T}_1^\perp \oplus \ldots \oplus \mathcal{T}_K^\perp \supseteq \mathcal{T}^\perp$.

Finally, by Lemma~\ref{lm:single_dag_model_orthocomp}, the $i$-th subspace $\mathcal{T}_i^\perp = \{\Pi(h_i \mid \mathcal{T}_i^\perp): \forall h \in \mathcal{H}\}$. Then by definition of direct sum, the tangent space $\mathcal{T}$ is the subspace of all functions of the form $\sum_{i = 1}^K \Pi(h_i \mid \mathcal{T}_i^\perp)$ for arbitrary $h_1, \ldots, h_K \in \mathcal{H}$.

\end{proof}

\begin{example}
The Bell scenario model associated with Figure~\ref{fig:ci_models}d has two constraints: $A \Perp B, D$ and $B \Perp A, C$. The operators correspond to these constraints are
\begin{equation}
\begin{aligned}
    \Pi(\cdot \mid \mathcal{T}_1^{\perp}) & = \mathbb{E}[\cdot | a, b, d] - \mathbb{E}[\cdot | a] - \mathbb{E}[\cdot | b, d]
    \\
    \Pi(\cdot \mid \mathcal{T}_2^{\perp}) & = \mathbb{E}[\cdot | b, a, c] - \mathbb{E}[\cdot | b] - \mathbb{E}[\cdot | a, c].
\end{aligned}
\end{equation}
Let $\Gamma(\cdot) = \Pi(\cdot \mid \mathcal{T}_1^{\perp}) + \Pi(\cdot \mid \mathcal{T}_2^{\perp})$.
If $\Gamma$ is a projection operator, then $h - \Gamma(h)$ and $\Gamma(g)$ must be orthogonal, for any $h, g \in \mathcal{H}$. We will compute their inner product. It is not evident that their inner product vanishes.
\begin{equation*}
    \EE \bigg[ (h - \Gamma(h)) \Gamma(g)\bigg] = \EE \bigg[ h \Gamma(g)\bigg] - \EE \bigg[ \Gamma(h) \Gamma(g)\bigg]
\end{equation*}
For the first term
\begin{equation*}
\begin{aligned}
    \mathrm{term 1} & = \EE \bigg[ h \times \bigg( \mathbb{E}[g | A, B, D] - \mathbb{E}[g | A] - \mathbb{E}[g | B, D] + \mathbb{E}[g | B, A, C] - \mathbb{E}[g | B] - \mathbb{E}[g | A, C] \bigg) \bigg]
    \\
    & = \EE \big[ \mathbb{E}[h | A, B, D] \cdot \mathbb{E}[g | A, B, D]\big] + \EE \big[ \mathbb{E}[h | A] \cdot \mathbb{E}[g | A] \big] - \EE \big[ \mathbb{E}[h | B, D] \cdot \mathbb{E}[g | B, D] \big]
    \\
    & + \EE \big[ \mathbb{E}[h | B, A, C] \cdot \mathbb{E}[g | B, A, C] \big] - \EE \big[ \mathbb{E}[h | B] \cdot \mathbb{E}[g | B] \big]- \EE \big[ \mathbb{E}[h | A, C] \cdot \mathbb{E}[g | A, C] \big]
\end{aligned}
\end{equation*}
For the second term
\begin{equation*}
\begin{aligned}
    & \mathrm{term 2} =  \EE \bigg[ \bigg(\mathbb{E}[h | A, B, D] - \mathbb{E}[h | A] - \mathbb{E}[h | B, D] + \mathbb{E}[h | B, A, C] - \mathbb{E}[h | B] - \mathbb{E}[h | A, C] \bigg)
    \\
    & \quad \quad \quad \quad \times \bigg( \mathbb{E}[g | A, B, D] - \mathbb{E}[g | A] - \mathbb{E}[g | B, D] + \mathbb{E}[g | B, A, C] - \mathbb{E}[g | B] - \mathbb{E}[g | A, C] \bigg) \bigg]
    \\
    & = \EE \big[ \mathbb{E}[h | A, B, D] \cdot \mathbb{E}[g | A, B, D]\big] + \EE \big[ \mathbb{E}[h | A] \cdot \mathbb{E}[g | A] \big] - \EE \big[ \mathbb{E}[h | B, D] \cdot \mathbb{E}[g | B, D] \big]
    & & (\mathrm{term 2a})
    \\
    & \quad \quad \quad \quad + \EE \bigg[ \mathbb{E}[h | A, B, D] \bigg( \mathbb{E}[g | B, A, C] - \mathbb{E}[g | B] - \mathbb{E}[g | A, C] \bigg) \bigg]
    & & (\mathrm{term 2b})
    \\
    & - \EE \bigg[ \mathbb{E}[h | A] \bigg( \mathbb{E}[g | A, B, D] - \mathbb{E}[g | A] - \mathbb{E}[g | B, D] \bigg) \bigg]
    - \EE \bigg[ \mathbb{E}[h | A] \bigg( \mathbb{E}[g | B, A, C] - \mathbb{E}[g | B] - \mathbb{E}[g | A, C] \bigg) \bigg]
    & & (\mathrm{term 2c})
    \\
    & - \EE \bigg[ \mathbb{E}[h | B, D] \bigg( \mathbb{E}[g | A, B, D] - \mathbb{E}[g | A] - \mathbb{E}[g | B, D] \bigg) \bigg]
    - \EE \bigg[ \mathbb{E}[h | B, D] \bigg( \mathbb{E}[g | B, A, C] - \mathbb{E}[g | B] - \mathbb{E}[g | A, C] \bigg) \bigg]
    & & (\mathrm{term 2d})
    \\
    & + \EE \big[ \mathbb{E}[h | B, A, C] \cdot \mathbb{E}[g | B, A, C] \big] - \EE \big[ \mathbb{E}[h | B] \cdot \mathbb{E}[g | B] \big]- \EE \big[ \mathbb{E}[h | A, C] \cdot \mathbb{E}[g | A, C] \big]
    & & (\mathrm{term 2e})
    \\
    & \quad \quad \quad \quad + \EE \bigg[ \mathbb{E}[h | B, A, C] \bigg( \mathbb{E}[g | A, B, D] - \mathbb{E}[g | A] - \mathbb{E}[g | B, D] \bigg) \bigg]
    & & (\mathrm{term 2f})
    \\
    & - \EE \bigg[ \mathbb{E}[h | B] \bigg( \mathbb{E}[g | A, B, D] - \mathbb{E}[g | A] - \mathbb{E}[g | B, D] \bigg) \bigg]
    - \EE \bigg[ \mathbb{E}[h | B] \bigg( \mathbb{E}[g | B, A, C] - \mathbb{E}[g | B] - \mathbb{E}[g | A, C] \bigg) \bigg]
    & & (\mathrm{term 2g})
    \\
    & - \EE \bigg[ \mathbb{E}[h | A, C] \bigg( \mathbb{E}[g | A, B, D] - \mathbb{E}[g | A] - \mathbb{E}[g | B, D] \bigg) \bigg]
    - \EE \bigg[ \mathbb{E}[h | A, C] \bigg( \mathbb{E}[g | B, A, C] - \mathbb{E}[g | B] - \mathbb{E}[g | A, C] \bigg) \bigg]
    & & (\mathrm{term 2h})
\end{aligned}
\end{equation*}
Note that $\mathrm{term 1} = \mathrm{term 2a} + \mathrm{term 2e}$. For the others,
\begin{equation*}
\begin{aligned}
    \mathrm{term 2b}
    & = \EE \bigg[ \mathbb{E}[h | A, B, D] \mathbb{E}[g | B, A, C] \bigg]
    - \EE \bigg[ \mathbb{E}[h | B] \mathbb{E}[g | B] \bigg]
    - \EE \bigg[ \mathbb{E}[h | A, B, D] \mathbb{E}[g | A, C] \bigg]
    \\
    \mathrm{term 2c}
    & = - \cancel{\EE \bigg[ \mathbb{E}[h | A] \mathbb{E}[g | A] \bigg]} + \cancel{\EE \bigg[ \mathbb{E}[h | A] \mathbb{E}[g | A] \bigg]} + \EE \bigg[ \mathbb{E}[h | A] \mathbb{E}[g | B, D] \bigg]
    \\
    & \quad - \cancel{\EE \bigg[ \mathbb{E}[h | A] \mathbb{E}[g | A] \bigg]} + \EE \bigg[ \mathbb{E}[h | A] \mathbb{E}[g | B] \bigg] + \cancel{\EE \bigg[ \mathbb{E}[h | A] \mathbb{E}[g | A] \bigg]}
    \\
    \mathrm{term 2d}
    & = - \cancel{\EE \bigg[ \mathbb{E}[h | B, D] \mathbb{E}[g | B, D] \bigg]} + \EE \bigg[ \mathbb{E}[h | B, D] \mathbb{E}[g | A] \bigg] + \cancel{\EE \bigg[ \mathbb{E}[h | B, D] \mathbb{E}[g | B, D] \bigg]}
    \\
    & \quad \quad - \EE \bigg[ \mathbb{E}[h | B, D] \mathbb{E}[g | B, A, C] \bigg] + \EE \bigg[ \mathbb{E}[h | B] \mathbb{E}[g | B] \bigg] + \EE \bigg[ \mathbb{E}[h | B, D] \mathbb{E}[g | A, C] \bigg]
\end{aligned}
\end{equation*}
and
\begin{equation*}
\begin{aligned}
    \mathrm{term 2f}
    & = \EE \bigg[ \mathbb{E}[h | B, A, C] \mathbb{E}[g | A, B, D] \bigg] - \EE \bigg[ \mathbb{E}[h | A] \mathbb{E}[g | A] \bigg] - \EE \bigg[ \mathbb{E}[h | B, A, C] \mathbb{E}[g | B, D] \bigg]
    \\
    \mathrm{term 2g}
    & = - \cancel{\EE \bigg[ \mathbb{E}[h | B] \mathbb{E}[g | B] \bigg]} + \EE \bigg[ \mathbb{E}[h | B] \mathbb{E}[g | A] \bigg] + \cancel{\EE \bigg[ \mathbb{E}[h | B] \mathbb{E}[g | B] \bigg]}
    \\
    & \quad - \cancel{\EE \bigg[ \mathbb{E}[h | B] \mathbb{E}[g | B] \bigg]}
    + \cancel{\EE \bigg[ \mathbb{E}[h | B] \mathbb{E}[g | B] \bigg]}
    + \EE \bigg[ \mathbb{E}[h | B] \mathbb{E}[g | A, C] \bigg]
    \\
    \mathrm{term 2h}
    & = - \EE \bigg[ \mathbb{E}[h | A, C] \mathbb{E}[g | A, B, D] \bigg] + \EE \bigg[ \mathbb{E}[h | A] \mathbb{E}[g | A] \bigg] + \EE \bigg[ \mathbb{E}[h | A, C] \mathbb{E}[g | B, D] \bigg]
    \\
    & \quad \quad - \cancel{\EE \bigg[ \mathbb{E}[h | A, C] \mathbb{E}[g | A, C] \bigg]} + \EE \bigg[ \mathbb{E}[h | A, C] \mathbb{E}[g | B] \bigg] + \cancel{\EE \bigg[ \mathbb{E}[h | A, C] \mathbb{E}[g | A, C] \bigg]}
\end{aligned}
\end{equation*}
Then
\begin{equation*}
\begin{aligned}
\mathrm{term 2}
& = \mathrm{term 1}
\\
& + \EE \bigg[ \mathbb{E}[h | A, B, D] \mathbb{E}[g | B, A, C] \bigg] - \cancel{\EE \bigg[ \mathbb{E}[h | B] \mathbb{E}[g | B] \bigg]} - \EE \bigg[ \mathbb{E}[h | A, B, D] \mathbb{E}[g | A, C] \bigg]
\\
& + \EE \bigg[ \mathbb{E}[h | A] \mathbb{E}[g | B, D] \bigg] + \EE \bigg[ \mathbb{E}[h | A] \mathbb{E}[g | B] \bigg]
\\
& + \EE \bigg[ \mathbb{E}[h | B, D] \mathbb{E}[g | A] \bigg] - \EE \bigg[ \mathbb{E}[h | B, D] \mathbb{E}[g | B, A, C] \bigg] + \cancel{\EE \bigg[ \mathbb{E}[h | B] \mathbb{E}[g | B] \bigg]} + \EE \bigg[ \mathbb{E}[h | B, D] \mathbb{E}[g | A, C] \bigg]
\\
& + \EE \bigg[ \mathbb{E}[h | B, A, C] \mathbb{E}[g | A, B, D] \bigg] - \cancel{\EE \bigg[ \mathbb{E}[h | A] \mathbb{E}[g | A] \bigg]} - \EE \bigg[ \mathbb{E}[h | B, A, C] \mathbb{E}[g | B, D] \bigg]
\\
& + \EE \bigg[ \mathbb{E}[h | B] \mathbb{E}[g | A] \bigg] + \EE \bigg[ \mathbb{E}[h | B] \mathbb{E}[g | A, C] \bigg]
\\
& - \EE \bigg[ \mathbb{E}[h | A, C] \mathbb{E}[g | A, B, D] \bigg] + \cancel{\EE \bigg[ \mathbb{E}[h | A] \mathbb{E}[g | A] \bigg]} + \EE \bigg[ \mathbb{E}[h | A, C] \mathbb{E}[g | B, D] \bigg] + \EE \bigg[ \mathbb{E}[h | A, C] \mathbb{E}[g | B] \bigg]
\end{aligned}
\end{equation*}
It is not evident that $\mathrm{term 2} = \mathrm{term 1}$.
\end{example}

\section{Proofs in Section \ref{sec:application}}

\begin{thma}{\ref{thm:efficiency_improvement}}
    Consider the Markov model with $K \geq 1$ constraint in Theorem~\ref{thm:general_ci_model_orthocomp}, and let $\varphi_0$ be an influence function of the target parameter $\psi(p)$. For any sequence of integers $i_1, i_2, \ldots, i_{M}$ where each $i_m \in \{1, \ldots, K\}$ for all $m \in \{1, \ldots, M\}$, define
    \begin{equation}
    \begin{aligned}
        \varphi_m := \varphi_{m-1} - \Pi(\varphi_{m-1} \mid \mathcal{T}_{i_m}^\perp), \quad \forall m \in \{1, \ldots, M\}.
    \end{aligned}
    \end{equation}
    In words, $\varphi_m$ is the projection of $\varphi_{m-1}$ onto the tangent space $\mathcal{T}_{i_m}$ of the DAG model corresponding to the $i_m$-th constraint. Then $\varphi_0, \ldots, \varphi_{M}$ is a sequence of influence functions of $\psi(p)$ with non-increasing variance, meaning $E[\varphi_{m}^2] \leq E[\varphi_{m-1}^2]$ for all $m \in \{1, \ldots, M\}$.
\end{thma}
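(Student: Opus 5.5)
The argument is an induction on $m$ that carries two invariants: $\varphi_m$ is an influence function of $\psi(p)$, and $\EE[\varphi_m^2] \leq \EE[\varphi_{m-1}^2]$.

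The base case is the hypothesis that $\varphi_0$ is an influence function.

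For the inductive step, assume $\varphi_{m-1}$ is an influence function. We show that $g_m := \Pi(\varphi_{m-1} \mid \mathcal{T}_{i_m}^\perp)$ lies in $\mathcal{T}^\perp$.
\begin{itemize}
\item By Lemma~\ref{lm:single_dag_model_orthocomp}, applied to the single independence DAG model $\mathscr{P}_{i_m}$, the operator in Equation~\ref{eq:proj_operator_single_dag_model} is the orthogonal projection onto $\mathcal{T}_{i_m}^\perp$. Hence $g_m \in \mathcal{T}_{i_m}^\perp$.
\item By Theorem~\ref{thm:general_ci_model_orthocomp}, take $h_{i_m} = \varphi_{m-1}$ and $h_j = 0$ for $j \neq i_m$. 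Then $g_m$ is an element of $\mathcal{T}^\perp = \mathcal{T}_1^\perp \oplus \cdots \oplus \mathcal{T}_K^\perp$.
\item Equivalently, since $\mathcal{T} = \mathcal{T}_1 \cap \cdots \cap \mathcal{T}_K \subseteq \mathcal{T}_{i_m}$, every element of $\mathcal{T}_{i_m}^\perp$ is orthogonal to $\mathcal{T}$.
\end{itemize}
By Theorem 4.3 of \citet{Tsiatis.2006.SemiparametricTheory}, recalled in the Appendix, the class of influence functions is the linear variety $\varphi_{m-1} \oplus \mathcal{T}^\perp$. Therefore $\varphi_m = \varphi_{m-1} - g_m$ is again an influence function. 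Concretely, for every score $s \in \mathcal{T}$ we have $\langle \varphi_m, s\rangle = \langle \varphi_{m-1}, s\rangle$, so the pathwise differentiability condition (Equation~\ref{eq:path_dev}) is preserved.

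For the variance bound, write $\Pi_m := \Pi(\cdot \mid \mathcal{T}_{i_m}^\perp)$. Since $\Pi_m$ is an orthogonal projection, $I - \Pi_m$ is the orthogonal projection onto $\mathcal{T}_{i_m}$, and $\varphi_m = (I-\Pi_m)\varphi_{m-1}$. This gives the orthogonal decomposition $\varphi_{m-1} = \varphi_m + g_m$ with $\langle \varphi_m, g_m\rangle = 0$. The Pythagorean theorem then yields
\[
\EE[\varphi_{m-1}^2] = \EE[\varphi_m^2] + \EE[g_m^2] \geq \EE[\varphi_m^2].
\]
Because influence functions lie in $\mathcal{H}$ and have mean zero, these second moments are exactly the variances.

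The only step needing care is the claim that Equation~\ref{eq:proj_operator_single_dag_model} is a genuine orthogonal projection, rather than merely an operator whose range is $\mathcal{T}_{i_m}^\perp$. The orthogonality $\langle \varphi_m, g_m\rangle = 0$ depends on this. It was established in the proof of Lemma~\ref{lm:single_dag_model_orthocomp}: there $\Pi(\cdot\mid\mathcal{T}) = I - \Pi(\cdot\mid\mathcal{T}^\perp)$ was shown to satisfy $\langle f-\Pi(f), \Pi(g)\rangle = 0$ for all $f,g$, using the constraint $\mathbf{X}_{i}\Perp\mathbf{Y}_{i}\mid\mathbf{Z}_{i}$ at $p_0$. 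I will simply cite that lemma.

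Combining the two invariants over $m=1,\dots,M$ gives the claimed sequence of influence functions with non-increasing variance.
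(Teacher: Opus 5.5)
Your proof is correct and follows essentially the same argument as the paper (Section~\ref{sec:application:improve}). Both show $\Pi(\varphi_{m-1}\mid\mathcal{T}_{i_m}^\perp)\in\mathcal{T}^\perp$, use Theorem 4.3 of \citet{Tsiatis.2006.SemiparametricTheory} to keep $\varphi_m$ an influence function, and apply the Pythagorean theorem to the orthogonal decomposition $\varphi_{m-1}=\varphi_m+\Pi(\varphi_{m-1}\mid\mathcal{T}_{i_m}^\perp)$, iterating over $m$; your observation that $\mathcal{T}\subseteq\mathcal{T}_{i_m}$ already gives $\mathcal{T}_{i_m}^\perp\subseteq\mathcal{T}^\perp$ is a nice touch, since it needs only the easy inclusion and not the full direct-sum characterization.
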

\begin{proof}
    The proof of this theorem follows from the discussions in Section~\ref{sec:application:improve} of the main paper and Section~\ref{sec:additional_semi_param} of the Appendix.
\end{proof}

The iterative process described in Theorem~\ref{thm:efficiency_improvement} along with the explicit form of the projections $\Pi(\cdot \mid \mathcal{T}_{i_m}^\perp)$ in Theorem~\ref{thm:general_ci_model_orthocomp} suggests a natural procedure for constructing estimators. For example, given an IF of the following form (obtained from one step of the iteration):
\begin{equation}
\begin{aligned}
    \varphi_{m}
    & := \varphi_{m-1} - \Pi(\varphi_{m-1} \mid \mathcal{T}_{i_{m}}^\perp)
    \\
    & = \varphi_{m-1} - E[\varphi_{m-1} \mid \mathbf{x}_{i_{m}}, \mathbf{y}_{i_{m}}, \mathbf{z}_{i_{m}}] + E[\varphi_{m-1} \mid \mathbf{x}_{i_{m}}, \mathbf{z}_{i_{m}}] + E[\varphi_{m-1} \mid \mathbf{y}_{i_{m}}, \mathbf{z}_{i_{m}}] - E[\varphi_{m-1} \mid \mathbf{z}_{i_{m}}],
\end{aligned}
\end{equation}
an estimator may be constructed by adopting additional nuisance models for the four additional expectation terms.
For some specific IFs, we maybe able to simplify $\varphi_{m}$ because some terms from $\varphi_{m-1}$ and the expectations may cancel.

Note that this is only a single step in the iterative process, where multiple operators are applied sequentially. If we want to use an $M$-th improvement of some initial IF $\varphi_0$, the expression will involve nested expectations, and may become challenging to evaluate in practice. However, in some cases, we may derive explicit estimators that demonstrate efficiency improvement, as we demonstrate via the following example.

\begin{example}

Consider the Bell scenario model corresponding to Figure~\ref{fig:ci_models}d, whose constraints are $A \Perp B,D$ and $B \Perp A,C$. Suppose that we want to estimate the target parameter $\psi(p) = \EE[C \mid a_0]$ instead of $\EE[D \mid a_0]$, as $D$ and $A$ are independent in this model. The influence function for this parameter in the saturated model is:
\begin{equation}
    \varphi_0(a, c) = \frac{\mathbb{I}(a = a_0)}{p(a)} (c - E[C \mid a])
\end{equation}
Let $\mathscr{P}_1, \mathscr{P}_2$ be the single independence DAG models corresponding to the constraints $A \perp\!\!\!\perp B,D$ and $B \perp\!\!\!\perp A,C$, respectively. According to Theorem~\ref{thm:general_ci_model_orthocomp}
\begin{align*}
    \Pi(\cdot \mid \mathcal{T}_1^{\perp}) & = \mathbb{E}[\cdot | a, b, d] - \mathbb{E}[\cdot | a] - \mathbb{E}[\cdot | b, d]
    \\
    \Pi(\cdot \mid \mathcal{T}_2^{\perp}) & = \mathbb{E}[\cdot | b, a, c] - \mathbb{E}[\cdot | b] - \mathbb{E}[\cdot | a, c].
\end{align*}

We will derive the first and second improvements of $\varphi_0$, namely $\varphi_1$ and $\varphi_2$, respectively, using the iterative method in Theorem~\ref{thm:efficiency_improvement}, and explicitly derive the corresponding estimators.

\textbf{First improvement:} 
\begin{equation}
\begin{aligned}
    \varphi_1 = \varphi_0 - \mathbb{E}[\varphi_0 | a, b, d] + \mathbb{E}[\varphi_0 | a] + \mathbb{E}[\varphi_0 | b, d],
\end{aligned}
\end{equation}
where
\begin{equation}
\begin{aligned}
\EE[\varphi_0 \mid a, b, d] & = \frac{\mathbb{I}(a = a_0)}{p(a)} (\EE[C \mid a, b, d] - \EE[C \mid a])
\\
\EE[\varphi_0 \mid a] & = \frac{\mathbb{I}(a = a_0)}{p(a)} (\EE[C \mid a] - \EE[C \mid a]) = 0
\\
\EE[\varphi_0 \mid b, d] & = \int \EE[\varphi_0 \mid a', b, d] p(a' \mid b, d) da'
\\
& = \int \EE[\varphi_0 \mid a', b, d] p(a') da' & (A \perp\!\!\!\perp B, D)
\\
& = \int \bigg( \frac{\mathbb{I}(a' = a_0)}{p(a')} (\EE[C \mid a', b, d] - \EE[C \mid a']) \bigg) p(a') da'
\\
& = \EE[C \mid a_0, b, d] - \EE[C \mid a_0]
\end{aligned}
\end{equation}
Substituting these expressions, cancelling terms, and noting that $\psi(p) = \EE[C \mid a_0]$, we get
\begin{equation}
\begin{aligned}
    \varphi_1(a,b,c,d)
    & = \frac{\mathbb{I}(a = a_0)}{p(a)} (c - \EE[C \mid a, b, d])
    + \EE[C \mid a_0, b, d] - \psi(p)
\end{aligned}
\end{equation}
Let $\hat{p}(a)$ be an estimate of the distribution $p(a)$, and $\hat{\EE}[C \mid a, b, d]$ be an estimate of the conditional mean. The estimation equation method gives us an estimate $\hat{\psi}_1$ of the target $\psi(p)$ via
\begin{equation}
\begin{aligned}
    \hat{\psi}_1 = \frac{1}{n} \sum_{i=1}^n \left( \frac{\mathbb{I}(A_i = a_0)}{\hat{p}(A_i)} (C_i - \hat{\EE}[C \mid A_i, B_i, D_i]) + \hat{\EE}[C \mid a_0, B_i, D_i] \right).
\end{aligned}
\end{equation}
Note that this estimator is the augmented inverse probability weighted (AIPW) estimator arising in causal inference and missing data problems. For example, $A$ is the treatment, $C$ is the outcome, and $\{B, D\}$ is the observed confounder. Then the Bell scenario is the observed confounder scenario, with randomization on $A$, explaining this ``coincidence''.

\textbf{Second improvement:} 
\begin{equation}
\begin{aligned}
    \varphi_2 = \varphi_1 - E[\varphi_1 \mid b, a, c] + E[\varphi_1 \mid b] + E[\varphi_1 \mid a, c],
\end{aligned}
\end{equation}
where
\begin{equation}
\begin{aligned}
\EE[\varphi_1 \mid b]
& = \int \frac{\mathbb{I}(a' = a_0)}{p(a')} c' \cdot p(a', c' \mid b) da' dc' - \int \frac{\mathbb{I}(a' = a_0)}{p(a')} \bigg( \int \EE[C \mid a', b, d'] p(d' \mid b, a') d d' \bigg) p(a' \mid b) da'
\\
& + \int \EE[C \mid a_0, b, d'] p(d' \mid b) dd' - \psi(p)
\\
& = \EE[C \mid a_0] - \int \EE[C \mid a_0, b, d'] p(d' \mid b, a_0) dd'
\quad \quad \quad (A, C \perp\!\!\!\perp B)
\\
& + \int \EE[C \mid a_0, b, d'] p(d' \mid b) dd' - \psi(p)
\\
& = \int \EE[C \mid a_0, b, d'] p(d' \mid b) dd' - \int \EE[C \mid a_0, b, d'] p(d' \mid b, a_0) dd'
\quad \quad \quad (\EE[C \mid a_0] = \psi(p))
\\
& = \int \EE[C \mid a_0, b, d'] p(d' \mid b, a_0) dd' - \int \EE[C \mid a_0, b, d'] p(d' \mid b, a_0) dd'
\quad \quad \quad (D \Perp A \mid B)
\\
& = 0.
\end{aligned}
\end{equation}
\begin{equation}
\begin{aligned}
\EE[\varphi_1 \mid b, a, c] & = \frac{\mathbb{I}(a = a_0)}{p(a)} \left(c - \int \EE[C \mid a, b, d'] p(d' \mid b, a, c) d d' \right)
\\
& + \int \EE[C \mid a_0, b, d'] p(d' \mid b, a, c) dd' - \psi(p)
\\
\EE[\varphi_1 \mid a, c] & = \frac{\mathbb{I}(a = a_0)}{p(a)} \left(c - \int \EE[C \mid a, b', d'] p(d' \mid b', a, c) p(b') dd' db' \right)
\quad \quad \quad (B \perp\!\!\!\perp A, C)
\\
& + \int \EE[C \mid a_0, b', d'] p(d' \mid b', a, c) p(b') dd' db' - \psi(p)
\quad \quad \quad (B \perp\!\!\!\perp A, C).
\end{aligned}
\end{equation}
Then
\begin{equation}
\begin{aligned}
\varphi_2(a,b,c,d)
& = \frac{\mathbb{I}(a = a_0)}{p(a)} (c - \EE[C \mid a, b, d]) + \EE[C \mid a_0, b, d] - \psi(p)
\\
& + \frac{\mathbb{I}(a = a_0)}{p(a)} \left(\int \EE[C \mid a, b, d'] p(d' \mid b, a, c) dd' - \int \EE[C \mid a, b, d'] p(d' \mid b', a, c) p(b') dd' db' \right)
\\
&  - \int \EE[C \mid a_0, b, d'] p(d' \mid b, a, c) dd' + \int \EE[C \mid a_0, b, d'] p(d' \mid b', a, c) p(b') dd' db'
\end{aligned}
\end{equation}
We posit models for $\hat{p}(a), \hat{\EE}[C \mid a, b, d], \hat{p}(d \mid a, c, b)$ and $\hat{p}(b)$. Importantly, these models must be consistent, i.e., there is a valid set of probability distributions $p(a,b,c,d)$ satisfying the constraints and yield these models. Enforcing consistency is not always obvious. If the models are consistent, then the estimation equation method gives us the estimator $\hat{\psi}_2$ for the target $\psi(p)$
\begin{equation}
\begin{aligned}
\hat{\psi}_2
& = \frac{1}{n} \sum_{i=1}^n \left( \frac{\mathbb{I}(A_i = a_0)}{\hat{p}(A_i)} (C_i - \hat{\EE}[C \mid A_i, B_i, D_i]) + \hat{\EE}[C \mid a_0, B_i, D_i] \right)
\\
& + \frac{1}{n} \sum_{i=1}^n \frac{\mathbb{I}(A_i = a)}{\hat{p}(A_i)} \left( \int \hat{\EE}[C \mid A_i, B_i, d'] \hat{p}(d' \mid B_i, A_i, C_i) dd' - \int \hat{\EE}[C \mid A_i, b', d'] \hat{p}(d' \mid b', A_i, C_i) \hat{p}(b') dd' db' \right)
\\
& + \frac{1}{n} \sum_{i=1}^n \left( - \int \hat{\EE}[C \mid a_0, B_i, d'] \hat{p}(d' \mid B_i, A_i, C_i) dd' + \int \hat{\EE}[C \mid a_0, b', d'] \hat{p}(d' \mid b', A_i, C_i) \hat{p}(b') dd' db' \right).
\end{aligned}
\end{equation}
\end{example}

\end{document}